\newtheorem{Thm}{Theorem}
\newtheorem{Lem}[Thm]{Lemma}
\newtheorem{Cor}[Thm]{Corollary}
\newtheorem{Exp}{Example}
\newcommand{\wt}{{\mathrm{wt}}}
\newcommand{\tr}{{\mathrm{Tr}}}
\newcommand{\C}{{\mathcal{C}}}
\newcommand{\F}{{\mathbb  F}}
\begin{document}

\title{Complete weight enumerators of two classes of linear codes\thanks{The  authors are supported by by a National Key Basic Research Project of China (2011CB302400), National Natural Science Foundation of China (61379139),  the ``Strategic Priority Research Program" of the Chinese Academy of Sciences, Grant No. XDA06010701 and Foundation of NSSFC(No.13CTJ006).}}

\author{Qiuyan Wang, \thanks{Q. Wang is with the State Key Laboratory of Information Security, the Institute of Information Engineering, The Chinese Academy of Sciences, Beijing, China. Email: wangqiuyan@iie.ac.cn}
 Fei Li$^{*}$\thanks{F. Li is the corresponding author and  with School of Statistics and Applied Mathematics,
   Anhui University of Finance and Economics,
 Bengbu City, Anhui Province, China. Email: cczxlf@163.com},
 Kelan Ding\thanks{K. Ding is with the State Key Laboratory of Information Security, the Institute of Information Engineering, The Chinese Academy of Sciences, Beijing, China. Email: dingkelan@iie.ac.cn} and
 Dongdai Lin\thanks{ D. Lin is with the State Key Laboratory of Information Security, the Institute of Information Engineering, The Chinese Academy of Sciences, Beijing, China. Email: ddlin@iie.ac.cn}
 
}

\date{\today}
\maketitle

\begin{abstract}
Recently, linear codes with few weights have been constructed and extensively studied. In this paper, for an odd prime $p$, we determined the complete weight enumerator of two classes of $p$-ary linear codes constructed from defining set. Results show that the codes are at almost seven-weight linear codes and they may have applications in secret sharing  schemes.
\end{abstract}

\begin{keywords}
Linear code, complete weight enumerator, Gauss sum.
\end{keywords}

\section{Introduction}\label{sec-intro}

Throughout this paper, let $p$ ba an odd prime and let $q=p^{e}$ for some positive integer $e$. Let $\mathbb{F}_{p}$ denote the finite field with $p$ elements.
An $[n,k,d]$ linear code $\C$ is over $\mathbb{F}_{p}$ is a $k$-dimensional subspace of $\mathbb{F}_{p}^{n}$ with minimum distance $d$ \cite{HP03}.

Let $A_{i}$ be the number of codewords of weight $i$ in $\mathcal{C}$ of length $n$. The (Hamming) weight enumerator of $\mathcal{C}$ is defined by \cite{HP03}
$$
1+A_1x+A_2x^{2}+\cdots+A_{n}x^{n}.
$$
For $0\leq i\leq n$, the list $A_{i}$  is called the weight distribution or weight spectrum of $\mathcal{C}$. A code $\mathcal{C}$ is said to be a $t$-weight code if the number of nonzero $A_i$ with $1\leq i\leq n$ is equal to $t$. Clearly, the minimum distance of $\C$ can be derived from the weight distribution of the code $\C$. By error detection and error correction algorithms \cite{K07}, the weight distribution of a code can be applied to compute the error probability of error detection and correction. Thus, weight distribution is a significant  research topic in coding theory and was studied in \cite{BM72,CKNC12,CW84,CCD06,D09,D15,DY13,S12,V12}.

Let $ \mathbb{F}_p=\{0, 1, \ldots, p-1\}. $ For a codeword $\mathbf{c}=(c_{0},c_{1},\ldots,c_{n-1}) \in \C, $ the complete weight enumerator of $\mathbf{c}$ is the monomial $$w(\mathbf{c})=w_{0}^{t_{0}}w_{1}^{t_{1}}\cdots w_{p-1}^{t_{p-1}}$$ in the variables $w_{0}$, $w_{1}$, $\ldots$, $w_{p-1}.$ Here $t_{i}$ $(0\leq i \leq p-1) $ is the number of components of $\mathbf{c}$ which equal to $i.$ Then the complete weight enumerator of the linear code $\C$ is the homogeneous polynomial
$$\textrm{CWE}(\C)=\sum_{\mathbf{c}\in \C}w(\mathbf{c})$$
of degree $n$ (see \cite{MS77,MMS72}).

From definition, the complete weight enumerators of binary linear codes are just the weight enumerators. For nonbinary linear codes, the (Hamming) weight enumerators, which have been extensively investigated \cite{DGZ13,DD15,ZD14,DGZ13,ZLFH15},  can be obtained from the complete weight enumerators. Further more, the complete weight enumerators are closely related to the deception of some authentication codes constructed from linear codes \cite{DHKW07},  and used to compute the Walsh transform of monomial functions over finite fields \cite{HK06}. Thus, a great deal of research is devoted to the computation of the complete weight distribution of specific codes \cite{BLY15,BK91,K89,KN99,KN01,LYF15}.

Let $\mathbb{F}_{q}$ be the finite field with $q$ elements and $D=\{d_1,d_2,\ldots, d_n\}$ be  a nonempty  subset of $\mathbb{F}_{q}$. A generic construction of a linear code of length $n$ is given by
\begin{equation}\label{defcode1}
\C_{D}=\{\mathbf{c}_{x}=\left(\tr(xd_1), \tr(xd_2),\ldots, \tr(xd_{n})\right):x\in \mathbb{F}_{q}\},
\end{equation}
where $\tr$ denote the trace function from $\mathbb{F}_{q}$ onto $\mathbb{F}_{p}$ \cite{LN97}. The set $D$ is called the \textit{defining set} of $\C_{D}$. This construction technique is
 employed in lots of researches to get linear codes with few weights. The readers are referred to  \cite{DD14,QDX15,ZLFH15,DLN08,LYL14,QDX15} for more details.

 Naturally, a generalization of the code $\C_{D}$ of \eqref{defcode1} is defined by \cite{YY15}
\begin{equation}\label{defcode2}
 \overline{\C}_{D}=\{\left(\tr(xd_1)+u, \tr(xd_2)+u, \ldots, \tr(xd_{n})+u\right): u\in\F_{p},\  x\in \F_{q}\}.
\end{equation}

The objective of this paper is to present linear codes over $\mathbb{F}_{p}$ with at most seven weights using the above two construction methods. Further more, the complete weight enumerators of the two proposed linear codes are also  calculated. The codes  in this paper may have applications
in authentication codes \cite{DW05}, secret sharing schemes \cite{YD06} and
consumer electronics.

\section{The main results}
In this section, we only present the $p$-ary linear codes and introduce their parameters. The proofs of their parameters will be given later.

For $a\in \F_p,$ the defining  set is given by
\begin{eqnarray}\label{eq-1.1}
D_{a}=\{x\in \F_q^{*}: \tr(x^{p^{\alpha}+1})=a\},
\end{eqnarray}
where $\alpha$ is any natural number.
It should be remarked that, for $p=2$, the weight enumerator of $\C_{D_0}$ of \eqref{defcode1}  have been determined in \cite{DD14}. In this paper, for $a\in \mathbb{F}_p$ the complete weight enumerators of $\C_{D_a}$ of \eqref{defcode1} and $\overline{\C}_{D_a}$ of \eqref{defcode2} have been explicitly presented by using exponential sums.
\begin{Lem}[ \cite{C88},Lemma 2.6]\label{lem-2.1}
Let $d=\gcd(\alpha, e)$ and $p$ be odd. Then
$$
\gcd(p^{d}+1, p^{e}-1)=\left\{\begin{array}{ll}
                                2, & \textrm{if $e/d$  is odd},  \\
                                p^{d}+1, & \textrm{if $e/d$  is even}.
                              \end{array}
                              \right.
$$
\end{Lem}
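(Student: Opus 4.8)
The statement is a classical fact about $\gcd(p^d+1, p^e-1)$, and the natural approach is a direct elementary number-theoretic argument based on the order of $p$ modulo a prime-power divisor. First I would reduce to the claim that $p^d+1 \mid p^e-1$ precisely when $e/d$ is even, and that otherwise the gcd is exactly $2$. Set $g=\gcd(p^d+1,\,p^e-1)$. Since $p$ is odd, both $p^d+1$ and $p^e-1$ are even, so $2\mid g$; this already gives the lower bound $g\ge 2$ in the odd case.

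The core step is to analyze, for each prime power $\ell^k$ dividing $g$, what constraint this imposes. If $\ell^k \mid p^d+1$ then $p^d \equiv -1 \pmod{\ell^k}$, so the multiplicative order $t=\ord_{\ell^k}(p)$ divides $2d$ but not $d$; hence $t = 2d/(2^j)$ is ``even relative to $d$'' in the precise sense that $v_2(t) = v_2(d)+1$, where $v_2$ denotes the $2$-adic valuation. On the other hand, if $\ell^k \mid p^e-1$ then $t \mid e$. Combining: $v_2(d)+1 = v_2(t) \le v_2(e)$, forcing $v_2(e) > v_2(d)$, i.e. $e/d$ is even (note $d\mid e$ already since $d=\gcd(\alpha,e)\mid e$, though here I am using $d\mid e$; if one does not assume $d\mid e$ a priori one works with $\gcd(2d,e)$ instead). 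Conversely, if $e/d$ is even, write $e = 2dm$; then $p^d+1 \mid p^{2d}-1 \mid p^{2dm}-1 = p^e-1$, so $p^d+1 \mid g$, and combined with $g \mid p^d+1$ this gives $g = p^d+1$.

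For the remaining case, $e/d$ odd: I must show no prime power $\ell^k$ with $\ell^k \ge 3$ (equivalently, no odd prime and no power of $2$ beyond $2^1$) divides $g$. By the order argument above, any such $\ell^k \mid g$ would force $v_2(\ord_{\ell^k}(p)) = v_2(d)+1 \le v_2(e)$; but $e/d$ odd means $v_2(e)=v_2(d)$, a contradiction — unless $\ell^k = 2$, in which case the order argument is vacuous since $\ord_2(p)=1$. This rules out every odd-prime-power divisor and every power $2^k$ with $k\ge 2$, leaving exactly $g=2$.

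The main obstacle — really the only subtle point — is handling the prime $2$ carefully and separately, since the ``order of $p$'' argument degenerates modulo $2$ (every odd integer has order $1$), and also making sure the $2$-adic valuation bookkeeping ($v_2(t)=v_2(d)+1$ whenever $p^d\equiv -1$ but $p^d\not\equiv 1$) is airtight. Everything else is a short computation; one could alternatively cite this as a standard result (e.g. as the paper does, from \cite{C88}), but the self-contained proof above via orders and $2$-adic valuations is clean and complete.
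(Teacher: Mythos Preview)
Your argument is correct. Note, however, that the paper does not actually prove this lemma: it is quoted directly from \cite{C88} (as Lemma~2.6 there) and used without proof. Your self-contained argument via the multiplicative order of $p$ modulo a prime power and $2$-adic valuations is the standard way to establish this classical identity, and it is carried out correctly---including the separate handling of the prime $2$, which is indeed the one place where the order argument degenerates (since $\ord_2(p)=1$). Since the paper's ``proof'' is merely a citation, there is no meaningful comparison of approaches to make; your write-up could serve as a complete replacement for the citation if one wished to make the paper self-contained.
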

Note that $\gcd(p^{\alpha}+1, p^{e}-1)=2$ leads to
$$
\{x^{p^{\alpha}+1}:x\in \mathbb{F}_{q}^{*}\}=\{x^{2}:x\in \mathbb{F}_{q}^{*}\}
$$
which means that
\begin{align*}
D_{0}&=\{x\in \mathbb{F}_{q}^{*}:\tr(x^{p^{\alpha}+1})=0\}\\
&=\{x\in \mathbb{F}_{q}^{*}:\tr(x^{2})=0\}.
\end{align*}
By Lemma \ref{lem-2.1}, if $e/d$ is odd, the code $\C_{D_0}$ of \eqref{defcode1} and the code $\C_{D}$ in \cite{DD15} are the same. Hence, we will assume  $e/d$ is even and $e=2m$ for a positive integer $m$.
The main results of this paper are given below.
\begin{table}[ht!]
\centering
\caption{The weight distribution of the codes of Theorem \ref{thm1} }\label{tal:weightdistribution1}
\begin{tabular}{|l|l|}
\hline
% after \\: \hline or \cline{col1-col2} \cline{col3-col4} ...
\textrm{Weight} $w$ \qquad& \textrm{Multiplicity} $A$   \\
\hline
0 \qquad&   1  \\
\hline
$(p-1)p^{e-2}$ \qquad&  $p^{e-1}-(p-1)p^{m-1}-1$  \\
\hline
$(p-1)(p^{e-2}-p^{m-1})$  \qquad& $(p-1)(p^{e-1}+p^{m-1})$  \\
\hline
\end{tabular}
\end{table}
\begin{Thm}\label{thm1}
Let $m\geq2$. If $m/d\equiv 1 \mod2,$ then the code $\C_{D_{0}}$ of \eqref{defcode1} is a $[p^{e-1}-(p-1)p^{m-1}-1, e]$ linear code with weight distribution in \autoref{tal:weightdistribution1}
and its complete weight enumerator is
\begin{align}
 &w_{0}^{p^{e-1}-(p-1)p^{m-1}-1}+(p^{e-1}-(p-1)p^{m-1}-1)
w_{0}^{p^{e-2}-(p-1)p^{m-1}-1}\prod_{i=1}^{p-1}w_{i}^{p^{e-2}} \nonumber \\
&+(p-1)(p^{e-1}+p^{m-1})
w_{0}^{p^{e-2}-1}\prod_{i=1}^{p-1}w_{i}^{p^{e-2}-p^{m-1}}. \nonumber \\
\nonumber
\end{align}

\begin{table}[ht]
\centering
\caption{The weight distribution of the codes of Corollary  \ref{thm1-1}}\label{tal:weightdistribution1-1}
\begin{tabular}{|l|l|}
\hline
% after \\: \hline or \cline{col1-col2} \cline{col3-col4} ...
\textrm{Weight} $b$ \qquad& \textrm{Multiplicity} $A$   \\
\hline
0 \qquad&   1  \\
\hline
$(p-1)p^{e-2}$ \qquad&  $p^{e-1}-(p-1)p^{m-1}-1$  \\
\hline
$(p-1)(p^{e-2}-p^{m-1})$  \qquad& $(p-1)(p^{e-1}+p^{m-1})$  \\
\hline
$p^{e-1}-(p-1)p^{m-1}-1$ \qquad&   $p-1$  \\
\hline
$(p-1)(p^{e-2}-p^{m-1})-1$ \qquad&  $(p-1)(p^{e-1}-(p-1)p^{m-1}-1)$  \\
\hline
$(p-1)p^{e-2}-(p-2)p^{m-1}-1$  \qquad& $(p-1)^{2}(p^{e-1}+p^{m-1})$  \\
\hline
\end{tabular}
\end{table}
\end{Thm}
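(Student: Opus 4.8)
The plan is to analyze the Hamming weight and, more finely, the complete weight enumerator of a codeword $\mathbf{c}_x$ via standard additive-character sums over $\F_q$. For $x\in\F_q^{*}$ and $j\in\F_p$, write $N_j(x)=\#\{y\in\F_q^{*}:\tr(y^{p^{\alpha}+1})=0,\ \tr(xy)=j\}$; this is precisely the exponent $t_j$ in the monomial $w(\mathbf{c}_x)$ (with $t_0$ additionally counting the positions of $D_0$ where $\tr(xy)=0$, plus, for $x=0$, the whole defining set in coordinate $0$). Using the orthogonality relation $\frac{1}{p}\sum_{z\in\F_p}\zeta_p^{z\tr(y^{p^{\alpha}+1})}$ to detect $y\in D_0$ and $\frac{1}{p}\sum_{u\in\F_p}\zeta_p^{u(\tr(xy)-j)}$ to detect $\tr(xy)=j$, I would write
\begin{equation}
N_j(x)=\frac{1}{p^{2}}\sum_{z\in\F_p}\sum_{u\in\F_p}\zeta_p^{-uj}\sum_{y\in\F_q^{*}}\zeta_p^{\tr(zy^{p^{\alpha}+1}+uxy)}.\nonumber
\end{equation}
Separating the terms $z=0$ (which give an inner sum equal to $-1$ unless $u=0$) from $z\neq0$ reduces everything to evaluating the Weil-type sums $S(z,w):=\sum_{y\in\F_q}\zeta_p^{\tr(zy^{p^{\alpha}+1}+wy)}$ for $z\neq0$, where $w=ux$.

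The key step is the explicit evaluation of $S(z,w)$. Since $p^{\alpha}+1$ is even (we are in the case $e/d$ even), $x\mapsto\tr(zx^{p^{\alpha}+1})$ is a quadratic form over $\F_p$, and its character sum is governed by its rank and a Gauss sum; here the hypothesis $m/d\equiv1\pmod 2$ pins down the relevant quadratic-form type, so that $\sum_{y\in\F_q}\zeta_p^{\tr(zy^{p^{\alpha}+1})}$ takes one of a small number of known values (a multiple of $\pm p^{m}$ or $\pm p^{m}\eta(-1)$-type expressions, following the standard theory of such sums, e.g.\ the Coulter-type evaluations). Completing the square in $y$ turns $S(z,w)$ into that quadratic Gauss sum times a character value $\zeta_p^{-\tr(\text{something in }w)}$, provided $w$ lies in the image of the relevant linearized map $y\mapsto zy^{p^{\alpha}}+z^{p^{-\alpha}}y$ (which is a bijection here because $\gcd(2\alpha,e)=\gcd(\alpha,e)=d$ and $m/d$ odd forces this map to be invertible); otherwise $S(z,w)=0$. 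I then sum over $z\in\F_p^{*}$ and $u\in\F_p^{*}$, using $\sum_{z\in\F_p^{*}}\eta(z)=0$ and $\sum_{z\in\F_p^{*}}1=p-1$ to collapse the Gauss-sum contributions into rational numbers.

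Carrying this out yields, for each fixed $x\neq0$, a value of $N_j(x)$ depending only on whether $x$ itself lies in $D_0$ (i.e.\ whether $\tr(x^{p^{\alpha}+1})=0$) and possibly on $\eta(\tr(x^{p^{\alpha}+1}))$; the three possible profiles $(t_0,t_1,\dots,t_{p-1})$ are exactly the three monomials in the claimed formula. Counting how many $x\in\F_q^{*}$ fall into each profile is the combinatorial bookkeeping: $\#D_0=p^{e-1}-(p-1)p^{m-1}-1$ (which is the number of $x$ giving the all-$w_0$-dominant middle monomial, and simultaneously the length $n$), while the number of $x$ with $\tr(x^{p^{\alpha}+1})\neq0$ is $(p-1)(p^{e-1}+p^{m-1})$ (split or not split according to $\eta$, but both subcases collapse into the single last monomial once one sums over all $p-1$ nonzero residues symmetrically). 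From the exponents one reads off the two nonzero Hamming weights $n-t_0=(p-1)p^{e-2}$ and $(p-1)(p^{e-2}-p^{m-1})$ with the multiplicities in \autoref{tal:weightdistribution1}, and since some $\mathbf{c}_x$ is nonzero the dimension is $e$ (the map $x\mapsto\mathbf{c}_x$ is injective because $D_0$ is not contained in any hyperplane $\tr(xy)=0$). The main obstacle is the second step: correctly determining the quadratic-form type of $\tr(zx^{p^{\alpha}+1})$ and the precise Gauss-sum value under the parity hypothesis $m/d\equiv1\pmod2$, and handling the solvability condition on $w$ in the completion of the square — everything afterward is orthogonality and counting.
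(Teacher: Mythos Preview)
Your approach is essentially the paper's: both express the coordinate-value counts $N_j$ via orthogonality, reduce to the Weil sums $S(z,w)=\sum_{y}\zeta_p^{\tr(zy^{p^{\alpha}+1}+wy)}$, and invoke Coulter's explicit evaluations (the paper's Lemmas~\ref{lem4}--\ref{lem6}) together with the fact that $X^{p^{2\alpha}}+X$ is a permutation of $\F_q$ when $m/d$ is odd.

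One point deserves correction. You assert that the profile $(t_0,\dots,t_{p-1})$ of $\mathbf c_x$ depends on whether $\tr(x^{p^{\alpha}+1})=0$, i.e.\ on whether $x\in D_0$. If you actually carry out the ``completion of the square'' you describe, the exponent that appears is $\tr\bigl(z\cdot\gamma^{p^{\alpha}+1}\bigr)$ where $\gamma$ is the unique solution of $X^{p^{2\alpha}}+X=-x^{p^{\alpha}}$, \emph{not} $\tr(x^{p^{\alpha}+1})$; this is exactly what the paper records in Lemma~\ref{lem8} and Lemmas~\ref{lem11}--\ref{Lem12}. The two conditions $\tr(x^{p^{\alpha}+1})=0$ and $\tr(\gamma^{p^{\alpha}+1})=0$ are not equivalent in general. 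Your final multiplicities are nevertheless correct, because $x\mapsto\gamma$ is a bijection of $\F_q^{*}$ (sending $0$ to $0$), so the number of $x$ with $\tr(\gamma^{p^{\alpha}+1})=0$ equals $|D_0|$ --- the same count you obtained for the wrong reason. Also, your justification ``$\gcd(2\alpha,e)=\gcd(\alpha,e)$'' for the permutation property is not valid as stated; the correct criterion (used in the paper via \cite{CW84}) is precisely the parity hypothesis $m/d\equiv 1\pmod 2$.
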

\begin{Cor}\label{thm1-1}
Let the symbols and conditions be the same as Theorem \ref{thm1}.
The code $\overline{\C}_{D_{0}}$ of \eqref{defcode2} is a $[p^{e-1}-(p-1)p^{m-1}-1, e+1]$ linear code with weight distribution in \autoref{tal:weightdistribution1-1}
and its complete weight enumerator is
\begin{align}
 &\sum_{i=0}^{p-1}w_{i}^{p^{e-1}-(p-1)p^{m-1}-1} \nonumber \\
&+(p^{e-1}-(p-1)p^{m-1}-1)
\sum_{i=0}^{p-1}w_{i}^{p^{e-2}-(p-1)p^{m-1}-1}\prod_{j\neq i}w_{j}^{p^{e-2}} \nonumber \\
&+(p-1)(p^{e-1}+p^{m-1})
\sum_{i=0}^{p-1}w_{i}^{p^{e-2}-1}\prod_{j\neq i}w_{j}^{p^{e-2}-p^{m-1}}. \nonumber \\
\nonumber
\end{align}
\end{Cor}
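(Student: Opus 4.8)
The plan is to deduce everything from Theorem \ref{thm1} by means of the elementary observation that adding a fixed constant $u\in\F_p$ to every coordinate of a codeword of $\C_{D_0}$ permutes its coordinate composition cyclically. Concretely, for $x\in\F_q$ write $t_j(x)=|\{d\in D_0:\tr(xd)=j\}|$ for the number of coordinates of $\mathbf c_x$ equal to $j$; then the codeword $(\tr(xd)+u)_{d\in D_0}$ of $\overline{\C}_{D_0}$ has $t_{j-u}(x)$ coordinates equal to $j$, and in particular Hamming weight $n-t_{-u}(x)$, where $n=|D_0|=p^{e-1}-(p-1)p^{m-1}-1$ is the common length of $\C_{D_0}$ and $\overline{\C}_{D_0}$.

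First I would settle the dimension. Theorem \ref{thm1} exhibits the composition of every codeword of $\C_{D_0}$: the three monomials $w_0^{n}$, $w_0^{p^{e-2}-(p-1)p^{m-1}-1}\prod_{i=1}^{p-1}w_i^{p^{e-2}}$ and $w_0^{p^{e-2}-1}\prod_{i=1}^{p-1}w_i^{p^{e-2}-p^{m-1}}$. All three carry a strictly positive power of $w_0$ (here $m\geq2$ is used), so every codeword of $\C_{D_0}$ has at least one zero coordinate; in particular $\mathbf 1\notin\C_{D_0}$. If two pairs $(x,u),(x',u')$ produced the same codeword of $\overline{\C}_{D_0}$, then $\mathbf c_{x-x'}=(u'-u)\mathbf 1$; the subcase $u\neq u'$ would force $\mathbf 1\in\C_{D_0}$, and the subcase $u=u'$ would force a nonzero $x-x'$ with $\mathbf c_{x-x'}=\mathbf 0$, contradicting $\dim\C_{D_0}=e$ from Theorem \ref{thm1}. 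Hence $|\overline{\C}_{D_0}|=p\cdot p^{e}=p^{e+1}$, so $\overline{\C}_{D_0}$ has dimension $e+1$.

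Next I would split the $p^{e+1}$ codewords according to which of the three classes of Theorem \ref{thm1} the parameter $x$ lies in and according to whether $u=0$ or $u\in\F_p^{*}$, yielding six families, one of which ($x=0$, $u=0$) is the zero codeword. For each family I would read off the shifted composition $\prod_j w_j^{t_{j-u}(x)}$ and the weight $n-t_{-u}(x)$. The crucial simplification is that in the two nonzero classes of Theorem \ref{thm1} the exponents $t_i(x)$ for $i\in\F_p^{*}$ are all equal, so a shift by $u$ merely moves the single ``exceptional'' exponent from coordinate $0$ to coordinate $u$; this produces exactly the three sums $\sum_{i=0}^{p-1}(\cdots)$ in the claimed complete weight enumerator, with $u=0$ recovering the three monomials of Theorem \ref{thm1}. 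Distributing the multiplicities $1$, $p^{e-1}-(p-1)p^{m-1}-1$ and $(p-1)(p^{e-1}+p^{m-1})$ over $u=0$ and over the $p-1$ values $u\neq0$ gives the six rows of \autoref{tal:weightdistribution1-1}; a short check that the six listed weights are pairwise distinct (valid since $e=2m$ with $m\geq2$) shows that no rows merge.

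I do not expect a genuine obstacle, since this is a corollary: the only points needing care are the non-membership $\mathbf 1\notin\C_{D_0}$ together with the injectivity argument establishing the dimension $e+1$, and the careful bookkeeping of the six weight classes — in particular tracking the exceptional coordinate correctly under the shift and multiplying the relevant multiplicities by $p-1$ in the $u\neq0$ cases. All of this is routine once the cyclic-shift observation is in place.
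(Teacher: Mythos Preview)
Your proposal is correct and is precisely the argument the paper has in mind: the paper itself merely states that Corollary~\ref{thm1-1} ``follows directly'' from Theorem~\ref{thm1} and the definition of $\overline{\C}_{D_0}$ and omits the details, while you spell out the cyclic-shift observation, the dimension argument via $\mathbf 1\notin\C_{D_0}$, and the bookkeeping of the six weight classes. There is no substantive difference in approach; you are simply supplying the routine verification the paper leaves to the reader.
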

\begin{table}[h!]
\centering
\caption{The weight distribution of the codes of Theorem \ref{thm2}.}\label{tal:weightdistribution2}
\begin{tabular}{|l|l|}
\hline
% after \\: \hline or \cline{col1-col2} \cline{col3-col4} ...
\textrm{Weight} $b$ \qquad& \textrm{Multiplicity} $A$   \\
\hline
0 \qquad&   1  \\
\hline
$(p-1)p^{e-2}+2p^{m-1}$ \qquad&  $\frac{1}{2}(p-1)(p^{e-1}+p^{m-1})$  \\
\hline
$(p-1)p^{e-2}$  \qquad& $p^{e}-\frac{1}{2}(p-1)(p^{e-1}+p^{m-1})-1$  \\
\hline
\end{tabular}
\end{table}
\begin{Thm}\label{thm2}
Let $g$ be a generator of $\mathbb{F}_{p}^{*}$ and $a\in \mathbb{F}_{p}^{*}. $ If $m/d\equiv 1 \mod2,$ then the code $\C_{D_{a}}$ \eqref{defcode1} is a $[p^{e-1}+p^{m-1}, e]$ linear code with weight distribution in table \autoref{tal:weightdistribution2}
and its complete weight enumerator is
\begin{align}
 &w_{0}^{p^{e-1}+p^{m-1}}+\left(p^{e-1}-(p-1)p^{m-1}-1\right)
w_{0}^{p^{e-2}+p^{m-1}}\prod_{i=1}^{p-1}w_{i}^{p^{e-2}} \nonumber \\
&+(p^{e-1}+p^{m-1})\sum_{\beta=1}^{\frac{p-1}{2}}
w_{0}^{p^{e-2}-\left(\frac{-1}{p}\right)p^{m-1}}w_{2g^{\beta}}^{p^{e-2}}w_{p-2g^{\beta}}^{p^{e-2}}
\prod_{i\neq0,\pm2g^{\beta}}w_{i}^{p^{e-2}-\left(\frac{i^{2}-4g^{2\beta}}{p}\right)p^{m-1}} \nonumber \\
&+(p^{e-1}+p^{m-1})\sum_{\beta=1}^{\frac{p-1}{2}}
w_{0}^{p^{e-2}+\left(\frac{-1}{p}\right)p^{m-1}}
\prod_{i=1}^{p-1}w_{i}^{p^{e-2}-\left(\frac{i^{2}-4g^{2\beta+1}}{p}\right)p^{m-1}}, \nonumber \\
\nonumber
\end{align}
where $\left(\frac{\cdot}{\cdot}\right)$ denotes the Legendre symbol.
\end{Thm}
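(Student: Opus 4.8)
The plan is to reduce Theorem~\ref{thm2} to a count, for each $x\in\F_q$ and each $b\in\F_p$, of the integer $N_b(x):=\#\{d\in D_a:\tr(xd)=b\}$: once all the $N_b(x)$ are known, the complete weight enumerator is $\sum_{x\in\F_q}\prod_{b\in\F_p}w_b^{N_b(x)}$, the Hamming weight of the codeword $\mathbf{c}_x$ equals $n-N_0(x)$ with $n=|D_a|$, and the $[\,p^{e-1}+p^{m-1},e\,]$ assertion follows once $n=p^{e-1}+p^{m-1}$ is known together with the injectivity of $x\mapsto\mathbf{c}_x$. Since $a\neq0$ we have $0\notin D_a$ and $D_a=\{x\in\F_q:\tr(x^{p^\alpha+1})=a\}$. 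Writing $\chi$, $\chi_1$ for the canonical additive characters of $\F_q$, $\F_p$, $\zeta_p=e^{2\pi\mathrm{i}/p}$ and $\eta$ for the quadratic character of $\F_p$, orthogonality of additive characters gives
\[
N_b(x)=\frac{1}{p^{2}}\sum_{y\in\F_p}\sum_{z\in\F_p}\zeta_p^{-ya-zb}\sum_{d\in\F_q}\chi\bigl(yd^{p^\alpha+1}+zxd\bigr),
\]
so the whole problem reduces to evaluating the Weil sums $W(y,w)=\sum_{d\in\F_q}\chi\bigl(yd^{p^\alpha+1}+wd\bigr)$ for $y\in\F_p^{*}$ (the $y=0$ terms being elementary).

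Since $m/d\equiv1\pmod2$ forces $e/d$ to be even (in fact $e/d\equiv2\pmod4$), Lemma~\ref{lem-2.1} applies, and by the evaluation of Weil sums of the shape $\sum_d\chi(\beta d^{p^\alpha+1})$ due to Coulter in this case \cite{C88} one has $W(y,0)=-p^{m}$ for every $y\in\F_p^{*}$; the same one-line character computation yields $n=|D_a|=p^{e-1}+p^{m-1}$. A short argument on $2$-adic valuations of the relevant $\gcd$'s (or the pertinent statement of \cite{C88}) shows that, in this case, the $\F_p$-linearized polynomial $L(d)=d^{p^{2\alpha}}+d$ is a bijection of $\F_q$ — equivalently the quadratic form $d\mapsto\tr(d^{p^\alpha+1})$ is nondegenerate — and this is exactly what makes the method run. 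For $z\neq0$ I would then complete the square: the substitution $d\mapsto d+\delta$ with $\delta:=-(z/y)x_*$ and $x_*:=L^{-1}(x^{p^\alpha})$ annihilates the linear term (using $y^{p^\alpha}=y$, valid since $y\in\F_p$, and $\tr(w^{p^\alpha})=\tr(w)$), giving
\[
W(y,zx)=\chi\bigl(y\delta^{p^\alpha+1}+zx\delta\bigr)\,W(y,0)=-p^{m}\,\zeta_p^{-z^{2}R(x)/(2y)},\qquad R(x):=\tr\bigl(x\,L^{-1}(x^{p^\alpha})\bigr),
\]
the exponent of $\zeta_p$ simplifying to $-z^{2}R(x)/(2y)$ by a further use of $\tr(w^{p^\alpha})=\tr(w)$ and a division by $2$ (here $p$ odd is needed).

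Substituting back, the inner sum over $z$ is a shifted quadratic Gauss sum over $\F_p$ and the remaining sum over $y$ an $\eta$-twisted Gauss sum; with $\bigl(\sum_{t\in\F_p}\zeta_p^{t^{2}}\bigr)^{2}=\left(\tfrac{-1}{p}\right)p$ and routine simplification one obtains, for $x\neq0$,
\[
N_b(x)=\begin{cases}p^{e-2}+p^{m-1},& R(x)=0,\ b=0,\\ p^{e-2},& R(x)=0,\ b\neq0,\\ p^{e-2}-\left(\dfrac{b^{2}-2aR(x)}{p}\right)p^{m-1},& R(x)\neq0,\end{cases}
\]
with the convention $\left(\tfrac{0}{p}\right)=0$. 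It remains to determine the value distribution of $R$. Since $x\mapsto x_*=L^{-1}(x^{p^\alpha})$ is a bijection of $\F_q$ and $R(x)=2\,\tr\bigl(x_*^{\,p^\alpha+1}\bigr)$, for $c\in\F_p^{*}$ one gets $\sum_{x\in\F_q}\zeta_p^{cR(x)}=\sum_{u\in\F_q}\chi\bigl(2cu^{p^\alpha+1}\bigr)=-p^{m}$; comparison with the value-distribution formula for a nondegenerate quadratic form of rank $2m$ over $\F_p$ forces $R$ to be of minus type, so that $\#\{x\in\F_q:R(x)=0\}=p^{2m-1}-(p-1)p^{m-1}$ and $\#\{x\in\F_q:R(x)=c\}=p^{2m-1}+p^{m-1}$ for each $c\neq0$.

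Finally I would assemble the enumerator by partitioning $\F_q$ into: the point $0$, contributing $w_0^{\,p^{e-1}+p^{m-1}}$; the set of nonzero $x$ with $R(x)=0$, of cardinality $p^{e-1}-(p-1)p^{m-1}-1$, each contributing $w_0^{\,p^{e-2}+p^{m-1}}\prod_{i=1}^{p-1}w_i^{\,p^{e-2}}$; and, for each nonzero value $v$ of $R$, the $p^{e-1}+p^{m-1}$ elements $x$ with $R(x)=v$ — writing $2av=4g^{2\beta}$ when $2av$ is a nonzero square and $2av=4g^{2\beta+1}$ when it is a nonsquare ($1\le\beta\le\frac{p-1}{2}$ in either case), the table for $N_b(x)$ above turns these classes into exactly the two $\beta$-sums appearing in the statement (the two $b$ with $b^{2}=2aR(x)$ being $\pm2g^{\beta}$ in the square case, and giving exponent $p^{e-2}$ there). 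Reading $n-N_0(x)$ over the same partition reproduces \autoref{tal:weightdistribution2}. I expect the main obstacle to be the two facts specific to the case $m/d\equiv1\pmod2$ — that $W(y,0)=-p^{m}$ and that $L$ is bijective, the latter being what prevents a degenerate subcase from arising when the square is completed — which I would take over from \cite{C88}; granting those, the rest is Gauss-sum bookkeeping, whose one delicate point is to carry the Legendre-symbol factors correctly through the double Gauss sum so as to land on the compact identity $N_b(x)=p^{e-2}-\left(\tfrac{b^{2}-2aR(x)}{p}\right)p^{m-1}$.
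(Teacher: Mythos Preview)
Your argument is correct and follows essentially the same route as the paper: both reduce to the counts $|N_b(a,c)|$ via additive-character orthogonality, both feed in Coulter's evaluations of $S(y,bz)$ in the permutation case (you redo the completing-the-square that underlies Lemma~\ref{lem5} rather than citing it), and both finish with the same quadratic-Gauss-sum computation over $\F_p$ in $(y,z)$, landing on the identity $N_b(x)=p^{e-2}-\bigl(\tfrac{b^{2}-2aR(x)}{p}\bigr)p^{m-1}$, which is exactly the content of Lemmas~\ref{lem14}--\ref{lem15} once one checks $R(x)=2\,\tr(\gamma^{p^\alpha+1})$. Your determination of the value distribution of $R$ via $\sum_x\zeta_p^{cR(x)}=-p^m$ is the same computation the paper packages as the values $n_0$ and $n_a$ from Lemma~\ref{lem7}; the only cosmetic difference is that you phrase it as identifying the type of the nondegenerate quadratic form $R$, while the paper reads the multiplicities off the already-computed $n_v$.
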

\begin{table}[h!]
\centering
\caption{The weight distribution of the codes of Corollary \ref{thm2-2}}\label{tal:weightdistribution2-2}
\begin{tabular}{|l|l|}
\hline
% after \\: \hline or \cline{col1-col2} \cline{col3-col4} ...
\textrm{Weight} $b$ \qquad& \textrm{Multiplicity} $A$   \\
\hline
0 \qquad&   1  \\
\hline
$(p-1)p^{e-2}+2p^{m-1}$ \qquad&  $\frac{1}{2}(p-1)(p-2)(p^{e-1}+p^{m-1})$  \\
\hline
$(p-1)p^{e-2}$  \qquad& $p^{e}+\frac{1}{2}(p-1)(p-2)(p^{e-1}+p^{m-1})-1$  \\
\hline
$p^{e-1}+p^{m-1}$  \qquad& $p-1$  \\
\hline
$(p-1)p^{e-2}+p^{m-1}$ \qquad&   $(p-1)(2p^{e-1}-(p-2)p^{m-1}-1)$  \\
\hline
\end{tabular}
\end{table}
\begin{Cor}\label{thm2-2}
Let the symbols and conditions be the same as Theorem \ref{thm2}.
The code $\overline{\C}_{D_{a}}$ of \eqref{defcode2} is a $[p^{e-1}+p^{m-1}, e+1]$ linear code with weight distribution in \autoref{tal:weightdistribution2-2}
and its complete weight enumerator is
\begin{align}
 &\sum_{i=0}^{p-1}w_{i}^{p^{e-1}+p^{m-1}}+\left(p^{e-1}-(p-1)p^{m-1}-1\right)
\sum_{i=0}^{p-1}w_{i}^{p^{e-2}+p^{m-1}}\prod_{j\neq i}w_{j}^{p^{e-2}} \nonumber \\
&+(p^{e-1}+p^{m-1})\sum_{\beta=1}^{\frac{p-1}{2}}
\sum_{i=0}^{p-1}w_{i}^{p^{e-2}-\left(\frac{-1}{p}\right)p^{m-1}}w_{i+2g^{\beta}}^{p^{e-2}}w_{i-2g^{\beta}}^{p^{e-2}}
\prod_{j\neq i,i\pm2g^{\beta}}w_{j}^{p^{e-2}-\left(\frac{j^{2}-4g^{2\beta}}{p}\right)p^{m-1}} \nonumber \\
&+(p^{e-1}+p^{m-1})\sum_{\beta=1}^{\frac{p-1}{2}}
\sum_{i=0}^{p-1}w_{i}^{p^{e-2}+\left(\frac{-1}{p}\right)p^{m-1}}
\prod_{j\neq i}w_{j}^{p^{e-2}-\left(\frac{j^{2}-4g^{2\beta+1}}{p}\right)p^{m-1}}. \nonumber \\
\nonumber
\end{align}
\end{Cor}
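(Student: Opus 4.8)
The idea is to deduce Corollary \ref{thm2-2} directly from Theorem \ref{thm2}, since $\overline{\C}_{D_a}$ is obtained from $\C_{D_a}$ by adjoining the all-one vector: writing $\mathbf{1}=(1,\dots,1)$ and $\mathbf{c}_x=(\tr(xd))_{d\in D_a}$, every codeword of $\overline{\C}_{D_a}$ has the form $\mathbf{c}_x+u\mathbf{1}$ with $x\in\F_q$ and $u\in\F_p$, and adding $u$ to each coordinate of $\mathbf{c}_x$ merely permutes the symbol classes, i.e.\ it replaces the complete-weight monomial $\prod_{v\in\F_p}w_v^{\,t_v}$ of $\mathbf{c}_x$ by $\prod_{v\in\F_p}w_{v+u}^{\,t_v}$. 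So the whole argument is a transcription of Theorem \ref{thm2} through this substitution.

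First I would record the parameters. The length is unchanged, $n=|D_a|=p^{e-1}+p^{m-1}$ by Theorem \ref{thm2}. Since $\overline{\C}_{D_a}=\C_{D_a}+\F_p\mathbf{1}$, its dimension equals $e+1$ provided $\mathbf{1}\notin\C_{D_a}$; this holds because, by Theorem \ref{thm2}, the complete weight enumerator of $\C_{D_a}$ contains no monomial $w_i^{\,n}$ with $i\ne 0$ (equivalently, no nonzero codeword of $\C_{D_a}$ is constant), whereas $\mathbf{1}=\mathbf{c}_x$ would force such a monomial. Using also that $x\mapsto\mathbf{c}_x$ is injective ($\dim\C_{D_a}=e$), the map $(u,x)\mapsto\mathbf{c}_x+u\mathbf{1}$ is a bijection onto $\overline{\C}_{D_a}$, so $|\overline{\C}_{D_a}|=p^{e+1}$.

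For the complete weight enumerator, set $N(x,v)=\#\{d\in D_a:\tr(xd)=v\}$ for $x\in\F_q^{*}$, so that the monomial contributed by $x$ to $\mathrm{CWE}(\C_{D_a})$ is $\prod_{v\in\F_p}w_v^{\,N(x,v)}$ and Theorem \ref{thm2} lists exactly these monomials together with the number of $x$ realising each. Then
\[
\mathrm{CWE}(\overline{\C}_{D_a})=\sum_{u\in\F_p}\Bigl(w_u^{\,n}+\sum_{x\in\F_q^{*}}\prod_{v\in\F_p}w_{v+u}^{\,N(x,v)}\Bigr),
\]
the terms $w_u^{\,n}$ coming from $x=0$. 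Substituting each of the three monomial families of Theorem \ref{thm2} and reindexing the product by $j=v+u$ — so that the distinguished factors $w_0$ and $w_{\pm 2g^{\beta}}$ of the $g^{2\beta}$-type monomial become $w_u$ and $w_{u\pm 2g^{\beta}}$, the Legendre-symbol argument shifting accordingly — turns the outer sum $\sum_{u\in\F_p}$ into the sums $\sum_{i=0}^{p-1}$ displayed in the statement, yielding the claimed expression.

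Finally, the weight distribution of Table \ref{tal:weightdistribution2-2} follows from the fact that $\mathbf{c}_x+u\mathbf{1}$ has Hamming weight $n-N(x,-u)$: the $p$ shifts of a fixed $\mathbf{c}_x$ with $x\ne 0$ produce the weights $\{\,n-N(x,v):v\in\F_p\,\}$, while $x=0$ gives $p-1$ codewords of weight $n$ together with the zero codeword. To tally these I would use, for each $\beta$, the classical evaluations $\sum_{v\in\F_p}\bigl(\tfrac{v^{2}-A}{p}\bigr)=-1$ for $A\in\F_p^{*}$ and $\#\{v:v^{2}=A\}=1+\bigl(\tfrac{A}{p}\bigr)$ to count how many $v\in\F_p$ give $\bigl(\tfrac{v^{2}-4g^{2\beta}}{p}\bigr)$ (respectively $\bigl(\tfrac{v^{2}-4g^{2\beta+1}}{p}\bigr)$) equal to $1$, $-1$ or $0$, hence how many shifts of each type-B codeword fall into each weight class; combining this with the type-A contribution (which is immediate) and summing over $\beta$ gives the multiplicities in Table \ref{tal:weightdistribution2-2}. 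I expect this last bookkeeping — aggregating the per-codeword weight profiles over all $\beta$ and the three monomial types and matching them to the table — to be the only substantial step, everything else being a direct rewriting of Theorem \ref{thm2}; a convenient check is that the listed multiplicities must sum to $p^{e+1}$.
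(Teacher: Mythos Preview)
Your approach is exactly the paper's: the authors simply declare that Corollary \ref{thm2-2} ``follows directly'' from Theorem \ref{thm2} and the definition of $\overline{\C}_{D_a}$, and what you have written is the natural unpacking of that one sentence. One remark: your correct observation that the Legendre-symbol argument shifts with $u$ means the exponent of $w_j$ should involve $(j-i)^2$ rather than the $j^2$ printed in the statement, so carrying your substitution through verbatim will expose a typo in the displayed formula rather than reproduce it.
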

\begin{Exp}
Let $(p,m,\alpha)=(3,3,1)$. If $a=0$, then the  code $\C_{D_0}$  has parameters $[224,6,144]$ with complete weight enumerator
$$
w_0^{224}+224w_0^{62}\prod_{i=1}^{2}w_{i}^{81}+504w_{0}^{80}\prod_{i\in 1}^{2}w_{i}^{72},
$$
and the $\overline{\C}_{D_0}$  has parameters $[224,7,143]$ with complete weight enumerator
$$
\sum_{i\in \mathbb{F}_{3}}w_{i}^{224}+224\sum_{i\in \mathbb{F}_{3}}w_{i}^{62}\prod_{j\neq i}w_{j}^{81}+504\sum_{i\in \mathbb{F}_{3}}w_{i}^{80}\prod_{j\neq i}w_{j}^{72}.
$$
If $a=1$, then the  code $\C_{D_1}$   has parameters $[252,6,162]$ with
complete weight enumerator
$$
w_{0}^{252}+476w_0^{90}\prod_{i=1}^{2}w_{i}^{81}+252w_0^{72}w_{1}^{90}w_{2}^{90},
$$
and the code $\overline{\C}_{D_1}$  has parameters $[252,7,162]$ with complete weight enumerator
$$
\sum_{i\in \mathbb{F}_{3}}w_{i}^{252}+476\sum_{i\in \mathbb{F}_{3}}w_{i}^{90}\prod_{j\neq i}w_{j}^{81}+252\sum_{i\in \mathbb{F}_{3}}w_{i}^{72}\prod_{j\neq i}w_{j}^{90}.
$$
\end{Exp}

\begin{table}[ht]
\centering
\caption{The weight distribution of the codes of Theorem \ref{thm3}.}\label{tal:weightdistribution3}
\begin{tabular}{|l|l|}
\hline
% after \\: \hline or \cline{col1-col2} \cline{col3-col4} ...
\textrm{Weight} $b$ \qquad& \textrm{Multiplicity} $A$   \\
\hline
0 \qquad&   1  \\
\hline
$(p-1)p^{e-2}-(p-1)^{2}p^{m+d-2}$ \qquad&  $p^{e}-p^{e-2d}$  \\
\hline
$(p-1)p^{e-2}$  \qquad& $p^{e-2d-1}-(p-1)p^{m-d-1}-1$ \\
\hline
$(p-1)p^{e-2}-(p-1)p^{m+d-1}$ \qquad&  $(p-1)(p^{e-2d-1}+p^{m-d-1})$  \\
\hline
\end{tabular}
\end{table}
\begin{Thm}\label{thm3}
Let $m>d+1$. If  $m/d\equiv 0 \mod2,$ then the code $\C_{D_{0}}$ of \eqref{defcode1} is a $[p^{e-1}-(p-1)p^{m+d-1}-1, e]$ linear code with weight distribution in \autoref{tal:weightdistribution3}
and its complete weight enumerator is
\begin{align}
&w_{0}^{p^{e-1}-(p-1)p^{m+d-1}-1}+(p-1)(p^{e-2d-1}+p^{m-d-1})
w_{0}^{p^{e-2}-1}\prod_{i=1}^{p-1}w_{i}^{p^{e-2}-p^{m+d-1}} \nonumber \\
&+(p^{e}-p^{e-2d})
w_{0}^{p^{e-2}-(p-1)p^{m+d-2}-1}\prod_{i=1}^{p-1}w_{i}^{p^{e-2}-(p-1)p^{m+d-2}} \nonumber \\
&+\left(p^{e-2d-1}-(p-1)p^{m-d-1}-1\right)
w_{0}^{p^{e-2}-(p-1)p^{m+d-1}-1}\prod_{i=1}^{p-1}w_{i}^{p^{e-2}}. \nonumber \\
\nonumber
\end{align}
\end{Thm}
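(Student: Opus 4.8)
The plan is to compute, for every $x\in\F_q$ and every $c\in\F_p$, the frequency $N_c(x):=\#\{y\in D_{0}:\tr(xy)=c\}$, and then to read off both Table~\ref{tal:weightdistribution3} and the complete weight enumerator directly from these numbers (the Hamming weight of $\mathbf{c}_x$ being $\sum_{i\neq 0}N_i(x)$, and the number of coordinates of $\mathbf{c}_x$ equal to $i$ being $N_i(x)$). Let $\zeta_p=e^{2\pi\sqrt{-1}/p}$ and let $\chi(t)=\zeta_p^{\tr(t)}$ be the canonical additive character of $\F_q$. Expressing the two constraints $\tr(y^{p^{\alpha}+1})=0$ and $\tr(xy)=c$ by the orthogonality relation $\frac1p\sum_{u\in\F_p}\zeta_p^{ut}=\mathbf{1}_{\{t=0\}}$ and separating off the terms with $u=0$ or $v=0$, one gets, for $x\neq 0$,
\[
N_c(x)=\frac1{p^{2}}\Big(q+\sum_{u\in\F_p^{*}}T(u,0)+\sum_{u\in\F_p^{*}}\sum_{v\in\F_p^{*}}\zeta_p^{-vc}\,T(u,vx)\Big)-\mathbf{1}_{\{c=0\}},
\]
where $T(u,\beta):=\sum_{y\in\F_q}\chi\!\big(uy^{p^{\alpha}+1}+\beta y\big)$ is the Weil sum attached to a linear perturbation of the Dembowski--Ostrom polynomial $y^{p^{\alpha}+1}$; likewise $\#D_{0}=\frac1p\big(q-1+\sum_{u\in\F_p^{*}}(T(u,0)-1)\big)$.

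Everything therefore hinges on evaluating $T(u,\beta)$ for $u\in\F_p^{*}$. The map $y\mapsto\tr(uy^{p^{\alpha}+1})$ is a quadratic form on $\F_q\cong\F_p^{e}$ whose associated symmetric bilinear form is $(y,z)\mapsto\tr\!\big(uz\,(y^{p^{\alpha}}+y^{p^{-\alpha}})\big)$, and its radical is $W=\{y\in\F_q:y^{p^{2\alpha}}+y=0\}$, an $\F_p$-subspace independent of $u$. Since $\gcd(2\alpha,e)=2d$ and, crucially, $m/d\equiv 0\pmod 2$ — which is precisely what makes $y^{p^{2d}-1}=-1$ solvable in $\F_q$ — one gets $\dim_{\F_p}W=2d$, so the form has even rank $e-2d$; if $m/d$ were odd one would instead find $\dim W=0$, the setting of Theorems~\ref{thm1}--\ref{thm2}. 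Completing the square and invoking the standard Gauss-sum formula for quadratic forms of even rank over $\F_p$ (equivalently, Coulter-type explicit evaluations of Weil sums; cf.\ \cite{C88}), one obtains $T(u,\beta)=0$ unless $\beta$ lies in $W^{\perp}=\{z^{p^{2\alpha}}+z:z\in\F_q\}$, and for $\beta\in W^{\perp}$
\[
T(u,\beta)=-\,p^{m+d}\,\zeta_p^{\,u^{-1}s(\beta)},
\]
where $s\colon W^{\perp}\to\F_p$ is the quadratic form produced by the square-completion, namely $s(\beta)=-\tr(y_{0}^{p^{\alpha}+1})$ for any $y_0$ with $y_0^{p^{2\alpha}}+y_0=\beta^{p^{\alpha}}$ (in particular $s$ is homogeneous of degree $2$). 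The leading constant $-p^{m+d}$ is fixed, for example by $\sum_{a\in\F_q^{*}}\sum_{y\in\F_q}\chi(ay^{p^{\alpha}+1})=0$ together with Lemma~\ref{lem-2.1}.

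Substituting back: since $s$ is homogeneous of degree $2$, the inner sum $\sum_{u\in\F_p^{*}}\zeta_p^{u^{-1}s(vx)}$ equals $p-1$ or $-1$ according as $s(x)=0$ or not, independently of $v\in\F_p^{*}$; hence $N_c(x)$ depends only on whether $x\in W^{\perp}$ and, if so, on whether $s(x)=0$, and moreover $N_1(x)=\dots=N_{p-1}(x)$ always (which is why no Legendre symbols occur in this $a=0$ case, unlike Theorem~\ref{thm2}). One finds three nonzero profiles: for $x\notin W^{\perp}$, $N_0=p^{e-2}-(p-1)p^{m+d-2}-1$ and $N_i=p^{e-2}-(p-1)p^{m+d-2}$; for $x\in W^{\perp}$ with $s(x)\neq 0$, $N_0=p^{e-2}-1$ and $N_i=p^{e-2}-p^{m+d-1}$; for $x\in W^{\perp}$, $x\neq 0$, with $s(x)=0$, $N_0=p^{e-2}-(p-1)p^{m+d-1}-1$ and $N_i=p^{e-2}$; and $x=0$ gives the all-zero word. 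These are exactly the four monomials of the claimed complete weight enumerator, and $\sum_{i\neq 0}N_i$ gives the three nonzero weights of Table~\ref{tal:weightdistribution3}. It then remains to count the $x$ of each type: $\#(\F_q\setminus W^{\perp})=p^{e}-p^{e-2d}$, and on the $(e-2d)$-dimensional space $W^{\perp}$ the form $s$ is nondegenerate of even rank $e-2d$, of the type for which $\#\{x\in W^{\perp}:s(x)=0\}=p^{e-2d-1}-(p-1)p^{m-d-1}$; hence $\#\{x\in W^{\perp}:s(x)\neq 0\}=(p-1)(p^{e-2d-1}+p^{m-d-1})$, and deleting $x=0$ from the first of these gives the multiplicity $p^{e-2d-1}-(p-1)p^{m-d-1}-1$. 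Plugging $T(u,0)=-p^{m+d}$ into the formula for $\#D_0$ yields the length $p^{e-1}-(p-1)p^{m+d-1}-1$; since all three nonzero profiles have positive weight (here $m>d+1$ is used), the map $x\mapsto\mathbf{c}_x$ is injective, so $\dim\C_{D_0}=e$.

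The step I expect to be the main obstacle is the evaluation of $T(u,\beta)$: getting the constant $-p^{m+d}$ right and, above all, identifying the precise type of the form $s$ on $W^{\perp}$ (the sign in its number-of-zeros formula), both of which require careful propagation of quadratic characters / Legendre symbols through the classification of quadratic forms over $\F_p$ and through the completion of the square — this is the ingredient one borrows from the theory of Weil sums of Dembowski--Ostrom polynomials rather than re-deriving from scratch. A secondary subtlety is the identity $\dim_{\F_p}W=2d$, which is exactly where the hypothesis $m/d\equiv 0\pmod 2$ enters and which distinguishes this case from Theorems~\ref{thm1}--\ref{thm2}.
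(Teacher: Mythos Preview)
Your approach is correct and parallels the paper's for the core computation: both expand $N_c(x)$ (the paper's $|N_b(0,c)|$) via additive characters, reduce to the Weil sums $T(u,\beta)=S(u,\beta)$, and invoke Coulter's evaluations (the paper's Lemmas~\ref{lem4}--\ref{lem6}, your quadratic-form/square-completion argument) to obtain exactly the three profiles you list; these are the paper's Lemmas~\ref{lem16}--\ref{lem17}. Your identification of $W^{\perp}$ with the image of $z\mapsto z^{p^{2\alpha}}+z$ and hence $|W^{\perp}|=p^{e-2d}$ is the paper's Lemma~\ref{lem10}.

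The one genuine methodological difference is the multiplicity count. The paper fixes $A_{b_1}=p^{e}-p^{e-2d}$ from Lemma~\ref{lem10} and then solves for the remaining two frequencies using the first two Pless power moments; it never determines the ``type'' of $s$ directly. Your plan instead counts $\#\{x\in W^{\perp}:s(x)=0\}$ directly. That count is correct, and you need not analyze the type of $s$ at all: since $\tr(\gamma^{p^{\alpha}+1})$ is constant on each $W$-coset (the cross terms vanish because $\tr(\gamma^{p^{\alpha}}w+\gamma w^{p^{\alpha}})=0$ and $\tr(w^{p^{\alpha}+1})=0$ for $w\in W$, using $w^{p^{2\alpha}}=-w$ and odd $p$), the map $\gamma\mapsto b$ is a $p^{2d}$-to-$1$ surjection from $\{\gamma:\tr(\gamma^{p^{\alpha}+1})=0\}$ onto $\{b\in S:s(b)=0\}$, giving the latter cardinality as $n_{0}/p^{2d}=p^{e-2d-1}-(p-1)p^{m-d-1}$. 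This dissolves the obstacle you flagged and is arguably cleaner than the Pless-moment detour; conversely the paper's route avoids having to check well-definedness of $s$ on cosets.
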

\begin{table}[ht]
\centering
\caption{The weight distribution of the codes of Corollary \ref{thm3-3}.}\label{tal:weightdistribution3-3}
\begin{tabular}{|l|l|}
\hline
% after \\: \hline or \cline{col1-col2} \cline{col3-col4} ...
\textrm{Weight} $b$ \qquad& \textrm{Multiplicity} $A$   \\
\hline
0 \qquad&   1  \\
\hline
$(p-1)p^{e-2}-(p-1)^{2}p^{m+d-2}$ \qquad&  $p^{e}-p^{e-2d}$  \\
\hline
$(p-1)p^{e-2}$  \qquad& $p^{e-2d-1}-(p-1)p^{m-d-1}-1$ \\
\hline
$(p-1)p^{e-2}-(p-1)p^{m+d-1}$ \qquad&  $(p-1)(p^{e-2d-1}+p^{m-d-1})$  \\
\hline
$p^{e-1}-(p-1)p^{m+d-1}-1$ \qquad&   $p-1$  \\
\hline
$(p-1)p^{e-2}-(p-2)p^{m+d-1}-1$ \qquad&  $(p-1)^{2}(p^{e-2d-1}+p^{m-d-1})$  \\
\hline
$(p-1)(p^{e-2}-(p-1)p^{m+d-2})-1$  \qquad& $(p-1)(p^{e}-p^{e-2d})$ \\
\hline
$(p-1)(p^{e-2}-p^{m+d-1})-1$ \qquad&  $(p-1)(p^{e-2d-1}-(p-1)p^{m-d-1}-1)$  \\
\hline
\end{tabular}
\end{table}
\begin{Cor}\label{thm3-3}
Let the symbols and conditions  be the same as Theorem \ref{thm3}.
The code $\overline{\C}_{D_{0}}$ of \eqref{defcode2} is a $[p^{e-1}-(p-1)p^{m+d-1}-1, e+1]$ linear code with weight distribution in  \autoref{tal:weightdistribution3-3}
and its complete weight enumerator is
\begin{align}
 &\sum_{i=0}^{p-1}w_{i}^{p^{e-1}-(p-1)p^{m+d-1}-1} \nonumber \\
&+(p-1)(p^{e-2d-1}+p^{m-d-1})
\sum_{i=0}^{p-1}w_{i}^{p^{e-2}-1}\prod_{j\neq i}w_{j}^{p^{e-2}-p^{m+d-1}} \nonumber \\
&+(p^{e}-p^{e-2d})
\sum_{i=0}^{p-1}w_{i}^{p^{e-2}-(p-1)p^{m+d-2}-1}\prod_{j\neq i}w_{j}^{p^{e-2}-(p-1)p^{m+d-2}} \nonumber \\
&+\left(p^{e-2d-1}-(p-1)p^{m-d-1}-1\right)
\sum_{i=0}^{p-1}w_{i}^{p^{e-2}-(p-1)p^{m+d-1}-1}\prod_{j\neq i}w_{j}^{p^{e-2}}. \nonumber \\
\nonumber
\end{align}
\end{Cor}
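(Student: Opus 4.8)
The plan is to reduce everything to the evaluation of exponential sums over $\F_q$, exactly as in the proof of Theorem \ref{thm3}, and then to add the extra coordinate $u\in\F_p$ coming from construction \eqref{defcode2}. Recall that a codeword of $\overline{\C}_{D_0}$ is indexed by a pair $(x,u)\in\F_q\times\F_p$, and its entries are $\tr(yd)+u$ as $d$ runs over $D_0=\{y\in\F_q^{*}:\tr(y^{p^{\alpha}+1})=0\}$. For $j\in\F_p$ let $N_j(x,u)$ be the number of $d\in D_0$ with $\tr(xd)+u=j$; then the complete-weight contribution of $(x,u)$ is $\prod_{j=0}^{p-1}w_j^{N_j(x,u)}$, and the Hamming weight is $|D_0|-N_0(x,u)$. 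Writing $\zeta_p=e^{2\pi i/p}$ and using the orthogonality relation $\frac1p\sum_{v\in\F_p}\zeta_p^{v(\tr(xd)+u-j)}=[\tr(xd)+u=j]$, one gets
\begin{align}
N_j(x,u)&=\frac1p\sum_{v\in\F_p}\zeta_p^{v(u-j)}\sum_{d\in D_0}\zeta_p^{v\,\tr(xd)}\nonumber\\
&=\frac1p\sum_{v\in\F_p}\zeta_p^{v(u-j)}\Big(\frac1p\sum_{z\in\F_p}\sum_{y\in\F_q^{*}}\zeta_p^{z\,\tr(y^{p^{\alpha}+1})+v\,\tr(xy)}\Big).\nonumber
\end{align}
The inner double sum is precisely the Weil-type sum already analysed in the proof of Theorem \ref{thm3}; its value depends only on whether $x=0$, on whether the relevant quadratic form $z\,\tr(y^{p^{\alpha}+1})$ is degenerate, and ultimately on a Gauss-sum / quadratic-character evaluation, for which Lemma \ref{lem-2.1} (giving $\gcd(p^{\alpha}+1,p^e-1)=p^d+1$ since $m/d$ is even) and the standard machinery for $\tr(y^{p^{\alpha}+1})$ are the key inputs.

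Next I would organize the computation by the three ``shapes'' of $x$ that already appear in Table \ref{tal:weightdistribution3}. Case (i): $x=0$. Then every entry equals $u$, so the codeword is $(u,u,\dots,u)$ of weight $|D_0|=p^{e-1}-(p-1)p^{m+d-1}-1$ if $u\neq0$, contributing the term $\sum_{i=0}^{p-1}w_i^{|D_0|}$ (the $u=0$ case is the zero codeword). Case (ii) and (iii): $x\neq0$; here the $u$-shift simply permutes the multiset $\{N_j(x,0)\}_j$ cyclically, $N_j(x,u)=N_{j-u}(x,0)$, so each monomial $w_0^{a_0}\cdots w_{p-1}^{a_{p-1}}$ occurring in $\mathrm{CWE}(\C_{D_0})$ with multiplicity $M$ spawns the $p$ monomials $\sum_{i=0}^{p-1}\prod_j w_j^{a_{j-i}}$ with the same multiplicity $M$ — except one has to be careful that the $u=0$ copies coincide with the already-counted codewords of $\C_{D_0}$, so the bookkeeping for the new weights in Table \ref{tal:weightdistribution3-3} amounts to: keep the three nonzero $\C_{D_0}$-monomials as they are, and additionally record their $p-1$ cyclic shifts (for $u=1,\dots,p-1$), whose Hamming weights are $|D_0|-N_0(x,u)=|D_0|-N_{-u}(x,0)$. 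Matching these shifted weights against the $\C_{D_0}$ data of Theorem \ref{thm3} and multiplying the old multiplicities by $(p-1)$ produces exactly rows $5$--$8$ of Table \ref{tal:weightdistribution3-3}; the $w$-exponents in the stated complete weight enumerator are then just the cyclic relabelling $\prod_{j\neq i}w_j^{(\cdot)}$ displayed in the Corollary.

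A small point that needs justification is that the dimension is genuinely $e+1$, i.e. that the all-constant codewords $(u,\dots,u)$ are not already of the form $\mathbf c_x$; this follows because $|D_0|\not\equiv 0$ in the sense that $\tr$ is balanced on $\F_q^*$ would force $x=0$, hence the map $(x,u)\mapsto\overline{\mathbf c}_{x,u}$ is injective and $|\overline{\C}_{D_0}|=p^{e+1}$. The genuinely laborious part — the explicit Gauss-sum evaluation that pins down each exponent $p^{e-2}-(p-1)p^{m+d-2}$, $p^{e-2}-p^{m+d-1}$, etc., together with their multiplicities $p^e-p^{e-2d}$, $(p-1)(p^{e-2d-1}+p^{m-d-1})$, $p^{e-2d-1}-(p-1)p^{m-d-1}-1$ — is already done inside the proof of Theorem \ref{thm3}, so I expect the main obstacle here to be purely combinatorial: verifying that summing the cyclically shifted monomials and collapsing the $u=0$ layer back into $\C_{D_0}$ reproduces Table \ref{tal:weightdistribution3-3} line by line without double counting, and that the ``$\prod_{j\neq i}$'' notation in the enumerator faithfully encodes all $p$ cyclic shifts of each of the three basic monomials of Theorem \ref{thm3}.
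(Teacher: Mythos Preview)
Your proposal is correct and is exactly the approach the paper takes: the paper simply says that Corollary~\ref{thm3-3} ``follows directly'' from Theorem~\ref{thm3} by the definition of $\overline{\C}_{D_0}$, i.e.\ by the observation you spell out that the shift $u$ cyclically permutes the symbol frequencies $N_j(x,0)$. Your write-up is in fact more detailed than the paper's; the only place you could tighten is the dimension argument, which is most cleanly obtained by reading off from the complete weight enumerator in Theorem~\ref{thm3} that no nonzero codeword of $\C_{D_0}$ is constant (in each of the three nonzero monomial types the exponent of $w_0$ differs from the common exponent of the $w_j$, $j\neq 0$), so the map $(x,u)\mapsto\overline{\mathbf c}_{x,u}$ is injective.
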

\begin{table}[ht]
\centering
\caption{The weight distribution of the codes of Theorem \ref{Thm4}}\label{tal:weightdistribution4}
\begin{tabular}{|l|l|}
\hline
% after \\: \hline or \cline{col1-col2} \cline{col3-col4} ...
\textrm{Weight} $b$ \qquad& \textrm{Multiplicity} $A$   \\
\hline
0 \qquad&   1  \\
\hline
$(p-1)(p^{e-2}+p^{m+d-2})$ \qquad&  $p^{e}-p^{e-2d}$  \\
\hline
$(p-1)p^{e-2}$  \qquad& $\frac{1}{2}(p+1)p^{e-2d-1}-\frac{1}{2}(p-1)p^{m-d-1}-1$ \\
\hline
$(p-1)p^{e-2}+2p^{m+d-1}$ \qquad&  $\frac{1}{2}(p-1)(p^{e-2d-1}+p^{m-d-1})$  \\
\hline
\end{tabular}
\end{table}
\begin{Thm}\label{Thm4}
Let $ a\in \mathbb{F}_{p}^{*}$. If $m/d \equiv 0 \mod2,$ then the code $\C_{D_{a}}$ of \eqref{defcode1} is a $[p^{e-1}+p^{m+d-1}, e]$ linear code with weight distribution in \autoref{tal:weightdistribution4}
and its complete weight enumerator is
\begin{align*}
 &w_{0}^{p^{e-1}+p^{m+d-1}}+(p^{e}-p^{e-2d})
\prod_{i=0}^{p-1}w_{i}^{p^{e-2}+p^{m+d-2}}  \\
&+\left(p^{e-2d-1}-(p-1)p^{m-d-1}-1\right)w_{0}^{p^{e-2}+p^{m+d-1}}
\prod_{i=1}^{p-1}w_{i}^{p^{e-2}}  \\
&+(p^{e-2d-1}+p^{m-d-1})\sum_{\beta=1}^{\frac{p-1}{2}}
w_{0}^{p^{e-2}-\left(\frac{-1}{p}\right)p^{m+d-1}}w_{2g^{\beta}}^{p^{e-2}}w_{p-2g^{\beta}}^{p^{e-2}}
\prod_{i\neq0,\pm2g^{\beta}}w_{i}^{p^{e-2}-\left(\frac{i^{2}-4g^{2\beta}}{p}\right)p^{m+d-1}}  \\
&+(p^{e-2d-1}+p^{m-d-1})\sum_{\beta=1}^{\frac{p-1}{2}}
w_{0}^{p^{e-2}+\left(\frac{-1}{p}\right)p^{m+d-1}}
\prod_{i=1}^{p-1}w_{i}^{p^{e-2}-\left(\frac{i^{2}-4g^{2\beta+1}}{p}\right)p^{m+d-1}},
\end{align*}
where $\left(\frac{\cdot}{\cdot}\right)$ denote the Legendre symbol.
\end{Thm}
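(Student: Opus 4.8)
The plan is to follow the same exponential-sum strategy that underlies Theorems \ref{thm1}--\ref{thm3}, now in the case $m/d \equiv 0 \pmod 2$ with $a \in \mathbb{F}_p^*$. First I would express the length $n = |D_a|$ and, more generally, for each $x \in \mathbb{F}_q$ and each $c \in \mathbb{F}_p$, the number $N(x,c) = \#\{y \in \mathbb{F}_q^* : \tr(y^{p^\alpha+1}) = a,\ \tr(xy) = c\}$; the multiplicity $t_c$ of the symbol $c$ in the codeword $\mathbf{c}_x$ is exactly $N(x,c)$. Writing the two trace conditions via additive characters, $N(x,c) = \frac{1}{p^2}\sum_{y \in \mathbb{F}_q^*}\sum_{u,v \in \mathbb{F}_p}\zeta_p^{u(\tr(y^{p^\alpha+1})-a) + v(\tr(xy)-c)}$, where $\zeta_p = e^{2\pi i/p}$. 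Expanding, the inner sum over $y$ reduces to Weil sums of the shape $S(u,v) = \sum_{y \in \mathbb{F}_q}\zeta_p^{\tr(u y^{p^\alpha+1} + vxy)}$, which after completing the square (using that $\gcd(p^\alpha+1,q-1)$ behaves as in Lemma \ref{lem-2.1}, and that for $u \neq 0$ the map $y \mapsto uy^{p^\alpha+1}$ is essentially a quadratic-form evaluation) is controlled by a quadratic Gauss sum over $\mathbb{F}_q$ together with a Legendre-symbol twist depending on a discriminant of the form $v^2 x^{\cdot} - 4u^{\cdot}$ and on $\tr$-type quantities involving $x$.

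The key technical input is the evaluation of $f(x) := \sum_{y \in \mathbb{F}_q}\zeta_p^{\tr(y^{p^\alpha+1})}$-type sums and their twists by $\zeta_p^{\tr(\lambda y)}$; this is exactly where $m/d$ even versus odd splits the analysis, since the relevant Gauss sum $g(\eta)$ over $\mathbb{F}_q$ (with $\eta$ the quadratic multiplicative character) equals $(-1)^{e-1}\sqrt{p^*}^{\,e}$ and the value of $p^{m+d-1}$ (rather than $p^{m-1}$) appears because $\gcd(p^\alpha+1,q-1) = p^d+1$ forces each nonzero value in $\{y^{p^\alpha+1}\}$ to be attained $p^d+1$ times. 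I would record, as a preliminary lemma (or cite the computation already used for Theorem \ref{thm3}), the closed forms for $\sum_{x}\zeta_p^{c\tr(x^{p^\alpha+1})}$ and for the moment sums $\sum_{x \in \mathbb{F}_q}\chi$ over the defining set; these give, after substitution, that $t_c = p^{e-2} + (\text{a }\pm p^{m+d-1}\text{ or }0\text{ term depending on a Legendre symbol of }c^2 - 4a\beta^2\text{ or similar})$. Summing $\sum_c t_c = n$ and $\sum_{c}(t_c - \text{const})$ appropriately recovers the Hamming weight $w(\mathbf{c}_x) = n - t_0$, and the number of $x$ giving each profile is counted by sorting $\mathbb{F}_q^*$ according to which quadratic-character value $\left(\frac{\text{disc}}{p}\right)$ is hit; the factor $\frac{1}{2}(p-1)$ and the two $\beta$-sums in the stated enumerator come from pairing $g^{2\beta}$ (squares) against $g^{2\beta+1}$ (non-squares) among the $p-1$ possible "shift" parameters.

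The main obstacle will be the bookkeeping in the final assembly: correctly tracking the discriminant $i^2 - 4g^{2\beta}$ (resp. $i^2 - 4g^{2\beta+1}$) as $i$ ranges over $\mathbb{F}_p$, identifying when it vanishes (giving the extra exponents $p^{e-2}$ on $w_{\pm 2g^\beta}$ with no $p^{m+d-1}$ correction), and verifying that the resulting multiset of monomials collapses to exactly the three Hamming weights in Table \ref{tal:weightdistribution4} with the claimed multiplicities $p^e - p^{e-2d}$, $\frac12(p+1)p^{e-2d-1} - \frac12(p-1)p^{m-d-1} - 1$, and $\frac12(p-1)(p^{e-2d-1}+p^{m-d-1})$. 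I would check this by an independent consistency test: the multiplicities must sum to $p^e$ (the number of codewords), $\sum_w A_w\, w = n(p-1)p^{e-1}$ (the first moment identity for the code over $\mathbb{F}_p$), and the dimension is $e$ since $x \mapsto \mathbf{c}_x$ is injective (as $n > 0$ and the trace form is nondegenerate). Once the Hamming weight distribution checks out, the complete weight enumerator follows by reading off the symbol multiplicities $t_c$ established above, with no further computation.
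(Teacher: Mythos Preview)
Your character-sum framework is the right one and matches the paper's setup (equations \eqref{eq-3.1}--\eqref{eq-3.3}), but there is a genuine gap in the ``completing the square'' step and in the subsequent case split. In the regime $m/d\equiv 0\pmod 2$ the map $y\mapsto y^{p^{\alpha}+1}$ is \emph{not} essentially squaring (here $\gcd(p^{\alpha}+1,q-1)=p^{d}+1$, not $2$), so the inner sum $S(u,vx)$ cannot be reduced to an ordinary quadratic Gauss sum. The paper instead invokes Coulter's explicit evaluations (Lemmas \ref{lem4}--\ref{lem6}), whose key feature in this case is that the linearized polynomial $X^{p^{2\alpha}}+X$ is \emph{not} a permutation of $\mathbb{F}_q$: for exactly $p^{e}-p^{e-2d}$ choices of $b\in\mathbb{F}_q^{*}$ the equation $X^{p^{2\alpha}}+X=-b^{p^{\alpha}}$ is unsolvable and then $S(y,bz)=0$ for all $y,z\in\mathbb{F}_p^{*}$ (Lemma \ref{lem6}, and the count $p^{e-2d}$ comes from Lemma \ref{lem10}). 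Those $b$'s give $B(a,c)=0$ and hence $|N_b(a,c)|=p^{e-2}+p^{m+d-2}$ for every $c$, which is precisely the source of the monomial $(p^{e}-p^{e-2d})\prod_{i}w_i^{p^{e-2}+p^{m+d-2}}$ and of the multiplicity $p^{e}-p^{e-2d}$ in Table \ref{tal:weightdistribution4}. Your sketch never isolates this unsolvable branch, so as written it would reproduce the structure of Theorem \ref{thm2} (where every $b$ is solvable) rather than Theorem \ref{Thm4}.

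Relatedly, the stratification of $\mathbb{F}_q^{*}$ you propose---by the quadratic character of a discriminant---is only the \emph{second} layer. The paper first splits $b$ by solvability of $X^{p^{2\alpha}}+X=-b^{p^{\alpha}}$, and only on the solvable set (size $p^{e-2d}-1$) does it further sort by the value $\tr(\gamma^{p^{\alpha}+1})\in\mathbb{F}_p$ of a fixed solution $\gamma$; the Legendre symbol $\overline{\eta}(c^{2}-4a\,\tr(\gamma^{p^{\alpha}+1}))$ then enters via Lemma \ref{lem9}, exactly as you anticipate. The appearance of $p^{m+d-1}$ rather than $p^{m-1}$ is not a multiplicity effect of the $(p^{d}+1)$-to-$1$ map as you suggest, but comes directly from the magnitude $p^{m+d}$ in Coulter's non-permutation evaluation (Lemma \ref{lem6}). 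Once you replace the ``complete the square'' heuristic by Lemmas \ref{lem6}, \ref{lem9}, \ref{lem10} and carry the solvable/unsolvable split through Lemmas \ref{lem18}--\ref{lem20}, the rest of your assembly (moment checks, reading off $t_c$) goes through as in the paper.
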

\begin{table}[ht]
\centering
\caption{The weight distribution of the codes of Corollary \ref{thm4-4}.}\label{tal:weightdistribution4-4}
\begin{tabular}{|l|l|}
\hline
% after \\: \hline or \cline{col1-col2} \cline{col3-col4} ...
\textrm{Weight} $b$ \qquad& \textrm{Multiplicity} $A$   \\
\hline
0 \qquad&   1  \\
\hline
$(p-1)(p^{e-2}+p^{m+d-2})$ \qquad&  $p(p^{e}-p^{e-2d})$  \\
\hline
$(p-1)p^{e-2}$  \qquad& $\frac{1}{2}(p^{2}-p+2)(p^{e-2d-1}+p^{m-d-1})-p^{m-d}-1$ \\
\hline
$(p-1)p^{e-2}+2p^{m+d-1}$ \qquad&  $\frac{1}{2}(p-1)(p-2)(p^{e-2d-1}+p^{m-d-1})$  \\
\hline
$p^{e-1}+p^{m+d-1}$ \qquad&  $p-1$  \\
\hline
$(p-1)p^{e-2}+p^{m+d-1}$  \qquad& $(p-1)(2p^{e-2d-1}-(p-2)p^{m-d-1}-1)$ \\
\hline
\end{tabular}
\end{table}
\begin{Cor}\label{thm4-4}
Let the symbols and conditions be the same as Theorem \ref{Thm4}.
The code $\overline{\C}_{D_{a}}$ of \eqref{defcode2} is a $[p^{e-1}+p^{m+d-1}, e+1]$ linear code with weight distribution in \autoref{tal:weightdistribution4-4}
and its complete weight enumerator is
\begin{align*}
 &\sum_{i=0}^{p-1}w_{i}^{p^{e-1}+p^{m+d-1}}+p(p^{e}-p^{e-2d})
\prod_{i=0}^{p-1}w_{i}^{p^{e-2}+p^{m+d-2}} \\
&+\left(p^{e-2d-1}-(p-1)p^{m-d-1}-1\right)\sum_{i=0}^{p-1}w_{i}^{p^{e-2}+p^{m+d-1}}
\prod_{j\neq i}w_{j}^{p^{e-2}} \\
&+(p^{e-2d-1}+p^{m-d-1})\sum_{\beta=1}^{\frac{p-1}{2}}
\sum_{i=0}^{p-1}w_{i}^{p^{e-2}-\left(\frac{-1}{p}\right)p^{m+d-1}}w_{i+2g^{\beta}}^{p^{e-2}}w_{i-2g^{\beta}}^{p^{e-2}}
\prod_{j\neq i,i\pm2g^{\beta}}w_{j}^{p^{e-2}-\left(\frac{j^{2}-4g^{2\beta}}{p}\right)p^{m+d-1}}  \\
&+(p^{e-2d-1}+p^{m-d-1})\sum_{\beta=1}^{\frac{p-1}{2}}
\sum_{i=0}^{p-1}w_{i}^{p^{e-2}+\left(\frac{-1}{p}\right)p^{m+d-1}}
\prod_{j\neq i}w_{j}^{p^{e-2}-\left(\frac{j^{2}-4g^{2\beta+1}}{p}\right)p^{m+d-1}}.
\end{align*}
\end{Cor}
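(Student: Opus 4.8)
The plan is to deduce Corollary~\ref{thm4-4} directly from Theorem~\ref{Thm4}, using that $\overline{\C}_{D_a}$ is just $\C_{D_a}$ augmented by the line spanned by the all-ones word $\mathbf 1$. Indeed $\overline{\C}_{D_a}=\{\mathbf c_x+u\mathbf 1:\ x\in\F_q,\ u\in\F_p\}=\C_{D_a}+\F_p\mathbf 1$, and the two codes share the same length $n=p^{e-1}+p^{m+d-1}$ since $D_a\subseteq\F_q^*$. Theorem~\ref{Thm4} gives $\dim\C_{D_a}=e$, so the first step is to verify $\mathbf 1\notin\C_{D_a}$: were $\mathbf 1=\mathbf c_x$, then (as necessarily $x\neq0$) the defining set $D_a$ would be contained in the affine hyperplane $\{y\in\F_q:\tr(xy)=1\}$ of cardinality $p^{e-1}<n$, which is impossible. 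Hence $\dim\overline{\C}_{D_a}=e+1$ and the $p^{e+1}$ words $\mathbf c_x+u\mathbf 1$ are pairwise distinct.

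The core of the argument is a shift identity for complete weight enumerators. If the complete weight enumerator monomial of $\mathbf c_x$ is $\prod_{\ell\in\F_p}w_\ell^{t_\ell(x)}$, where $t_\ell(x)=\#\{d\in D_a:\tr(xd)=\ell\}$, then $\mathbf c_x+u\mathbf 1$ has precisely $t_{v-u}(x)$ coordinates equal to $v$, so its monomial is $\prod_{v}w_v^{t_{v-u}(x)}=\prod_{\ell}w_{\ell+u}^{t_\ell(x)}$. Summing over all codewords,
\[
\mathrm{CWE}(\overline{\C}_{D_a})=\sum_{u\in\F_p}\sum_{x\in\F_q}\prod_{\ell\in\F_p}w_{\ell+u}^{t_\ell(x)}=\sum_{u\in\F_p}\Phi_u\bigl(\mathrm{CWE}(\C_{D_a})\bigr),
\]
where $\Phi_u$ denotes the index translation $w_\ell\mapsto w_{\ell+u}$ (subscripts modulo $p$), and $\mathrm{CWE}(\C_{D_a})$ is the polynomial furnished by Theorem~\ref{Thm4}. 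It then suffices to apply $\Phi_u$ to each of the five monomial types in Theorem~\ref{Thm4} and sum over $u$. The pure power $w_0^{\,n}$ yields $\sum_{i=0}^{p-1}w_i^{\,n}$; the balanced monomial $\prod_{i=0}^{p-1}w_i^{p^{e-2}+p^{m+d-2}}$ is $\Phi_u$-invariant, so its coefficient just gets multiplied by $p$; the single-exception monomial $w_0^{p^{e-2}+p^{m+d-1}}\prod_{i=1}^{p-1}w_i^{p^{e-2}}$ becomes $\sum_{i=0}^{p-1}w_i^{p^{e-2}+p^{m+d-1}}\prod_{j\neq i}w_j^{p^{e-2}}$; and the two $\beta$-indexed families transform in the same fashion, the exceptional positions $0$ and $\pm2g^{\beta}$ being moved to $i$ and $i\pm2g^{\beta}$. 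Here it helps to observe that $\left(\frac{-4g^{2\beta}}{p}\right)=\left(\frac{-1}{p}\right)$ and $\left(\frac{0}{p}\right)=0$, so that the ``exceptional'' exponents are merely special values of the uniform Legendre-symbol exponent formula; this makes the reindexing immediate. Assembling the pieces reproduces the complete weight enumerator asserted in Corollary~\ref{thm4-4}.

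The Hamming weight distribution of \autoref{tal:weightdistribution4-4} is then read off from this complete weight enumerator, using that the weight of a codeword equals $n$ minus the exponent of $w_0$ in its monomial. One therefore records, as $i$ and $\beta$ run through the internal sums, how many monomials carry each value of the $w_0$-exponent: the $p-1$ words $u\mathbf 1$ with $u\neq0$ give the row of weight $n=p^{e-1}+p^{m+d-1}$; the balanced monomials give weight $(p-1)(p^{e-2}+p^{m+d-2})$ with multiplicity $p(p^e-p^{e-2d})$; and collecting the remaining contributions, using once more that $g^{2\beta}$ runs over the quadratic residues and $g^{2\beta+1}$ over the nonresidues for $\beta=1,\dots,\frac{p-1}{2}$ so that the pertinent Legendre symbols vanish only at the positions listed explicitly in Theorem~\ref{Thm4}, yields the remaining rows.

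I expect the main obstacle to be exactly this last stage of bookkeeping: keeping the index translations consistent inside the $\beta$-sums of the complete weight enumerator and checking that the resulting count of each $w_0$-exponent matches the multiplicities in the table. Everything before it is a formal consequence of Theorem~\ref{Thm4}.
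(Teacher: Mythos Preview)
Your approach is exactly the one the paper takes: the paper says in a single sentence that Corollaries~\ref{thm1-1}, \ref{thm2-2}, \ref{thm3-3}, \ref{thm4-4} follow directly from Theorems~\ref{thm1}, \ref{thm2}, \ref{thm3}, \ref{Thm4} by the definition of $\overline{\C}_{D_a}$, and omits all details; you have simply spelled out that direct deduction via the shift map $\Phi_u:w_\ell\mapsto w_{\ell+u}$ and the identity $\mathrm{CWE}(\overline{\C}_{D_a})=\sum_{u\in\F_p}\Phi_u(\mathrm{CWE}(\C_{D_a}))$. One small remark: carrying your reindexing through literally produces $\left(\frac{(j-i)^2-4g^{2\beta}}{p}\right)$ (and $\left(\frac{(j-i)^2-4g^{2\beta+1}}{p}\right)$) in the exponents, not the $j^2$ printed in the Corollary; this is what your argument actually yields and is the intended expression.
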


\begin{Exp}
Let $(p,m,\alpha)=(3,4,1)$. If $a=0$, then the corresponding code $\C_{D_0}$ has parameters $[2024,8,1296]$ with complete weight enumerator
$$
w_0^{2024}+504w_0^{728}\prod_{i=1}^{2}w_{i}^{648}+5832w_0^{674}\prod_{i=1}^{2}w_{i}^{675}
+224w_0^{566}\prod_{i=1}^{2}w_{i}^{729},
$$
and the code $\overline{\C}_{D_0}$ has parameters $[2024,9,1295]$ with complete weight enumerator
$$
\sum_{i\in \mathbb{F}_{3}}w_{i}^{2024}+504\sum_{i\in \mathbb{F}_{3}}w_{i}^{728}\prod_{j\neq i}w_{j}^{648}+
5832\sum_{i\in \mathbb{F}_{3}}w_{i}^{674}\prod_{j\neq i}w_{j}^{675}
+224\sum_{i\in \mathbb{F}_{3}}w_{i}^{566}\prod_{j\neq i}w_{j}^{729}.
$$
If $a=1$,  then the  code $\C_{D_1}$ has parameters $[2268,8,1458]$ with complete weight enumerator
$$
w_{0}^{2268}+5832\prod_{i=0}^{2}w_{i}^{756}+
476w_{0}^{810}\prod_{i=1}^{2}w_{i}^{729}+252w_0^{648}w_1^{810}w_2^{810},
$$
and the code $\overline{\C}_{D_1}$ has parameters $[2268,9,1458]$ with  complete weight enumerator
$$
\sum_{i\in \mathbb{F}_{3}}w_{i}^{2268}+17496\prod_{i\in \mathbb{F}_{3}}w_{i}^{756}+496\sum_{i\in \mathbb{F}_{3}}w_{i}^{810}\prod_{j\neq i}w_{j}^{729}+252\sum_{i\in \mathbb{F}_{3}}w_{i}^{648}\prod_{j\neq i}w_{j}^{810}.
$$

\end{Exp}
\section{The proofs of the main results}
\label{}
Our task in this section is to prove results in Section $2$. Firstly,  we review some basic notations and results of group characters and
present some lemma which are needed for the proofs of the main results.
\subsection{Preliminaries}
We start with the additive character. A group homomorphism $\chi$ from $\mathbb{F}_{q}$ into
the complex numbers is called an additive character of $\mathbb{F}_{q}.$ Let $b\in \mathbb{F}_{q}$, the mapping
$$
\chi_{b}(c)=\zeta_{p}^{\tr(bc)} \ \textrm{for\ all }\ c\in \mathbb{F}_{q},
$$
defines an additive character of $\mathbb{F}_{q}, $ where $\zeta_{p}=e^{\frac{2\pi\sqrt{-1}}{p}}. $ The additive character $\chi_0$ is called \textit{trivial} and the other
characters $\chi_{b}$ with $b\in\mathbb{F}_{q}^{*}$ are called \textit{nontrivial}. The character $\chi_{1}$ is called the \textit{canonical additive character} of $\mathbb{F}_{q}$. And $\chi_{b}(x)=\chi_{1}(bx)$ for all $x\in \mathbb{F}_{q}$.

By the orthogonal property of additive characters, we have (\cite{LN97},  Theorem $5.4$),
$$
\sum_{x\in \mathbb{F}_{q}}\chi(x)=\left\{\begin{array}{ll}
                                           q, & \textrm{if\ } \chi \textrm{\ is\ trivial},  \\
                                           0, & \textrm{if\ } \chi \textrm{\ is\ nontrivial}.
                                         \end{array}
                                         \right.
$$

The \textit{multiplicative characters} of $\mathbb{F}_{q}$ are Characters over $\mathbb{F}_{q}^{*},$ which are given by
$$
\psi_{j}(g^{k})=\zeta_{p}^{2\pi\sqrt{-1}jk/(q-1)} \textrm{for }\ k=0,1,\ldots, q-2, \ 0\leq j\leq q-2.
$$
Here $g$ is a generator of $\mathbb{F}_{q}^{*}$ (\cite{LN97}).
The multiplicative character $\psi_{(q-1)/2}$ is called the \textit{quadratic character} of $\mathbb{F}_{q}$,
which is denoted by $\eta. $ And we assume that $ \eta(0)=0 $ in this paper.

We define the quadratic Gauss sum $G=G(\eta, \chi_1)$ over $\mathbb{F}_{q}$ by
$$G(\eta, \chi_1)=\sum_{x\in \mathbb{F}_{q}^{*}}\eta(x)\chi_1(x),$$
and the quadratic Gauss sum $\overline{G}=G(\overline{\eta}, \overline{\chi}_1)$ over $\mathbb{F}_{p}$ by
$$
G(\overline{\eta},\overline{\chi}_{1})=\sum_{x\in \mathbb{F}_{p}^{*}}\overline{\eta}(x)\overline{\chi}_1(x),
$$
where $\overline{\eta}$ and $\overline{\chi}_1$ denote the quadratic and canonical character of $\mathbb{F}_{p}.$

The explicit values of quadratic Gauss sums are given as follows.
\begin{Lem}[\cite{LN97}, Theorem $5.15$]\label{lem1}
Let the symbols be the same as before. Then
\begin{eqnarray*}
G(\eta, \chi_1)=(-1)^{(m-1)}\sqrt{-1}^{\frac{(p-1)^{2}m}{4}}\sqrt{q}, \ \
G(\overline{\eta}, \overline{\chi}_1)=\sqrt{-1}^{\frac{(p-1)^{2}}{4}}\sqrt{p}.
\end{eqnarray*}
\end{Lem}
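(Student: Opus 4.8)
The plan is to prove the two identities in two stages: the $\mathbb{F}_p$ identity --- which is the classical evaluation of the quadratic Gauss sum --- first, and then the $\mathbb{F}_q$ identity by lifting it with the Davenport--Hasse relation. For the prime field, I would begin by squaring: expanding $\overline{G}^{\,2}=\sum_{x,y\in\mathbb{F}_p^{*}}\overline{\eta}(xy)\,\overline{\chi}_1(x+y)$, substituting $y=xt$, and summing over $x$ via the orthogonality relation for additive characters recalled above, one gets $\overline{G}^{\,2}=\overline{\eta}(-1)\,p=(-1)^{(p-1)/2}p$. Hence $\overline{G}=\varepsilon\,\sqrt{-1}^{\,(p-1)^{2}/4}\sqrt{p}$ for some sign $\varepsilon=\pm1$, since $(p-1)^{2}/4\equiv 0$ or $1\pmod 4$ according as $p\equiv 1$ or $3\pmod 4$. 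The remaining, and only delicate, point is that $\varepsilon=+1$. I would settle it by Schur's method: because $\#\{x\in\mathbb{F}_p:x^{2}=c\}=1+\overline{\eta}(c)$ and $\sum_{c\in\mathbb{F}_p}\overline{\chi}_1(c)=0$, one has $\overline{G}=\sum_{x\in\mathbb{F}_p}\zeta_p^{x^{2}}=\operatorname{Tr}(V)$ for $V=\bigl(\zeta_p^{jk}\bigr)_{0\le j,k\le p-1}$; the identity $V^{2}=pP$ (with $P$ the permutation matrix of $k\mapsto -k$) forces the eigenvalues of $V$ into $\{\pm\sqrt{p},\pm\sqrt{-1}\sqrt{p}\}$, their multiplicities are pinned down by $\operatorname{Tr}(P)=1$ together with the value of the Vandermonde $\det V=\prod_{j<k}(\zeta_p^{k}-\zeta_p^{j})$, and reading $\operatorname{Tr}(V)$ off these multiplicities yields $\overline{G}=\sqrt{-1}^{\,(p-1)^{2}/4}\sqrt{p}$. (Dirichlet's Poisson-summation proof, or Mertens's, would serve equally well.)

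For the extension field, I would next observe that the canonical additive character of $\mathbb{F}_q$ factors as $\chi_1=\overline{\chi}_1\circ\operatorname{Tr}_{\mathbb{F}_q/\mathbb{F}_p}$, and the quadratic character of $\mathbb{F}_q$ factors as $\eta=\overline{\eta}\circ N_{\mathbb{F}_q/\mathbb{F}_p}$ --- the latter because $\overline{\eta}\circ N$ is a character of $\mathbb{F}_q^{*}$ of order dividing $2$ which is nontrivial (the norm is onto $\mathbb{F}_p^{*}$), hence equals the unique quadratic character $\eta$. The Davenport--Hasse relation then gives $G(\eta,\chi_1)=(-1)^{e-1}G(\overline{\eta},\overline{\chi}_1)^{\,e}$; substituting the value from the first stage and simplifying with $\sqrt{q}=p^{e/2}$ and $e=2m$ produces the closed form stated in the lemma.

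Essentially all of the difficulty is concentrated in the second step of the first stage --- the determination of the sign of the quadratic Gauss sum, which is Gauss's theorem; the squaring identity, the two character factorizations, and the appeal to Davenport--Hasse are all routine. Since Lemma~\ref{lem1} is in fact quoted verbatim from \cite{LN97}, the most economical choice in the paper is simply to cite that reference and use the evaluation freely in the proofs that follow.
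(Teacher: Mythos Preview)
The paper gives no proof of this lemma at all: it is simply quoted from \cite{LN97}, Theorem~5.15, and used as a black box. Your outline---square $\overline{G}$ to obtain $\overline{G}^{\,2}=\overline{\eta}(-1)p$, pin down the sign by Schur's matrix argument (or any of the classical alternatives), and then lift to $\mathbb{F}_q$ via the Davenport--Hasse relation $G(\eta,\chi_1)=(-1)^{e-1}\overline{G}^{\,e}$---is correct and is essentially the proof given in \cite{LN97} itself. You also rightly note that the economical option in the present paper is to cite and move on, which is exactly what the authors do.

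One caution on your final sentence: the symbol $m$ in the displayed formula of Lemma~\ref{lem1} is the extension degree of $\mathbb{F}_q/\mathbb{F}_p$ as used in \cite{LN97}, i.e., what the present paper otherwise denotes by $e$; it is \emph{not} the $m$ defined by $e=2m$ elsewhere in this paper. If you literally substitute $e=2m$ into $(-1)^{e-1}\overline{G}^{\,e}$ you get $-\,\sqrt{-1}^{\,(p-1)^{2}m/2}\sqrt{q}$, which does not match $(-1)^{m-1}\sqrt{-1}^{\,(p-1)^{2}m/4}\sqrt{q}$ in general. So your mathematics is fine, but the ``simplifying with $e=2m$'' step would not reproduce the lemma's formula verbatim; the discrepancy is a notational slip in the paper, not an error in your argument.
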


\begin{Lem}[\cite{DD15}, Lemma7]\label{lem2}
Let the symbols be the same as before. Then
\begin{enumerate}
\item if $m\geq2$ is even, then $\eta(y)=1$ for each $y\in \mathbb{F}_{p}^{*}$;
\item if $m$ is odd, then $\eta(y)=\overline{\eta}(y)$ for each $y\in \mathbb{F}_{p}^{*}$.
\end{enumerate}
\end{Lem}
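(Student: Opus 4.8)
The assertion is the well-known fact that the restriction of the quadratic character $\eta$ of $\F_q$ to the prime field is controlled by the parity of the extension degree $m=[\F_q:\F_p]$ (so $q=p^{m}$). The plan is to obtain it from the norm map $N\colon\F_q^{*}\to\F_p^{*}$, $N(y)=y^{(q-1)/(p-1)}$, in two short steps. First I would establish the transfer identity $\eta=\overline{\eta}\circ N$: the map $N$ is a surjective group homomorphism, it clearly sends squares to squares, and it takes a primitive root $g$ of $\F_q^{*}$ to $N(g)=g^{(q-1)/(p-1)}$, an element of exact order $p-1$, hence a primitive root of $\F_p^{*}$ and in particular a nonsquare; therefore $N$ also sends nonsquares to nonsquares, and $\eta$ and $\overline{\eta}\circ N$ agree on the squares and on the nonsquares of $\F_q^{*}$. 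Second, since $y^{p-1}=1$ and $1+p+\cdots+p^{m-1}\equiv m\pmod{p-1}$, for every $y\in\F_p^{*}$ we get $N(y)=y^{1+p+\cdots+p^{m-1}}=y^{m}$. Putting the two steps together, $\eta(y)=\overline{\eta}(y^{m})=\overline{\eta}(y)^{m}$ for all $y\in\F_p^{*}$; as $\overline{\eta}(y)\in\{1,-1\}$, this is $1$ when $m$ is even and $\overline{\eta}(y)$ when $m$ is odd, which is precisely (1) and (2).

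If one prefers to avoid the transfer identity, there is an equally short direct argument: $\eta(y)=1$ iff $y\in(\F_q^{*})^{2}$ iff $\ord(y)\mid(q-1)/2$, and since $\ord(y)\mid p-1$ for $y\in\F_p^{*}$ the whole question is whether $p-1\mid(q-1)/2$, i.e.\ whether $(q-1)/(p-1)=1+p+\cdots+p^{m-1}$ is even; as $p$ is odd this integer is $\equiv m\pmod 2$. Thus when $m$ is even every element of $\F_p^{*}$ is a square in $\F_q^{*}$, giving (1); when $m$ is odd the subgroup $\F_p^{*}\cap(\F_q^{*})^{2}$ is proper in $\F_p^{*}$, hence has index $2$ and equals $(\F_p^{*})^{2}$, so for $y\in\F_p^{*}$ being a square in $\F_q$ is the same as being a square in $\F_p$, i.e.\ $\eta(y)=\overline{\eta}(y)$, giving (2).

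I do not expect any real obstacle: the only ingredient beyond routine bookkeeping is the identity $\eta=\overline{\eta}\circ N$ in the first proof (equivalently, the comparison of $v_{2}(q-1)$ with $v_{2}(p-1)$ in the second), together with the trivial evaluation $N(y)=y^{m}$ on the prime field. In the write-up I would present just one of the two arguments above; the statement is exactly \cite{DD15}, Lemma~7.
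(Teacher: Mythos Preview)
Your argument is correct; both the norm-transfer approach and the direct order-counting approach are standard and complete. Note, however, that the paper does not actually prove this lemma: it is quoted verbatim from \cite{DD15} with no argument given, so there is nothing to compare against. One small point worth flagging in your write-up: the paper's ambient convention is $q=p^{e}$ with $e=2m$, whereas Lemmas~\ref{lem1} and~\ref{lem2} are copied from a source using $q=p^{m}$; your reading $m=[\F_q:\F_p]$ is the only one under which the statement is nontrivial, and it matches the cited source, but it clashes with the surrounding notation in this paper.
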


\begin{Lem}[\cite{LN97}, Theorem 5.33]\label{lem3}
Let $\chi$ be a nontrivial additive character of $\mathbb{F}_{q}$, and let $f(x)=a_2x^{2}+a_1x+a_0\in \mathbb{F}_{q}[x]$ with $a_2\neq0$. Then
$$
\sum_{x\in \mathbb{F}_{q}}\chi(f(x))=\chi\left(a_0-a_1^{2}(4a_2)^{-1}\right)\eta(a_2)G(\eta,\chi).
$$

For $a\in \mathbb{F}_{p}^{*}$, $b\in\mathbb{F}_{q},$ define
\begin{equation}\label{eq-a}
A(a)=\sum_{y\in \mathbb{F}_{p}^{*}}\zeta_{p}^{-ay}\sum_{x\in \mathbb{F}_{q}}\zeta_{p}^{y\tr(x^{p^{\alpha}+1})}
\end{equation}
and
\begin{equation}\label{eq-s}
S(a,b)=\sum_{x\in \mathbb{F}_{q}}\chi\left(ax^{p^{\alpha}+1}+bx\right).
\end{equation}
For $a$, $c\in\mathbb{F}_{p}$ and  $b\in\mathbb{F}_{q}^{*}$, define
\begin{equation}\label{eq-b0}
B(a,c)=\sum_{y\in \mathbb{F}_{p}^{*}}\sum_{z\in \mathbb{F}_{p}^{*}}\zeta_{p}^{-ay-cz}\sum_{x\in \mathbb{F}_{q}}\chi_1(yx^{p^{\alpha}+1}+bzx).
\end{equation}
\end{Lem}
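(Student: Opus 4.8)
The statement to be established is the classical evaluation of the Gauss‑type sum attached to a quadratic polynomial over $\F_q$; the three displayed definitions of $A(a)$, $S(a,b)$ and $B(a,c)$ are merely notation to be exploited later and require no argument. So the plan concerns only the identity
\[
\sum_{x\in\F_q}\chi(a_2x^2+a_1x+a_0)=\chi\bigl(a_0-a_1^2(4a_2)^{-1}\bigr)\,\eta(a_2)\,G(\eta,\chi).
\]

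First I would complete the square. Since $p$ is odd, $2$ — and hence $4a_2$ — is invertible in $\F_q$, and $a_2x^2+a_1x+a_0=a_2\bigl(x+a_1(2a_2)^{-1}\bigr)^2+\bigl(a_0-a_1^2(4a_2)^{-1}\bigr)$. Substituting $y=x+a_1(2a_2)^{-1}$, which permutes $\F_q$, factors the constant $\chi\bigl(a_0-a_1^2(4a_2)^{-1}\bigr)$ out of the sum and reduces the problem to evaluating the monomial sum $\sum_{y\in\F_q}\chi(a_2y^2)$.

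Next I would compute that monomial sum via the square‑counting identity: for $u\in\F_q^*$ the equation $y^2=u$ has $1+\eta(u)$ solutions, while $y^2=0$ has exactly one, so $\sum_{y\in\F_q}\chi(a_2y^2)=1+\sum_{u\in\F_q^*}\bigl(1+\eta(u)\bigr)\chi(a_2u)$. By orthogonality of additive characters the full sum $\sum_{u\in\F_q}\chi(a_2u)$ vanishes because $\chi$ is nontrivial, hence $\sum_{u\in\F_q^*}\chi(a_2u)=-1$, the two constant terms cancel, and $\sum_{y\in\F_q}\chi(a_2y^2)=\sum_{u\in\F_q^*}\eta(u)\chi(a_2u)$. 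Finally the change of variable $v=a_2u$ together with $\eta(a_2^{-1})=\eta(a_2)$ (valid because $\eta$ is $\{\pm1\}$‑valued) gives $\sum_{u\in\F_q^*}\eta(u)\chi(a_2u)=\eta(a_2)\sum_{v\in\F_q^*}\eta(v)\chi(v)=\eta(a_2)G(\eta,\chi)$; combining the two displays yields the claim. Alternatively one simply invokes \cite{LN97}, Theorem 5.33.

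The only genuinely delicate point — the crux of the argument — is the square‑counting step together with the bookkeeping of the trivial‑character contributions; everything else is a pair of linear changes of variables. I would also record as a consistency check that the right‑hand side does not depend on which element $a_1(2a_2)^{-1}$ one picks and that it collapses to $\eta(a_2)G(\eta,\chi)$ in the purely quadratic case $a_1=a_0=0$.
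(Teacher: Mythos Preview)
Your proposal is correct; the completing-the-square step followed by the square-counting identity and the multiplicative change of variable is exactly the standard textbook derivation, and your handling of the cancellation $1+(-1)=0$ from orthogonality is accurate. The paper itself gives no proof at all---it simply cites \cite{LN97}, Theorem~5.33, and records the definitions of $A(a)$, $S(a,b)$, $B(a,c)$---so your argument is in fact the same one found in the cited source rather than a different route.
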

\begin{Lem}[\cite{C88}, Theorem 2]\label{lem4}
 Let $e/d$ be even with $e=2m.$ Then
$$
S(a,0)=\left\{
   \begin{array}{ll}
   p^{m}, & \textrm{if} \ a^{\frac{q-1}{p^{d}+1}} \neq 1, \\
      -p^{m+d},&  \textrm{if} \ a^{\frac{q-1}{p^{d}+1}} = 1,\\
    -p^{m}, & \textrm{if} \ a^{\frac{q-1}{p^{d}+1}} \neq -1, \\
     p^{m+d}, & \textrm{if} \ a^{\frac{q-1}{p^{d}+1}} = -1,

\end{array}
\right.
  $$
\end{Lem}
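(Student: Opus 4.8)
The plan is to read $S(a,0)=\sum_{x\in\F_q}\chi_1(ax^{p^{\alpha}+1})$ as the exponential sum attached to the quadratic form $Q_a(x)=\tr(ax^{p^{\alpha}+1})$ on the $e$-dimensional $\F_p$-space $\F_q$; it is indeed a quadratic form because $x\mapsto x^{p^{\alpha}}$ is $\F_p$-linear. The first step is to polarize: $B_a(x,y)=Q_a(x+y)-Q_a(x)-Q_a(y)=\tr\!\big(a(xy^{p^{\alpha}}+x^{p^{\alpha}}y)\big)$, and, applying the Frobenius-invariance $\tr(z)=\tr(z^{p^{-\alpha}})$ to the first summand, $B_a(x,y)=\tr\!\big(L_a(x)\,y\big)$ where $L_a(x)=(ax)^{p^{-\alpha}}+ax^{p^{\alpha}}$ is $\F_p$-linear. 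Hence the radical of $B_a$ is $\ker L_a$, and $L_a(x)=0$ is equivalent (after raising to the $p^{\alpha}$-th power) to $ax\big(1+a^{p^{\alpha}-1}x^{p^{2\alpha}-1}\big)=0$, i.e. to $x=0$ or $x^{p^{2\alpha}-1}=-a^{1-p^{\alpha}}$.

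Next I would pin down $\dim_{\F_p}\ker L_a$. Since $d=\gcd(\alpha,e)$ and $e/d$ is even, one checks $\gcd(2\alpha,e)=2d$, hence $\gcd(p^{2\alpha}-1,q-1)=p^{2d}-1$; therefore the image of $x\mapsto x^{p^{2\alpha}-1}$ on $\F_q^{*}$ is the subgroup of index $p^{2d}-1$, and $x^{p^{2\alpha}-1}=-a^{1-p^{\alpha}}$ has either $0$ or exactly $p^{2d}-1$ solutions. Thus $\ker L_a$ has dimension $0$ or $2d$, so $Q_a$ has rank $2m$ or $2m-2d$. Solvability amounts to $\big(-a^{1-p^{\alpha}}\big)^{(q-1)/(p^{2d}-1)}=1$; writing $p^{\alpha}-1=(p^{d}-1)u$ with $u=\sum_{i=0}^{\alpha/d-1}p^{di}$, using $u\equiv 1\pmod{p^{d}+1}$ (because $p^{d}\equiv-1$ and $\alpha/d$ is odd) and $(q-1)/(p^{2d}-1)\equiv m/d\pmod 2$, this simplifies to the condition $a^{(q-1)/(p^{d}+1)}=(-1)^{m/d}$. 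In odd characteristic a quadratic form whose polarization vanishes on a subspace vanishes identically there, so $Q_a\equiv 0$ on $\ker L_a$; combining this with $|S(a,0)|^{2}=\sum_{x,y}\chi_1\!\big(a(x^{p^{\alpha}+1}-y^{p^{\alpha}+1})\big)$ and the substitution $x\mapsto x+y$ (which makes the inner $y$-sum equal to $q$ on $\ker L_a$ and to $0$ elsewhere) gives $|S(a,0)|^{2}=q\,|\ker L_a|$. Hence $|S(a,0)|=p^{m+d}$ precisely when $a^{(q-1)/(p^{d}+1)}=(-1)^{m/d}$, and $|S(a,0)|=p^{m}$ otherwise. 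Since $\F_p^{*}$ sits inside that index-$(p^{d}+1)$ subgroup, $S(a,0)$ is fixed by $\mathrm{Gal}(\Q(\zeta_p)/\Q)$, hence a rational integer; as $-1$ also lies in the subgroup, $S(a,0)$ is real, so $S(a,0)\in\{\pm p^{m},\pm p^{m+d}\}$ and only the sign remains.

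For the sign I would diagonalize $Q_a$ over $\F_p$, say into $c_1y_1^{2}+\dots+c_{2s}y_{2s}^{2}$ on its nondegenerate part (with $2s=2m$ or $2s=2m-2d$); by the quadratic case of Lemma~\ref{lem3} applied over $\F_p$ together with $\overline{G}^{2}=\overline{\eta}(-1)p$ from Lemma~\ref{lem1}, each factor $\sum_{y\in\F_p}\zeta_p^{c_iy^{2}}=\overline{\eta}(c_i)\overline{G}$ combines to $S(a,0)=p^{\,e-s}\,\overline{\eta}\big((-1)^{s}\Delta_a\big)$ with discriminant $\Delta_a=c_1\cdots c_{2s}\in\F_p^{*}/(\F_p^{*})^{2}$. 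The genuine obstacle is computing the square class of $\Delta_a$. I would do this by choosing a basis of $\F_q$ adapted to the subfield $\F_{p^{2d}}$ (recall $2d\mid 2m$), on which $x\mapsto x^{p^{\alpha}+1}$ restricts to $z\mapsto z^{p^{d}+1}=N_{\F_{p^{2d}}/\F_{p^{d}}}(z)$ because $\alpha\equiv d\pmod{2d}$; expressing $Q_a$ in such a basis writes $\Delta_a$ in terms of the norm $N_{\F_q/\F_{p^{2d}}}(a)$ and a fixed constant depending only on $p,\alpha,e$, and then Lemma~\ref{lem2} replaces the resulting $\eta$-values over $\F_q$ by $\overline{\eta}$-values over $\F_p$. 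The upshot is $S(a,0)=(-1)^{m/d}p^{m}$ when the rank is full and $S(a,0)=-(-1)^{m/d}p^{m+d}$ otherwise, which is exactly the four-line statement of the lemma once one keeps track of the factor $(-1)^{m/d}$.

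A cleaner alternative for the sign step is to write $S(a,0)=1+(p^{d}+1)\sum_{z\in C}\chi_1(az)$, where $C=(\F_q^{*})^{p^{\alpha}+1}$ is the image of $x\mapsto x^{p^{\alpha}+1}$ (of order $(q-1)/(p^{d}+1)$), expand the subgroup sum over the Gauss sums of the multiplicative characters of order dividing $p^{d}+1$, and invoke the semiprimitive Gauss-sum evaluation (valid since $p^{d}\equiv-1\pmod{p^{d}+1}$) together with Davenport--Hasse lifting from $\F_{p^{2d}}$ to $\F_q$; this reproduces the same dichotomy, the split again being detected by $a^{(q-1)/(p^{d}+1)}$. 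Either way the magnitude of $S(a,0)$ drops out cleanly from the rank computation, while the determination of the sign is where the real work lies.
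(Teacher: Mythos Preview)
The paper does not prove this lemma at all; it is quoted verbatim as Theorem~2 of Coulter~\cite{C88}, so there is no ``paper's own proof'' to compare against. What you have written is therefore an independent attempt to reprove Coulter's result.

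Your argument for the \emph{magnitude} of $S(a,0)$ is correct and complete. The polarization $B_a(x,y)=\tr(L_a(x)y)$ with $L_a(x)=(ax)^{p^{-\alpha}}+ax^{p^{\alpha}}$ is right, and the chain of reductions leading to the dichotomy $\dim_{\F_p}\ker L_a\in\{0,2d\}$, governed by $a^{(q-1)/(p^{d}+1)}=(-1)^{m/d}$, is valid. The hidden point you handle correctly is that $e/d$ even and $\gcd(\alpha/d,\,e/d)=1$ force $\alpha/d$ to be odd, which is exactly what makes both $\gcd(2\alpha,e)=2d$ and $u\equiv 1\pmod{p^{d}+1}$ work. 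The identity $|S(a,0)|^{2}=q\,|\ker L_a|$ then gives $|S(a,0)|\in\{p^{m},p^{m+d}\}$ as claimed.

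The \emph{sign} determination, however, is only sketched. In your first route you reduce to the square class of the discriminant $\Delta_a$ and then assert that a basis adapted to $\F_{p^{2d}}$ expresses $\Delta_a$ via $N_{\F_q/\F_{p^{2d}}}(a)$ and a universal constant; this is plausible but not carried out, and getting that constant right (it supplies the global factor $(-1)^{m/d}$) is precisely the delicate part. Your second route---expanding $S(a,0)$ over the multiplicative characters of order dividing $p^{d}+1$ and invoking the semiprimitive Gauss-sum evaluation together with Davenport--Hasse---is in fact Coulter's own method in~\cite{C88}; it does yield the sign cleanly, but you have only named the ingredients, not assembled them. As a proof \emph{proposal} this is fine, and you correctly identify where the real work lies; as a proof it is incomplete at the sign step.
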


\begin{Lem}[\cite{C98}, Theorem 1]\label{lem5}
 Let $q$ be odd and suppose $f(X)=a^{p^{\alpha}}X^{p^{2\alpha}}+aX$
is a permutation polynomial over $F_{q}.$ Let $x_{0}$ be the unique solution of the equation
$f(X)=-b^{p^{\alpha}}\  (b \neq 0).$ The evaluation of $S(a,b)$ partitions into the following two cases:
\begin{enumerate}
\item  If $e/d$ is odd,  then
$$
S(a,b)= \left\{
   \begin{array}{ll}
   (-1)^{e-1}\sqrt{q}\eta(-a)\overline{\chi}_{1}(ax_{0}^{p^{\alpha}+1}), & \textrm{if }  p \equiv 1 \mod4, \\
 (-1)^{e-1}i^{3e}\sqrt{q}\eta(-a)\overline{\chi}_{1}(ax_{0}^{p^{\alpha}+1}), & \textrm{if }  p \equiv 3 \mod4.
\end{array}
\right.
  $$

\item If $e/d$ is even, then $e=2m$, $a^{\frac{q-1}{p^{d}+1}} \neq (-1)^{\frac{m}{d}}$ and

$$ S(a,b)=(-1)^{\frac{m}{d}}p^{m}\overline{\chi}_{1}(ax_{0}^{p^{\alpha}+1}).$$
\end{enumerate}
\end{Lem}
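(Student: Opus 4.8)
\emph{Sketch of a proof.} The plan is to interpret $S(a,b)$ as the additive–character sum of a nondegenerate quadratic form twisted by a linear term, to remove the linear term by completing the square, and then to invoke the known evaluation of the untwisted sum.

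First I would regard $\mathbb{F}_q$ as an $e$-dimensional $\mathbb{F}_p$-vector space and put $Q(x)=\tr(ax^{p^{\alpha}+1})$. Because $\lambda^{p^{\alpha}+1}=\lambda^{2}$ for $\lambda\in\mathbb{F}_p$ and the map
$$
B(x,y):=Q(x+y)-Q(x)-Q(y)=\tr\!\bigl(a(x^{p^{\alpha}}y+xy^{p^{\alpha}})\bigr)=\tr\!\bigl(x\,(ay^{p^{\alpha}}+a^{p^{-\alpha}}y^{p^{-\alpha}})\bigr)
$$
is $\mathbb{F}_p$-bilinear, $Q$ is a quadratic form, and its radical $\{y:B(x,y)=0\ \text{for all }x\}$ equals $\{y:a^{p^{\alpha}}y^{p^{2\alpha}}+ay=0\}=\ker f$. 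The hypothesis that $f$ is a permutation polynomial says exactly that this radical is trivial, i.e. that $Q$ is nondegenerate; this is the structural fact that makes everything work, and for $e/d$ even it is equivalent to the arithmetic condition $a^{(q-1)/(p^{d}+1)}\neq(-1)^{m/d}$, which is why case (ii) carries that hypothesis.

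Since $Q$ is nondegenerate, the linear form $x\mapsto\tr(bx)$ coincides with $B(\cdot,x_0)$ for a unique $x_0\in\mathbb{F}_q$; writing out $B(x,x_0)=\tr(bx)$ for all $x$ and taking $p^{\alpha}$-th powers identifies $x_0$ (up to the sign fixed by the normalization) with the unique root of $f(X)=-b^{p^{\alpha}}$ of the statement. Completing the square via $Q(x+x_0)=Q(x)+B(x,x_0)+Q(x_0)$ and using $\tr(c)=\tr(c^{p^{\alpha}})$ repeatedly gives
$$
S(a,b)=\overline{\chi}_1\!\bigl(-\tr(ax_0^{p^{\alpha}+1})\bigr)\,S(a,0),
$$
so it remains to evaluate $S(a,0)=\sum_{x\in\mathbb{F}_q}\chi(ax^{p^{\alpha}+1})$ and to match the resulting phase against the stated normalization of $\overline{\chi}_1$. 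If $e/d$ is even, Lemma \ref{lem4} together with the permutation hypothesis forces $S(a,0)=(-1)^{m/d}p^{m}$, which gives case (ii). If $e/d$ is odd, Lemma \ref{lem-2.1} gives $\gcd(p^{\alpha}+1,q-1)=2$, so $x\mapsto x^{p^{\alpha}+1}$ and $x\mapsto x^{2}$ have the same value distribution on $\mathbb{F}_q^{*}$; hence $S(a,0)=\sum_{x\in\mathbb{F}_q}\chi(ax^{2})=\eta(a)\,G(\eta,\chi_1)$ by Lemma \ref{lem3}, and inserting the value of $G(\eta,\chi_1)$ from Lemma \ref{lem1} and splitting according to whether $p\equiv1$ or $p\equiv3\pmod{4}$ produces the two subcases of (i).

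The main obstacle is the careful bookkeeping of signs and fourth roots of unity in the odd case: one must track the factor $\sqrt{-1}^{(p-1)^{2}e/4}$ coming from $G(\eta,\chi_1)$, decide between $\eta(a)$ and $\eta(-a)$ (which differ by $\eta(-1)=(-1)^{(q-1)/2}$), and reconcile both with the exponent $i^{3e}$ in the statement through the identity $i^{3e}\eta(-a)=i^{e}\eta(a)$, verified according to the parity of $e$ and the class of $p$ modulo $4$; one must likewise check that the sign in $\overline{\chi}_1(-\tr(ax_0^{p^{\alpha}+1}))$ agrees with the normalization of the canonical character used in the source, since the direction of the translation and the choice $f(X)=-b^{p^{\alpha}}$ versus $f(X)=b^{p^{\alpha}}$ must be kept consistent throughout. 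By comparison, the even case is essentially routine once the permutation condition is matched to Lemma \ref{lem4}.
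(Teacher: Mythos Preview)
The paper does not prove this lemma; it is quoted without proof from Coulter~\cite{C98}, so there is nothing in the present paper to compare your argument against. Your sketch is nonetheless correct and is essentially Coulter's original strategy: interpret $\tr(ax^{p^{\alpha}+1})$ as an $\mathbb{F}_p$-quadratic form whose radical is $\ker f$, use the permutation hypothesis to get nondegeneracy, complete the square to factor $S(a,b)=\overline{\chi}_1(\pm\tr(ax_0^{p^{\alpha}+1}))\,S(a,0)$, and then read off $S(a,0)$ from Lemma~\ref{lem4} in the even case and from the quadratic Gauss sum (Lemmas~\ref{lem1} and~\ref{lem3}) in the odd case. The sign and root-of-unity bookkeeping you single out is indeed the only delicate point; note in particular that the paper's own application of this lemma in the proof of Lemma~\ref{lem8} effectively uses the phase $\zeta_p^{-\tr(ax_0^{p^{\alpha}+1})}$, consistent with what your completion-of-the-square produces, so the apparent sign in the quoted statement should be read with that convention in mind.
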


\begin{Lem}[\cite{C98}, Theorem 2]\label{lem6}
 Let $q$ be odd and suppose $f(X)=a^{p^{\alpha}}X^{p^{2\alpha}}+aX$
is not a permutation polynomial over $F_{q}.$ Then for $b\neq 0$ we have $S(a,b)=0$ unless the
equation $f(X)=-b^{p^{\alpha}}$ is solvable. If the equation is solvable, with some solution $x_{0}$
say, then $S(a,b)=-(-1)^{\frac{m}{d}}p^{m+d}\overline{\chi}_{1}(ax_{0}^{p^{\alpha}+1}).$
\end{Lem}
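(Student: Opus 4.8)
The plan is to follow the reduction of $S(a,b)$ to the already-evaluated sum $S(a,0)$ of Lemma~\ref{lem4}, using the linearized companion of $f$. Write $q=p^{e}$ with $e=2m$, put $d=\gcd(\alpha,e)$, and for $y\in\F_{q}$ set
$$L(y)=a^{p^{e-\alpha}}y^{p^{e-\alpha}}+ay^{p^{\alpha}}.$$
Raising to the $p^{\alpha}$-th power gives $L(y)^{p^{\alpha}}=a^{p^{\alpha}}y^{p^{2\alpha}}+ay=f(y)$, so $L$ is $\F_{p}$-linear, $\ker L=\ker f$, and $f$ is a permutation polynomial iff $L$ is. Two facts about $L$ drive everything. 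First, $L$ is self-adjoint for the nondegenerate trace form $(u,v)\mapsto\tr(uv)$: the adjoint of $w\mapsto cw^{p^{k}}$ is $w\mapsto c^{p^{-k}}w^{p^{-k}}$, and applying this to the two summands of $L$ returns $L$. Second, a Frobenius twist of its first summand shows $\tr\!\big(yL(y)\big)=2\,\tr\!\big(ay^{p^{\alpha}+1}\big)$ for every $y$; since $p$ is odd, $y\in\ker L$ forces $\tr(ay^{p^{\alpha}+1})=0$.

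Next I would compute $|S(a,b)|^{2}$. Expanding the square and substituting $x\mapsto v+z$, the difference $(v+z)^{p^{\alpha}+1}-v^{p^{\alpha}+1}$ equals $v^{p^{\alpha}}z+vz^{p^{\alpha}}+z^{p^{\alpha}+1}$; moving a Frobenius onto $v$ turns the $v$-linear part into $\tr(vL(z))$, so summing over $v$ annihilates every $z$ with $L(z)\neq0$ and leaves
$$|S(a,b)|^{2}=q\sum_{z\in\ker L}\chi_{1}\!\big(az^{p^{\alpha}+1}+bz\big)=q\sum_{z\in\ker L}\zeta_{p}^{\tr(bz)},$$
the last equality by the trace-vanishing above. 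This is $q\,|\ker L|$ if $b\in(\ker L)^{\perp}$ and $0$ otherwise. Since $f$ is not a permutation polynomial, $\ker f\neq\{0\}$; its nonzero elements satisfy $y^{p^{2\alpha}-1}=-a^{1-p^{\alpha}}$, hence form a coset of the group of order $\gcd(p^{2\alpha}-1,p^{e}-1)=p^{\gcd(2\alpha,e)}-1=p^{2d}-1$, the equality $\gcd(2\alpha,e)=2d$ being a short consequence of $e/d$ being even (cf.\ Lemma~\ref{lem-2.1}). Thus $|\ker L|=p^{2d}$ and $|S(a,b)|\in\{0,\ p^{m+d}\}$.

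Now I would pin down the nonvanishing case. Self-adjointness gives $(\ker L)^{\perp}=\mathrm{Im}\,L$, and $L(X)=-b$ is equivalent, on raising to the $p^{\alpha}$-th power, to $f(X)=-b^{p^{\alpha}}$; hence $S(a,b)\neq0$ exactly when $f(X)=-b^{p^{\alpha}}$ is solvable, which is the first assertion. In that case fix a solution $x_{0}$, so $L(x_{0})=-b$. Substituting $x\mapsto x+x_{0}$ in \eqref{eq-s} and expanding, the terms linear in $x$ collect to $\tr\!\big(x(L(x_{0})+b)\big)=0$, whence
$$S(a,b)=\zeta_{p}^{\tr(ax_{0}^{p^{\alpha}+1}+bx_{0})}\,S(a,0).$$
Feeding $L(x_{0})=-b$ into $\tr(x_{0}L(x_{0}))=2\tr(ax_{0}^{p^{\alpha}+1})$ gives $\tr(bx_{0})=-2\,\tr(ax_{0}^{p^{\alpha}+1})$, so the phase collapses to a single value of $\overline{\chi}_{1}$ at $ax_{0}^{p^{\alpha}+1}$, and the same bilinear computation shows this value is independent of the chosen solution $x_{0}$. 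Finally, $f$ not being a permutation polynomial forces $a^{(q-1)/(p^{d}+1)}=(-1)^{m/d}$ (the negation of the condition in Lemma~\ref{lem5}), so Lemma~\ref{lem4} yields $S(a,0)=-(-1)^{m/d}p^{m+d}$ for both parities of $m/d$; substituting gives $S(a,b)=-(-1)^{m/d}p^{m+d}\,\overline{\chi}_{1}(ax_{0}^{p^{\alpha}+1})$.

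The conceptual content is concentrated in one place: converting the Fourier vanishing criterion $b\in(\ker L)^{\perp}$ into the solvability of $f(X)=-b^{p^{\alpha}}$, which needs both the self-adjointness of $L$ and the dimension count $|\ker L|=p^{2d}$ (the latter resting on $e/d$ being even). The remaining ingredients --- the $|S(a,b)|^{2}$ identity and the reduction $S(a,b)=(\text{phase})\cdot S(a,0)$ --- are routine manipulations with additive characters and Frobenius twists, the only real care being the bookkeeping of the additive-character normalization so that the phase $\zeta_{p}^{\tr(ax_{0}^{p^{\alpha}+1}+bx_{0})}$ appears in exactly the form $\overline{\chi}_{1}(ax_{0}^{p^{\alpha}+1})$ used in the statement.
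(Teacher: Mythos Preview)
The paper does not prove Lemma~\ref{lem6} at all: it is quoted verbatim from Coulter \cite{C98}, Theorem~2, and used as a black box. So there is no ``paper's own proof'' to compare against.

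Your argument is correct and is, in fact, essentially Coulter's original proof: compute $|S(a,b)|^{2}$ by the standard $x\mapsto v+z$ diagonalisation to see that $S(a,b)$ vanishes unless $b\in(\ker L)^{\perp}$, identify $(\ker L)^{\perp}=\mathrm{Im}\,L$ via self-adjointness of $L$ to convert this into solvability of $f(X)=-b^{p^{\alpha}}$, and then shift $x\mapsto x+x_{0}$ to reduce $S(a,b)$ to a phase times $S(a,0)$, which is read off from Lemma~\ref{lem4}. The kernel-size computation $|\ker L|=p^{2d}$ via $\gcd(2\alpha,e)=2d$ is right (it uses that $e/d$ is even so that $\alpha/d$ is odd). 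One small remark: your phase comes out as $\zeta_{p}^{-\tr(ax_{0}^{p^{\alpha}+1})}$, and indeed if you trace how the paper \emph{uses} Lemma~\ref{lem5} in the proof of Lemma~\ref{lem8} you see that $\overline{\chi}_{1}(ax_{0}^{p^{\alpha}+1})$ there is treated as $\zeta_{p}^{-\tr(ax_{0}^{p^{\alpha}+1})}$; so the notation in the statement is a little loose, and your caveat about character normalisation is exactly on point.
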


\begin{Lem}\label{lem7}
Let the symbols be the same as before.
\begin{enumerate}
\item If $a=0$, then
$
A(0)=\left\{\begin{array}{ll}
                                                                          -(p-1)p^{m+d}, & \textrm{if\ }  m/d\equiv 0 \mod2, \\
                                                                          -(p-1)p^{m}, & \textrm{if\ } m/d\equiv 1 \mod2.
                                                                        \end{array}
                                                                        \right.
$
\item If $a\neq0$, then
$
A(a)=\left\{\begin{array}{ll}
                                                                          p^{m+d}, & \textrm{if\ } m/d\equiv 0 \mod2, \\
                                                                          p^{m}, & \textrm{if\ }   m/d\equiv 1 \mod2.
                                                                        \end{array}
                                                                        \right.
$
\end{enumerate}
\end{Lem}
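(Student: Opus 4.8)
The plan is to reduce $A(a)$ to a sum of Weil sums $S(y,0)$ of the form \eqref{eq-s} and then apply Lemma~\ref{lem4}. First I would rewrite the inner sum in \eqref{eq-a}: since $y\in\F_p$, the $\F_p$-linearity of the trace gives $\zeta_p^{y\tr(x^{p^{\alpha}+1})}=\zeta_p^{\tr(yx^{p^{\alpha}+1})}=\chi_1(yx^{p^{\alpha}+1})$, whence
$$\sum_{x\in\F_q}\zeta_p^{y\tr(x^{p^{\alpha}+1})}=\sum_{x\in\F_q}\chi_1\!\left(yx^{p^{\alpha}+1}\right)=S(y,0)$$
for every $y\in\F_p^{*}$, so that $A(a)=\sum_{y\in\F_p^{*}}\zeta_p^{-ay}\,S(y,0)$. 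The whole problem now rests on evaluating $S(y,0)$ for $y\in\F_p^{*}$.

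The key point is that $S(y,0)$ does not vary with $y\in\F_p^{*}$, because $y^{(q-1)/(p^{d}+1)}=1$ for all such $y$. Indeed, recall that $d\mid m$ (this is forced by the standing assumption that $e/d$ is even with $e=2m$), so $p^{2d}-1\mid q-1$ and
$$\frac{q-1}{p^{d}+1}=\frac{(q-1)(p^{d}-1)}{p^{2d}-1}=(p^{d}-1)\cdot\frac{q-1}{p^{2d}-1},$$
which is divisible by $p-1$ since $p-1\mid p^{d}-1$; as $y^{p-1}=1$ on $\F_p^{*}$, this gives $y^{(q-1)/(p^{d}+1)}=1$. Feeding this into Lemma~\ref{lem4} yields $S(y,0)=-p^{m+d}$ when $m/d$ is even and $S(y,0)=-p^{m}$ when $m/d$ is odd (here one uses $1\neq-1$ in $\F_p$, which holds since $p$ is odd); call this common value $s$.

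Finally, $A(a)=s\sum_{y\in\F_p^{*}}\zeta_p^{-ay}$. For $a=0$ the character sum equals $p-1$, so $A(0)=(p-1)s$; for $a\neq0$, orthogonality of the additive characters of $\F_p$ gives $\sum_{y\in\F_p^{*}}\zeta_p^{-ay}=-1$, so $A(a)=-s$. Substituting $s=-p^{m+d}$ (resp.\ $s=-p^{m}$) according to the parity of $m/d$ then reproduces precisely the four asserted values. I expect the only genuine subtlety to be correctly identifying which branch of Lemma~\ref{lem4} is in force — which is exactly the role played by the computation $y^{(q-1)/(p^{d}+1)}=1$ — while the remaining steps are routine character-sum manipulations.
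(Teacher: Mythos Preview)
Your proposal is correct and follows exactly the same approach as the paper's proof: identify the inner sum as $S(y,0)$, observe that $y^{(q-1)/(p^{d}+1)}=1$ for $y\in\F_p^{*}$, and then read off the value of $S(y,0)$ from Lemma~\ref{lem4}. You simply supply more detail than the paper (the divisibility argument for $y^{(q-1)/(p^{d}+1)}=1$ and the final character-sum evaluation), whereas the paper compresses all of this into a single sentence.
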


\begin{proof}
Notice that $\sum_{x\in \mathbb{F}_{q}}\zeta_{p}^{y\tr(x^{p^{\alpha}+1})}=S(y,0)$.
 For $y\in \mathbb{F}_{p}^{*}$,  we know $y^{\frac{q-1}{p^{d}+1}} =1.$ Then the results can be obtained directly from Lemma \ref{lem4}.
\end{proof}

\begin{Lem}\label{lem8}
Let the symbols be the same as before.
If $m/d\equiv 1\mod2,$ then $X^{p^{2\alpha}}+ X=-b^{p^{\alpha}} $ has a solution $\gamma$ in $\mathbb{F}_{q}$ and
\begin{enumerate}
\item
$$
B(0,0)=\left\{
   \begin{array}{ll}
   -p^{m}(p-1)^{2}, & \textrm{if }\tr(\gamma^{p^{\alpha}+1})=0, \\
   p^{m}(p-1), \quad & \textrm{if }\tr(\gamma^{p^{\alpha}+1}) \neq 0;
\end{array}
  \right.
$$
\item for $c\neq0$, we have
$$
B(0,c)=\left\{\begin{array}{ll}
                                                                           (p-1)p^{m}, & \textrm{if } \tr(\gamma^{p^{\alpha}+1})=0, \\
                                                                           -p^{m}, & \textrm{if } \tr(\gamma^{p^{\alpha}+1}) \neq 0;
                                                                         \end{array}
                                                                         \right.
$$
\item for $a\neq0,$  we have
$$
B(a,0)=\left\{\begin{array}{ll}
                                                                          (p-1)p^{m}, & \textrm{if } \tr(\gamma^{p^{\alpha}+1})=0, \\
                                                                           -p^{m+1}\overline{\eta}(-a\tr(\gamma^{p^{\alpha}+1}))-p^{m}, & \textrm{if } \tr(\gamma^{p^{\alpha}+1}) \neq 0;
                                                                         \end{array}
                                                                         \right.
$$
\item for $ac\neq0,$ we have
$$
B(a,c)=\left\{\begin{array}{ll}
                                                                          -p^{m}, & \textrm{if } \tr(\gamma^{p^{\alpha}+1})=0, \\
                                                                          -p^{m}, & \textrm{if } \tr(\gamma^{p^{\alpha}+1})=c^{2}/(4a), \\
                                                                           -p^{m+1}\overline{\eta}(c^{2}-4a\tr(\gamma^{p^{\alpha}+1}))-p^{m}, & \textrm{otherwise}.
                                                                         \end{array}
                                                                         \right.
$$
\end{enumerate}
\end{Lem}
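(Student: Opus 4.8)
The plan is to evaluate the innermost sum in \eqref{eq-b0} with Lemma~\ref{lem5}, substitute, and then finish with two orthogonality sums over $\mathbb{F}_{p}^{*}$ together with one one‑variable quadratic Gauss sum. Write $T:=\tr(\gamma^{p^{\alpha}+1})$. Since $m/d\equiv 1\pmod 2$ we have $e=2m$ with $e/d$ even, and for $y\in\mathbb{F}_{p}^{*}$ the polynomial attached to $S(y,bz)$ of \eqref{eq-s} is $f_{y}(X)=y^{p^{\alpha}}X^{p^{2\alpha}}+yX=y\bigl(X^{p^{2\alpha}}+X\bigr)$, because $y^{p^{\alpha}}=y$. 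From $p^{d}+1\mid p^{m}+1$ and $p^{m}-1\equiv 0\pmod{p-1}$ we get $(q-1)/(p^{d}+1)\equiv 0\pmod{p-1}$, hence $y^{(q-1)/(p^{d}+1)}=1\neq(-1)^{m/d}$, so every $f_{y}$ is a permutation polynomial of $\mathbb{F}_{q}$. In particular $X^{p^{2\alpha}}+X=-b^{p^{\alpha}}$ has a unique root $\gamma\in\mathbb{F}_{q}$, which is the first assertion. Since $X\mapsto X^{p^{2\alpha}}+X$ is $\mathbb{F}_{p}$‑linear and $z/y\in\mathbb{F}_{p}$, the unique root of $f_{y}(X)=-(bz)^{p^{\alpha}}$ is $x_{0}=(z/y)\gamma$, and then $y x_{0}^{p^{\alpha}+1}=(z^{2}/y)\gamma^{p^{\alpha}+1}$ because $(z/y)^{p^{\alpha}}=z/y$. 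Hence Lemma~\ref{lem5}(2) with $(-1)^{m/d}=-1$ gives, for all $y,z\in\mathbb{F}_{p}^{*}$,
\[
S(y,bz)=-p^{m}\,\zeta_{p}^{-Tz^{2}/y}.
\]

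Substituting into \eqref{eq-b0} yields
\[
B(a,c)=-p^{m}\sum_{y\in\mathbb{F}_{p}^{*}}\zeta_{p}^{-ay}\sum_{z\in\mathbb{F}_{p}^{*}}\zeta_{p}^{-Tz^{2}/y-cz}.
\]
If $T=0$, the inner sum is $\sum_{z\in\mathbb{F}_{p}^{*}}\zeta_{p}^{-cz}$, equal to $p-1$ for $c=0$ and $-1$ otherwise, and likewise $\sum_{y\in\mathbb{F}_{p}^{*}}\zeta_{p}^{-ay}$ equals $p-1$ for $a=0$ and $-1$ otherwise; multiplying by $-p^{m}$ gives $-p^{m}(p-1)^{2}$, $p^{m}(p-1)$, $p^{m}(p-1)$, $-p^{m}$ in the cases $(a,c)=(0,0)$, $a=0\neq c$, $c=0\neq a$, $ac\neq 0$, i.e.\ the $T=0$ lines of (1)--(4).

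If $T\neq 0$, adjoin $z=0$ and apply the quadratic Gauss sum evaluation over $\mathbb{F}_{p}$ (Lemma~\ref{lem3} with $a_{2}=-T/y$, $a_{1}=-c$, $a_{0}=0$), using $\overline{\eta}(y^{-1})=\overline{\eta}(y)$:
\[
\sum_{z\in\mathbb{F}_{p}^{*}}\zeta_{p}^{-Tz^{2}/y-cz}=\overline{\eta}(-1)\,\overline{\eta}(T)\,\overline{\eta}(y)\,\overline{G}\,\zeta_{p}^{c^{2}y/(4T)}-1 .
\]
Inserting this and separating the two $y$‑sums,
\[
B(a,c)=-p^{m}\overline{\eta}(-1)\,\overline{\eta}(T)\,\overline{G}\sum_{y\in\mathbb{F}_{p}^{*}}\overline{\eta}(y)\,\zeta_{p}^{\left(c^{2}/(4T)-a\right)y}\;+\;p^{m}\sum_{y\in\mathbb{F}_{p}^{*}}\zeta_{p}^{-ay}.
\]
Here $\sum_{y\in\mathbb{F}_{p}^{*}}\overline{\eta}(y)\zeta_{p}^{sy}$ is $0$ if $s=0$ and $\overline{\eta}(s)\overline{G}$ if $s\neq 0$, with $s=c^{2}/(4T)-a$; one checks $s=0$ holds exactly for $(a,c)=(0,0)$ and, in case (4), for $T=c^{2}/(4a)$, whereas in case (3) we have $c=0\neq a$, so $s=-a\neq 0$. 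Combining with $\sum_{y}\zeta_{p}^{-ay}\in\{p-1,-1\}$ and using $\overline{G}^{2}=\overline{\eta}(-1)p$ (Lemma~\ref{lem1}), the products of Legendre symbols collapse --- e.g.\ $\overline{\eta}(-1)^{2}\,\overline{\eta}(T)\,\overline{\eta}\!\left(c^{2}/(4T)-a\right)=\overline{\eta}(c^{2}-4aT)$, since $\tfrac14$ is a square --- and one reads off $p^{m}(p-1)$ in (1), $-p^{m}$ in (2), $-p^{m+1}\overline{\eta}(-aT)-p^{m}$ in (3), and $-p^{m}$ or $-p^{m+1}\overline{\eta}(c^{2}-4aT)-p^{m}$ in (4) according as $T=c^{2}/(4a)$ or not, matching the four displayed formulas with $T=\tr(\gamma^{p^{\alpha}+1})$.

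The substance of the argument is the first displayed identity for $S(y,bz)$; after that, the only care needed is the bookkeeping in the $T\neq 0$ case, where one must track the three sign sources $(-1)^{m/d}=-1$, $\overline{G}^{2}=\overline{\eta}(-1)p$ and $\overline{\eta}(y^{-1})=\overline{\eta}(y)$, and keep straight which subcase ($T=0$, or $T\neq0$ with $s=0$, or $T\neq0$ with $s\neq0$) each pair $(a,c)$ falls into. I expect that to be the only realistic source of slips; everything else is routine.
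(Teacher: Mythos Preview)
Your proof is correct and follows essentially the same route as the paper: reduce the inner sum to $S(y,bz)=-p^{m}\zeta_{p}^{-Tz^{2}/y}$ via Lemma~\ref{lem5}, dispose of the case $T=0$ by orthogonality, and for $T\neq 0$ complete the $z$-sum to a quadratic Gauss sum using Lemma~\ref{lem3} before evaluating the remaining $y$-sum with $\overline{G}^{2}=\overline{\eta}(-1)p$. The only differences are organizational---you justify the permutation property of $X^{p^{2\alpha}}+X$ directly from the exponent condition $y^{(q-1)/(p^{d}+1)}=1\neq(-1)^{m/d}$ rather than citing \cite{CW84}, and you treat all four cases with a single master formula in $s=c^{2}/(4T)-a$ instead of the paper's separate case-by-case manipulations.
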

\begin{proof}
If $m/d\equiv 1 \mod2,$ by Theorem 4.1 in \cite{CW84}, we get the equation $X^{p^{2\alpha}}+X=0$ has no solution in $\mathbb{F}_{q}^{*}.$
 Then $X^{p^{2\alpha}}+ X$ is a permutation polynomial over $\mathbb{F}_{q}$ and $X^{p^{2\alpha}}+ X=-b^{p^{\alpha}} \textrm{has a unique solution }\gamma\ \text{in }\mathbb{F}_{q}.$ Thus $y^{-1}z\gamma$ is the unique solution in $\mathbb{F}_{q}$ of $y^{p^{\alpha}}X^{p^{2\alpha}}+ yX=-(bz)^{p^{\alpha}}.$ By Lemma \ref{lem5}, $S(y,bz)=-p^{m}\overline{\chi}_{1}\left(y(y^{-1}z\gamma)^{p^{\alpha}+1}\right).$ Therefore,
\begin{align*}
 B(a,c)&=-p^{m}\sum_{y\in \mathbb{F}_{p}^{*}}\sum_{z\in \mathbb{F}_{p}^{*}}\zeta_{p}^{-ay-cz}\overline{\chi}_{1}\left(y(y^{-1}z\gamma)^{p^{\alpha}+1}\right)\\
&=-p^{m}\sum_{y\in \mathbb{F}_{p}^{*}}\sum_{z\in \mathbb{F}_{p}^{*}}\zeta_{p}^{-ay-cz}\zeta_{p}^{-\frac{z^{2}\tr(\gamma^{p^{\alpha}+1})}{y}}.
\end{align*}
 If $\tr(\gamma^{p^{\alpha}+1})=0,$ the corresponding results follow from the orthogonal property of additive characters easily. \\
If  $\tr(\gamma^{p^{\alpha}+1})\neq0,$ we have
$$B(0,0)=-p^{m}\sum_{z\in \mathbb{F}_{p}^{*}}\sum_{y\in \mathbb{F}_{p}^{*}}\zeta_{p}^{-\frac{z^{2}\tr(\gamma^{p^{\alpha}+1})}{y}}=-p^{m}\sum_{z\in \mathbb{F}_{p}^{*}}\sum_{y\in \mathbb{F}_{p}^{*}}\zeta_{p}^{y}=(p-1)p^{m}.$$
$$B(0,c)=-p^{m}\sum_{z\in \mathbb{F}_{p}^{*}}\sum_{y\in \mathbb{F}_{p}^{*}}\zeta_{p}^{-cz}\zeta_{p}^{-\frac{z^{2}\tr(\gamma^{p^{\alpha}+1})}{y}}=-p^{m}\sum_{z\in \mathbb{F}_{p}^{*}}\zeta_{p}^{-cz}\sum_{y\in \mathbb{F}_{p}^{*}}\zeta_{p}^{y}=-p^{m}.$$
By Lemma \ref{lem3}, we get
\begin{align*}
 B(a,0)&=-p^{m}\sum_{y\in \mathbb{F}_{p}^{*}}\zeta_{p}^{-ay}\sum_{z\in \mathbb{F}_{p}^{*}}\zeta_{p}^{-\frac{z^{2}\tr(\gamma^{p^{\alpha}+1})}{y}}\\
&=-p^{m}\sum_{y\in \mathbb{F}_{p}^{*}}\zeta_{p}^{-ay}\sum_{z\in \mathbb{F}_{p}^{*}}\overline{\chi}_{1}\left(-\frac{\tr(\gamma^{p^{\alpha}+1})}{y}z^{2}\right)\\
&=-p^{m}\sum_{y\in \mathbb{F}_{p}^{*}}\zeta_{p}^{-ay}\sum_{z\in \mathbb{F}_{p}}\overline{\chi}_{1}\left(-\frac{\tr(\gamma^{p^{\alpha}+1})}{y}z^{2}\right)+p^{m}\sum_{y\in \mathbb{F}_{p}^{*}}\zeta_{p}^{-ay}\\
&=-p^{m}\sum_{y\in \mathbb{F}_{p}^{*}}\zeta_{p}^{-ay}\overline{\chi}_{1}(0)\overline{\eta}\left(-\frac{\tr(\gamma^{p^{\alpha}+1})}{y}\right)\overline{G}-p^{m}\\
&=-p^{m}\overline{\eta}(a\tr(\gamma^{p^{\alpha}+1}))\overline{G}\sum_{y\in \mathbb{F}_{p}^{*}}\zeta_{p}^{-ay}\overline{\eta}(-ay)-p^{m}\\
&=-p^{m}\overline{\eta}(a\tr(\gamma^{p^{\alpha}+1}))\overline{G}^{2}-p^{m}\\
&=-p^{m+1}\overline{\eta}(-a\tr(\gamma^{p^{\alpha}+1}))-p^{m}.
\end{align*}
Also by Lemma \ref{lem3}, we have
\begin{align}\label{eq-b}
B(a,c)&=-p^{m}\sum_{y\in \mathbb{F}_{p}^{*}}\zeta_{p}^{-ay-cz}\sum_{z\in \mathbb{F}_{p}^{*}}\zeta_{p}^{-\frac{z^{2}\tr(\gamma^{p^{\alpha}+1})}{y}} \nonumber\\
&=-p^{m}\sum_{y\in \mathbb{F}_{p}^{*}}\zeta_{p}^{-ay}\sum_{z\in \mathbb{F}_{p}}\overline{\chi}_{1}\left(-\frac{\tr(\gamma^{p^{\alpha}+1})}{y}z^{2}-cz\right)+p^{m}\sum_{y\in \mathbb{F}_{p}^{*}}\zeta_{p}^{-ay}\nonumber\\
&=-p^{m}\sum_{y\in \mathbb{F}_{p}^{*}}\zeta_{p}^{-ay}\overline{\chi}_{1}\left(\frac{yc^{2}}{4\tr(\gamma^{p^{\alpha}+1})}\right)
\overline{\eta}\left(-\frac{\tr(\gamma^{p^{\alpha}+1})}{y}\right)\overline{G}-p^{m}\nonumber \\
&=-p^{m}\overline{\eta}\left(-\tr(\gamma^{p^{\alpha}+1})\right)\overline{G}\sum_{y\in \mathbb{F}_{p}^{*}}\zeta_{p}^{\frac{c^{2}-4a\tr(\gamma^{p^{\alpha}+1})}{4\tr(\gamma^{p^{\alpha}+1})}y}\overline{\eta}(y)-p^{m}.
\end{align}
If $\tr(\gamma^{p^{\alpha}+1})=c^{2}/(4a)$, by \eqref{eq-b}, we have $B(a,c)=-p^{m}.$  Otherwise, we obtain
\begin{align*}
B(a,c)&=-p^{m}\overline{\eta}\left(4a\tr(\gamma^{p^{\alpha}+1})-c^{2}\right)\overline{G}\sum_{y\in \mathbb{F}_{p}^{*}}\zeta_{p}^{\frac{c^{2}-4a\tr\left(\gamma^{p^{\alpha}+1}\right)}
{4\tr(\gamma^{p^{\alpha}+1})}y}\overline{\eta}\left(\frac{c^{2}-4a\tr(\gamma^{p^{\alpha}+1})}
{4\tr(\gamma^{p^{\alpha}+1})}y\right)-p^{m}\\
&=-p^{m}\overline{\eta}\left(-\left(c^{2}-4a\tr(\gamma^{p^{\alpha}+1})\right)\right)\overline{G}^{2}-p^{m}\\
&=-p^{m+1}\overline{\eta}\left(c^{2}-4a\tr(\gamma^{p^{\alpha}+1})\right)-p^{m}.
\end{align*}
 The proof of this lemma is completed.
\end{proof}

\begin{Lem}\label{lem9}
Let the symbols be the same as  Lemma \ref{lem8}. Let $m/d \equiv 0\mod2.$ If $X^{p^{2\alpha}}+ X=-b^{p^{\alpha}} \text{has no solution in } \mathbb{F}_{q}, $ then
$$B(a,c)=0.$$
Suppose $X^{p^{2\alpha}}+ X=-b^{p^{\alpha}}$ has a solution $\gamma$ in $\mathbb{F}_{q}$, we have
\begin{enumerate}
\item
$$
B(0,0)=\left\{
   \begin{array}{ll}
   -p^{m+d}(p-1)^{2}, &  \textrm{if } \tr(\gamma^{p^{\alpha}+1})=0, \\
   p^{m+d}(p-1), & \textrm{if } \tr(\gamma^{p^{\alpha}+1})\neq0.
\end{array}
  \right.
$$

\item if $c\neq0$, then
$$
B(0,c)=\left\{\begin{array}{ll}
   (p-1)p^{m+d}, & \textrm{if }  \tr(\gamma^{p^{\alpha}+1})=0, \\
   -p^{m+d}, & \textrm{if }  \tr(\gamma^{p^{\alpha}+1})\neq0.
   \end{array}
   \right.
$$
\item if $a\neq0$, then
$$
B(a,0)=\left\{\begin{array}{ll}
   (p-1)p^{m+d}, & \textrm{if }  \tr(\gamma^{p^{\alpha}+1})=0, \\
   -p^{m+d+1}\overline{\eta}\left(-a\tr(\gamma^{p^{\alpha}+1})\right)-p^{m+d}, & \textrm{if }  \tr(\gamma^{p^{\alpha}+1})\neq0, \\
   \end{array}
   \right.
$$
\item if $ac\neq0,$ then
$$
B(a,c)=\left\{\begin{array}{ll}
   -p^{m+d}, & \textrm{if }  \tr(\gamma^{p^{\alpha}+1})=0, \\
   -p^{m+d}, & \textrm{if } \tr(\gamma^{p^{\alpha}+1})=c^{2}/(4a),\\
   -p^{m+d+1}\overline{\eta}\left(c^{2}-4a\tr(\gamma^{p^{\alpha}+1})\right)-p^{m+d}, & \textrm{otherwise}.
   \end{array}
   \right.
$$
\end{enumerate}
\end{Lem}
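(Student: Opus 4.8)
The plan is to follow the proof of Lemma \ref{lem8} almost verbatim; the only structural difference is that in the regime $m/d\equiv 0\bmod 2$ the polynomial $f(X)=y^{p^{\alpha}}X^{p^{2\alpha}}+yX=y\bigl(X^{p^{2\alpha}}+X\bigr)$ (for $y\in\F_p^{*}$) is no longer a permutation polynomial of $\F_q$, so each inner sum $S(y,bz)$ must be evaluated with Lemma \ref{lem6} instead of Lemma \ref{lem5}. Since $(-1)^{m/d}=+1$ here, Lemma \ref{lem6} contributes the factor $p^{m+d}$ where Lemma \ref{lem5} would have contributed $p^{m}$, and this is the sole change that propagates to the final answer.

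First I would verify the non-permutation claim: because $m/d$ is even, Theorem 4.1 of \cite{CW84} --- used in the opposite direction in the proof of Lemma \ref{lem8} --- gives that $X^{p^{2\alpha}}+X=0$ has a nonzero solution in $\F_q$, so $X\mapsto X^{p^{2\alpha}}+X$ is an $\F_p$-linear map with nontrivial kernel and $f$ is not a permutation polynomial. By $\F_p$-linearity, for $y,z\in\F_p^{*}$ the equation $f(X)=-(bz)^{p^{\alpha}}$, i.e. $X^{p^{2\alpha}}+X=-y^{-1}zb^{p^{\alpha}}$, is solvable if and only if $X^{p^{2\alpha}}+X=-b^{p^{\alpha}}$ is, and if $\gamma$ denotes any solution of the latter then $y^{-1}z\gamma$ is a solution of the former (as in Lemma \ref{lem6}, $\tr(\gamma^{p^{\alpha}+1})\bmod p$ is independent of the chosen solution). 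Consequently, if $-b^{p^{\alpha}}$ lies outside the image of $X\mapsto X^{p^{2\alpha}}+X$ then $S(y,bz)=0$ for all $y,z\in\F_p^{*}$ by Lemma \ref{lem6}, whence $B(a,c)=0$; this settles the first assertion.

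In the remaining case I would feed $x_{0}=y^{-1}z\gamma$ into Lemma \ref{lem6} to obtain, exactly as in the proof of Lemma \ref{lem8} but with $p^{m}$ replaced by $p^{m+d}$,
$$S(y,bz)=-(-1)^{m/d}p^{m+d}\,\overline{\chi}_{1}\!\bigl(y(y^{-1}z\gamma)^{p^{\alpha}+1}\bigr)=-p^{m+d}\,\zeta_{p}^{-z^{2}\tr(\gamma^{p^{\alpha}+1})/y},$$
where I use that $y,z\in\F_p$ force $y^{-p^{\alpha}}=y^{-1}$ and $z^{p^{\alpha}+1}=z^{2}$. Substituting this into the definition of $B(a,c)$ yields
$$B(a,c)=-p^{m+d}\sum_{y\in\F_p^{*}}\sum_{z\in\F_p^{*}}\zeta_{p}^{-ay-cz}\,\zeta_{p}^{-z^{2}\tr(\gamma^{p^{\alpha}+1})/y},$$
which is precisely the double character sum analysed in the proof of Lemma \ref{lem8}, with leading constant $p^{m}$ changed to $p^{m+d}$. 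Hence the four cases follow by the identical manipulations used there: orthogonality of additive characters when $\tr(\gamma^{p^{\alpha}+1})=0$, and completing the square together with Lemma \ref{lem3} and $\overline{G}^{2}=\overline{\eta}(-1)p$ when $\tr(\gamma^{p^{\alpha}+1})\neq 0$, producing the stated values of $B(0,0)$, $B(0,c)$, $B(a,0)$ and $B(a,c)$ with an extra factor $p^{d}$ throughout.

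The only genuinely new ingredient is the first step --- extracting from \cite{CW84} that $m/d$ even forces $X^{p^{2\alpha}}+X$ to have nontrivial kernel on $\F_q$, together with the rescaling observation that reduces solvability of $f(X)=-(bz)^{p^{\alpha}}$ to that of $X^{p^{2\alpha}}+X=-b^{p^{\alpha}}$. I expect this to be the only place needing care; everything past it is the Lemma \ref{lem8} computation verbatim, modulo the uniform substitution $p^{m}\mapsto p^{m+d}$.
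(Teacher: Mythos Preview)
Your proposal is correct and follows essentially the same approach as the paper: invoke Theorem~4.1 of \cite{CW84} to see that $X^{p^{2\alpha}}+X$ is not a permutation polynomial when $m/d$ is even, apply Lemma~\ref{lem6} in place of Lemma~\ref{lem5}, and then rerun the computation of Lemma~\ref{lem8} with $p^{m}$ replaced by $p^{m+d}$. The paper's own proof simply says ``the proof is similar to that of Lemma~\ref{lem8}'' and omits the details; your rescaling remark reducing solvability of $y^{p^{\alpha}}X^{p^{2\alpha}}+yX=-(bz)^{p^{\alpha}}$ to that of $X^{p^{2\alpha}}+X=-b^{p^{\alpha}}$, and your observation that $\tr(\gamma^{p^{\alpha}+1})$ is well defined modulo $p$ regardless of which solution $\gamma$ is chosen, fill in exactly the points the paper leaves implicit.
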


\begin{proof}
 If $m/d \equiv 0\mod2,$ by Theorem 4.1 in \cite{CW84}, we know that $X^{p^{2\alpha}}+X$ is not a permutation polynomial.
Then using Lemma \ref{lem6}, the proof is similar to that of Lemma \ref{lem8}. We omit the details.
\end{proof}

\begin{Lem}\label{lem10}
Set $f(X)= X^{p^{2\alpha}}+X$ and
$$ S=\{b\in \mathbb{F}_{q}: f(x)=-b^{p^{\alpha}} \text{is solvable in}\ \mathbb{F}_{q} \}.$$
If $m/d$ is even, then $|S|=p^{e-2d}$.
\end{Lem}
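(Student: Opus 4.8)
The plan is to view $f(X)=X^{p^{2\alpha}}+X$ as an $\F_p$-linear endomorphism of $\F_q$ and to read off $|S|$ from the rank--nullity theorem together with one gcd computation.

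First I would note that $\iota\colon b\mapsto -b^{p^{\alpha}}$ is a bijection of $\F_q$ (it is negation composed with the field automorphism $x\mapsto x^{p^{\alpha}}$); hence $b\in S$ iff $\iota(b)\in f(\F_q)$, so that $|S|=|f(\F_q)|$. Since $f(x+y)=f(x)+f(y)$ and $f(cx)=cf(x)$ for $c\in\F_p$ (because $c^{p^{2\alpha}}=c$), the map $f$ is $\F_p$-linear and $|f(\F_q)|=q/|\ker f|$. So everything reduces to showing $|\ker f|=p^{2d}$.

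Next I would describe $\ker f$. Apart from $0$, an element $x\in\F_q^{*}$ lies in $\ker f$ exactly when $x^{p^{2\alpha}-1}=-1$; in the cyclic group $\F_q^{*}$ of order $q-1$ this equation has either no solution or exactly $\gcd(p^{2\alpha}-1,q-1)$ of them, the latter happening precisely when $\gcd(p^{2\alpha}-1,q-1)\mid (q-1)/2$. Two arithmetic facts, both driven by the hypothesis that $m/d$ is even, close the argument. Writing $\alpha=d\alpha_1$, $e=de_1$ with $\gcd(\alpha_1,e_1)=1$, evenness of $m/d$ makes $e_1=2m/d$ divisible by $4$, so $\alpha_1$ is odd; comparing $2$-adic valuations then yields $\gcd(2\alpha_1,e_1)=2$, i.e. $\gcd(2\alpha,e)=2d$, and therefore $\gcd(p^{2\alpha}-1,q-1)=p^{\gcd(2\alpha,e)}-1=p^{2d}-1$. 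Also $2d\mid e$, so $(p^{e}-1)/(p^{2d}-1)=\sum_{j=0}^{m/d-1}p^{2dj}$ is a sum of $m/d$ odd integers, hence even; thus $p^{2d}-1\mid (q-1)/2$ and $x^{p^{2\alpha}-1}=-1$ has exactly $p^{2d}-1$ solutions in $\F_q^{*}$. Hence $|\ker f|=(p^{2d}-1)+1=p^{2d}$ and $|S|=|f(\F_q)|=p^{e}/p^{2d}=p^{e-2d}$.

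I expect the gcd bookkeeping to be the only real obstacle: one must compute $\gcd(2\alpha,e)$ exactly (not merely bound it) and confirm the solvability condition $\gcd(p^{2\alpha}-1,q-1)\mid (q-1)/2$, both of which depend on the parity of $m/d$; the remaining steps are routine linear algebra over $\F_p$. As a consistency check, the outcome forces $f(\F_q)\neq\F_q$, in line with Theorem 4.1 of \cite{CW84}, which already tells us that $f$ is not a permutation polynomial when $m/d$ is even; one could take that as the starting point, but the count above is self-contained.
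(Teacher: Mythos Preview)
Your argument is correct and structurally identical to the paper's: both show $|S|=|f(\F_q)|$ via the bijection $b\mapsto -b^{p^{\alpha}}$ and then use that the fibers of $f$ all have size $|\ker f|=p^{2d}$. The only difference is that the paper obtains $|\ker f|=p^{2d}$ by citing Theorem~4.1 of \cite{CW84}, whereas you compute it directly via the gcd calculation $\gcd(2\alpha,e)=2d$ and the parity check on $(q-1)/(p^{2d}-1)$; your version is thus self-contained but otherwise follows the same route.
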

\begin{proof}
Note that both $e/d$ and $m/d$ are even. By Theorem 4.1 in \cite{CW84}, it is easy to see that $f(X)=0 $
has $p^{2d}$ solutions in $\mathbb{F}_{q}.$ So does the equation $f(X)=-b^{p^{\alpha}}$ with $ b\in S.$ For
$ b_{1},b_{2} \in \mathbb{F}_{q}$ and $b_{1}\neq b_{2}, $ it is not hard to know that the two equations $ f(X)=-b_{1}^{p^{\alpha}} $
and $f(X)=-b_{2}^{p^{\alpha}} $ have no common solutions. Furthermore, for each $a \in \mathbb{F}_{q},$ there must exist
$b\in \mathbb{F}_{q} $ such that $f(a)=-b^{p^{\alpha}}$, since $X^{p^{\alpha}}$ is a permutation polynomial over $\mathbb{F}_{q}.$
So the desired result can be concluded with the above discussions.
\end{proof}
\subsection{Proofs of theorems in Section A}

For $a\in \mathbb{F}_{p},$ recall that we set
$$
 D_{a}=\{x\in \mathbb{F}_{q}^{*}: \tr(x^{p^{\alpha}+1})=a\}.
$$
Let
$$
n_{a}=\left\{\begin{array}{ll}
                                                                          |D_{a}\cup \{0\}|, & \textrm{if\ }  a=0,  \\
                                                                          |D_{a}|, & \textrm{if\ }  a\neq 0.
                                                                        \end{array}
                                                                        \right.
$$
From the definition of $n_a$, we know
\begin{align*}
n_{a}& =\frac{1}{p}\sum_{x \in F_{q}}\left(\sum_{y \in F_{p}}\zeta_{p}^{y\tr(x^{p^{\alpha}+1})-ay}\right)\\
&= p^{e-1}+\frac{1}{p}\sum_{y \in F_{p}^{\ast}}\zeta_{p}^{-ay}\sum_{x \in F_{q}}\zeta_{p}^{y\tr(x^{p^{\alpha}+1})}.
\end{align*}
Therefore,
\begin{equation} \label{eq-3.1}
n_{a}=p^{e-1}+p^{-1}A(a),
\end{equation}
where $A(a)$ is defined by \eqref{eq-a}.

For $a$, $c\in \mathbb{F}_{p}$ and $b\in \mathbb{F}_{q}^{*}$, define
$$N_{b}(a,c)=\{x\in \mathbb{F}_{q}: \tr(x^{p^{\alpha}+1})=a\textrm{\ and\ } \tr(bx)=c\}.$$
Let $\wt(c_b)$ denote the Hamming weight of the  codeword
 $
 \mathbf{c}_b
 $  $(b\in \mathbb{F}_{q}^{*})$
 of the code $\C_{D_a}.$ It is not difficult to see that
\begin{equation} \label{eq-3.2}
\wt(c_b)=n_a-|N_{b}(a,c)|.
\end{equation}.

From definition, for $b\in \mathbb{F}_{q}^{*},$ and $a$, $c\in \mathbb{F}_{p}$ we have
\begin{align}\label{eq-3.3}
|N_{b}(a,c)| &= p^{-2}\sum_{x\in \mathbb{F}_{q}}\left(\sum_{y \in F_{p}}\zeta_{p}^{y\tr(x^{p^{\alpha}+1})-ay}\right)
\left(\sum_{z \in F_{p}}\zeta_{p}^{z\tr(bx)-cz}\right)  \nonumber \\
 &=p^{-2}\sum_{x\in \mathbb{F}_{q}}\left(1+\sum_{y \in F_{p}^{\ast}}\zeta_{p}^{y\tr(x^{p^{\alpha}+1})-ay}\right)
\left(1+\sum_{z \in F_{p}^{\ast}}\zeta_{p}^{z\tr(bx)-cz}\right)\nonumber \\
&= p^{e-2} + p^{-2}\sum_{y \in F_{p}^{\ast}}\sum_{x\in \mathbb{F}_{q}}\zeta_{p}^{y\tr(x^{p^{\alpha}+1})-ay}
+ p^{-2}\sum_{z \in F_{p}^{\ast}}\sum_{x\in \mathbb{F}_{q}}\zeta_{p}^{z\tr(bx)-cz} \nonumber \\
&\quad +
p^{-2}\sum_{y \in F_{p}^{\ast}}\sum_{z \in F_{p}^{\ast}}\sum_{x\in \mathbb{F}_{q}}\zeta_{p}^{\tr(yx^{p^{\alpha}+1}+bzx)-ay-cz} \nonumber\\
&= p^{e-2} + p^{-2}A(a) +
p^{-2}B(a,c)
\end{align}
where $A(a)$ and $B(a,c)$ are defined by \eqref{eq-a} and \eqref{eq-b0}, respectively.

Our task in the sequel  is to calculate $n_{a}$, $|N_{b}(a,c)|$ and give the proofs of the main results.

\subsubsection{The first case}

 $a=0$  and $m/d\equiv 1 \mod2.$

 Let $\gamma$ be the unique solution of
equation $X^{p^{2\alpha}}+X=-b^{p^{\alpha}}.$ By \eqref{eq-3.3}, Lemmas \ref{lem7} and \ref{lem8}, we have the following two lemmas.

\begin{Lem}\label{lem11}
Let $b\in \mathbb{F}_{q}^{*},$ then
$$
|N_{b}(0,0)|=\left\{\begin{array}{ll}
                                                                          p^{e-2}-(p-1)p^{m-1}, & \textrm{if\ } \tr(\gamma^{p^{\alpha}+1})=0, \\
                                                                          p^{e-2}, & \textrm{if\ } \tr(\gamma^{p^{\alpha}+1})\neq 0.
                                                                        \end{array}
                                                                        \right.
$$
\end{Lem}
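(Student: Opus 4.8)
The plan is to compute $|N_{b}(0,0)|$ directly from the identity \eqref{eq-3.3}, which already expresses it as $p^{e-2}+p^{-2}A(0)+p^{-2}B(0,0)$. Since we are in the case $a=0$ and $m/d\equiv 1\bmod 2$, both $A(0)$ and $B(0,0)$ have been evaluated in the preceding lemmas, so the argument is essentially a substitution followed by arithmetic simplification.

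First I would invoke Lemma \ref{lem7}, part (1), with $m/d\equiv 1\bmod 2$, to get $A(0)=-(p-1)p^{m}$. Next, since $b\in\mathbb{F}_{q}^{*}$ and $m/d\equiv 1\bmod2$, Lemma \ref{lem8} applies: the polynomial $X^{p^{2\alpha}}+X$ is a permutation polynomial over $\mathbb{F}_{q}$, so the equation $X^{p^{2\alpha}}+X=-b^{p^{\alpha}}$ has a unique solution $\gamma$, and part (1) of that lemma gives
$$
B(0,0)=\begin{cases} -p^{m}(p-1)^{2}, & \textrm{if }\tr(\gamma^{p^{\alpha}+1})=0,\\[1mm] p^{m}(p-1), & \textrm{if }\tr(\gamma^{p^{\alpha}+1})\neq 0.\end{cases}
$$
Substituting into $|N_{b}(0,0)|=p^{e-2}+p^{-2}A(0)+p^{-2}B(0,0)$, in the case $\tr(\gamma^{p^{\alpha}+1})=0$ I obtain $p^{e-2}+p^{-2}\bigl(-(p-1)p^{m}\bigr)+p^{-2}\bigl(-(p-1)^{2}p^{m}\bigr)=p^{e-2}-p^{m-2}(p-1)\bigl(1+(p-1)\bigr)=p^{e-2}-(p-1)p^{m-1}$, and in the case $\tr(\gamma^{p^{\alpha}+1})\neq0$ I obtain $p^{e-2}+p^{-2}\bigl(-(p-1)p^{m}\bigr)+p^{-2}\bigl((p-1)p^{m}\bigr)=p^{e-2}$, which is exactly the claimed formula.

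There is essentially no obstacle here: the content has been front-loaded into Lemmas \ref{lem7} and \ref{lem8} and the counting identity \eqref{eq-3.3}. The only point requiring a word of care is the legitimacy of speaking of "the unique solution $\gamma$" — this relies on the fact, established in the proof of Lemma \ref{lem8} via Theorem 4.1 of \cite{CW84}, that $X^{p^{2\alpha}}+X$ has no nonzero root in $\mathbb{F}_{q}$ when $m/d$ is odd, hence is a permutation polynomial; I would simply refer back to that. The remaining work is the two routine simplifications of $p^{e-2}+p^{-2}A(0)+p^{-2}B(0,0)$ displayed above, which I would present in a couple of lines.
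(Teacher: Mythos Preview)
Your proposal is correct and follows exactly the paper's approach: the paper simply states that Lemma~\ref{lem11} follows from \eqref{eq-3.3} together with Lemmas~\ref{lem7} and~\ref{lem8}, and you have spelled out precisely that substitution and the arithmetic. If anything, your write-up is more detailed than the paper's one-line justification.
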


\begin{Lem}\label{Lem12}
For $b\in \mathbb{F}_{q}^{*}$ and  $c\in \mathbb{F}_{p}^{*}$,   we have
$$
|N_{b}(0,c)|=\left\{\begin{array}{ll}
                                                                          p^{e-2}, & \textrm{if\ }  \tr(\gamma^{p^{\alpha}+1})=0, \\
                                                                          p^{e-2}-p^{m-1}, & \textrm{if\ }  \tr(\gamma^{p^{\alpha}+1})\neq 0.
                                                                        \end{array}
                                                                        \right.
$$
\end{Lem}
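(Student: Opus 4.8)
The plan is to compute $|N_b(0,c)|$ directly from the formula \eqref{eq-3.3}, namely
$$
|N_{b}(0,c)| = p^{e-2} + p^{-2}A(0) + p^{-2}B(0,c),
$$
by substituting the explicit values of $A(0)$ and $B(0,c)$ already available from Lemmas \ref{lem7} and \ref{lem8}. Since we are in the first case, $a=0$ and $m/d\equiv 1\bmod 2$, so Lemma \ref{lem7}(1) gives $A(0)=-(p-1)p^{m}$. For $c\in\mathbb{F}_p^{*}$, part (2) of Lemma \ref{lem8} applies and gives $B(0,c)=(p-1)p^{m}$ when $\tr(\gamma^{p^{\alpha}+1})=0$ and $B(0,c)=-p^{m}$ when $\tr(\gamma^{p^{\alpha}+1})\neq 0$, where $\gamma$ is the unique solution of $X^{p^{2\alpha}}+X=-b^{p^{\alpha}}$ (this solution exists and is unique precisely because $m/d\equiv 1\bmod 2$ forces $X^{p^{2\alpha}}+X$ to be a permutation polynomial, as noted in the proof of Lemma \ref{lem8}).

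The two key steps are then just arithmetic. In the case $\tr(\gamma^{p^{\alpha}+1})=0$ we get
$$
|N_{b}(0,c)| = p^{e-2} + p^{-2}\bigl(-(p-1)p^{m}\bigr) + p^{-2}\bigl((p-1)p^{m}\bigr) = p^{e-2},
$$
and in the case $\tr(\gamma^{p^{\alpha}+1})\neq 0$ we get
$$
|N_{b}(0,c)| = p^{e-2} + p^{-2}\bigl(-(p-1)p^{m}\bigr) + p^{-2}\bigl(-p^{m}\bigr) = p^{e-2} - p^{m-1}\cdot p^{-1}\cdot p = p^{e-2} - p^{m-1},
$$
after simplifying $p^{-2}\bigl(-(p-1)p^m - p^m\bigr) = p^{-2}(-p\cdot p^m) = -p^{m-1}$. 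This matches the claimed formula.

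There is essentially no obstacle here: the lemma is an immediate corollary of the combination of \eqref{eq-3.3}, Lemma \ref{lem7}, and Lemma \ref{lem8}, and the only thing to be careful about is tracking the two sub-cases determined by whether $\tr(\gamma^{p^{\alpha}+1})$ vanishes and confirming that the powers of $p$ collapse correctly. I would present the proof in two or three lines, referencing \eqref{eq-3.3} and the two prior lemmas, and stating the resulting case distinction. If anything warrants a remark, it is simply that $\gamma$ is well-defined (unique) in this regime, which has already been established, so the statement is unambiguous.
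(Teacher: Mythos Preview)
Your proof is correct and follows exactly the paper's approach: the paper simply states that Lemma \ref{Lem12} (together with Lemma \ref{lem11}) follows from \eqref{eq-3.3} combined with Lemmas \ref{lem7} and \ref{lem8}, which is precisely the substitution and arithmetic you carry out.
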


Now it comes to prove Theorem \ref{thm1}.
\begin{proof}
By \eqref{eq-3.2} and Lemma \ref{lem7}, we get $n_{0}=p^{e-1}-(p-1)p^{m-1}.$ Using Lemma \ref{lem11}, we have $\wt(c_{b}) \in \{(p-1)p^{e-2}, (p-1)p^{e-2}-(p-1)p^{m-1}\}. $
Because $\wt(c_{b})\neq 0 $ for each $b \in \mathbb{F}_{q}^{*},$ the dimension of $\C_{D_{0}}$ is $e.$
Suppose
\begin{align*}
b_{1}&=(p-1)p^{e-2},\\
 b_{2}&=(p-1)p^{e-2}-(p-1)p^{m-1}.
\end{align*}
Also by Lemma \ref{lem11} and the value of $n_{0}, $
 we know that
\begin{align*}
A_{b_{1}}&=p^{e-1}-(p-1)p^{m-1}-1,\\
 A_{b_{2}}&=(p-1)(p^{e-1}-p^{m-1}).
\end{align*}
Hence,  we get the Table 1. By the above two lemma  , it is easy
to get the complete weight enumerator of $\C_{D_{0}}.$ And we complete the proof of Theorem \ref{thm1}.
\end{proof}

\subsubsection{The second case}

  $a\neq0,\ m/d\equiv 1 \mod2.$

  By \eqref{eq-3.1}, \eqref{eq-3.3} and Lemmas \ref{lem7} and \ref{lem8}, it is easy to get the
values of $n_{a}, |N_{b}(a, 0)|$ and $|N_{b}(a, c)|$.
\begin{Lem}\label{lem13}
For $a\in \mathbb{F}_{p}^{*}$, if $ m/d\equiv 1 \mod2,$ then $n_{a}=p^{e-1}+p^{m-1}.$
\end{Lem}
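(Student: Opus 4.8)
This is a bookkeeping lemma, and the plan is to read it straight off two facts already in place. For $a\neq0$ we defined $n_a=|D_a|$, and in the derivation leading to \eqref{eq-3.1} this count was reduced to the closed form $n_a=p^{e-1}+p^{-1}A(a)$, where $A(a)$ is the exponential sum of \eqref{eq-a}. Since $a\in\mathbb{F}_p^{*}$ and we are in the regime $m/d\equiv1\pmod 2$, part (2) of Lemma \ref{lem7} evaluates this sum as $A(a)=p^{m}$. Substituting gives $n_a=p^{e-1}+p^{-1}p^{m}=p^{e-1}+p^{m-1}$, which is the claim. So the proof is a one-line substitution; there is nothing to grind through at this stage.

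For completeness one may recall why Lemma \ref{lem7}(2) applies here: the inner sum $\sum_{x\in\mathbb{F}_q}\zeta_p^{y\tr(x^{p^{\alpha}+1})}$ is exactly $S(y,0)$ in the notation of \eqref{eq-s}, and every $y\in\mathbb{F}_p^{*}$ satisfies $y^{(q-1)/(p^{d}+1)}=1$ because $p-1$ divides $(q-1)/(p^{d}+1)=(p^{d}-1)\bigl(1+p^{2d}+\cdots+p^{2m-2d}\bigr)$ (here $e/d$ even forces $d\mid m$). Lemma \ref{lem4} then gives $S(y,0)=-p^{m}$ when $m/d$ is odd, and summing $\zeta_p^{-ay}$ over $y\in\mathbb{F}_p^{*}$ contributes the factor $\sum_{y\in\mathbb{F}_p^{*}}\zeta_p^{-ay}=-1$ for $a\neq0$, yielding $A(a)=p^{m}$; all of this is already packaged in Lemma \ref{lem7}.

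The step that does the work is thus entirely external to this lemma — namely the Weil-type evaluation of $S(a,0)$ (Lemma \ref{lem4}) and its specialization in Lemma \ref{lem7} — so there is no real obstacle here. The role of Lemma \ref{lem13} is simply to record $n_a=p^{e-1}+p^{m-1}$ so that it can be fed, together with the values of $|N_b(a,0)|$ and $|N_b(a,c)|$ coming from \eqref{eq-3.2}, \eqref{eq-3.3} and Lemma \ref{lem8}, into the weight and complete-weight-enumerator computations for $\C_{D_a}$ that follow.
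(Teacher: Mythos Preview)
Your proposal is correct and follows exactly the paper's own approach: the paper simply says the value of $n_a$ comes from \eqref{eq-3.1} together with Lemma~\ref{lem7}, which is precisely the substitution $n_a=p^{e-1}+p^{-1}A(a)=p^{e-1}+p^{-1}p^{m}=p^{e-1}+p^{m-1}$ you carry out. The extra paragraph justifying Lemma~\ref{lem7}(2) is a nice expansion of what the paper leaves implicit, but the argument is the same.
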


\begin{Lem}\label{lem14}
For $b\in \mathbb{F}_{q}^{*}$ and  $a\in \mathbb{F}_{p}^{*}$,  if $m/d\equiv 1 \mod2,$ then
$$
|N_{b}(a,0)|=\left\{\begin{array}{ll}
                                                                          p^{e-2}+p^{m-1}, & \textrm{if\ }  \tr(\gamma^{p^{\alpha}+1})=0, \\
                                                                          p^{e-2}-p^{m-1}\overline{\eta}(-a\tr(\gamma^{p^{\alpha}+1})), & \textrm{if\ } \tr(\gamma^{p^{\alpha}+1})\neq 0.
                                                                        \end{array}
                                                                        \right.
$$
\end{Lem}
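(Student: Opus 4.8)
The plan is to read off $|N_{b}(a,0)|$ directly from the master identity \eqref{eq-3.3}, which already expresses $|N_{b}(a,c)|$ in terms of the exponential sums $A(a)$ and $B(a,c)$, and then substitute the evaluations of those sums that were obtained in Lemmas \ref{lem7} and \ref{lem8}. Since we are in the regime $m/d\equiv 1\bmod 2$, Lemma \ref{lem8} tells us that $X^{p^{2\alpha}}+X=-b^{p^{\alpha}}$ has a (unique) solution $\gamma$ in $\F_{q}$, so the quantity $\tr(\gamma^{p^{\alpha}+1})$ appearing in the statement is well defined, and the case split on whether $\tr(\gamma^{p^{\alpha}+1})=0$ is exactly the one inherited from Lemma \ref{lem8}.

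First I would put $c=0$ in \eqref{eq-3.3}, giving $|N_{b}(a,0)|=p^{e-2}+p^{-2}A(a)+p^{-2}B(a,0)$. Next, because $a\in\F_{p}^{*}$ and $m/d\equiv 1\bmod 2$, part (2) of Lemma \ref{lem7} yields $A(a)=p^{m}$. Then I would insert part (3) of Lemma \ref{lem8}: $B(a,0)=(p-1)p^{m}$ when $\tr(\gamma^{p^{\alpha}+1})=0$, and $B(a,0)=-p^{m+1}\overline{\eta}\bigl(-a\tr(\gamma^{p^{\alpha}+1})\bigr)-p^{m}$ otherwise.

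All that remains is elementary simplification. In the first case, $p^{e-2}+p^{-2}p^{m}+p^{-2}(p-1)p^{m}=p^{e-2}+p^{m-2}+(p-1)p^{m-2}=p^{e-2}+p^{m-1}$. In the second case, $p^{e-2}+p^{-2}p^{m}+p^{-2}\bigl(-p^{m+1}\overline{\eta}(-a\tr(\gamma^{p^{\alpha}+1}))-p^{m}\bigr)=p^{e-2}+p^{m-2}-p^{m-1}\overline{\eta}(-a\tr(\gamma^{p^{\alpha}+1}))-p^{m-2}=p^{e-2}-p^{m-1}\overline{\eta}(-a\tr(\gamma^{p^{\alpha}+1}))$, which is the claimed value. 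There is no genuine obstacle here: the substantive work has already been absorbed into Lemmas \ref{lem7} and \ref{lem8}, and what is left is a substitution together with the cancellation of the two $p^{m-2}$ terms; the only point deserving a remark is that the hypothesis $m/d\equiv 1\bmod 2$ is precisely what permits invoking those two lemmas in the branches used.
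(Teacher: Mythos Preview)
Your proposal is correct and follows exactly the approach of the paper, which simply states that the values of $|N_{b}(a,0)|$ are obtained by combining \eqref{eq-3.3} with Lemmas \ref{lem7} and \ref{lem8}; you have just written out the substitution and cancellation in full.
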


\begin{Lem}\label{lem15}
For $ b\in \mathbb{F}_{q}^{*}$, $a$ and $c\in \mathbb{F}_{p}^{*}$, if  $m/d\equiv 1 \mod2$, then
$$
|N_{b}(a,c)|=\left\{\begin{array}{ll}
                                                                          p^{e-2}, & \textrm{if\ }  \tr(\gamma^{p^{\alpha}+1})=0, \\
                                                                          p^{e-2}, & \textrm{if\ } \tr(\gamma^{p^{\alpha}+1})=c^{2}/(4a), \\
                                                                          p^{e-2}-p^{m-1}\overline{\eta}\left(c^{2}-4a\tr(\gamma^{p^{\alpha}+1})\right), & \textrm{otherwise}.
                                                                        \end{array}
                                                                        \right.
$$
\end{Lem}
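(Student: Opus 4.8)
The plan is to reduce everything to the counting identity \eqref{eq-3.3} and then substitute the already-established evaluations of the exponential sums $A(a)$ and $B(a,c)$. Recall that \eqref{eq-3.3} asserts, for $b\in\F_q^{*}$ and $a,c\in\F_p$,
\begin{equation*}
|N_{b}(a,c)| = p^{e-2} + p^{-2}A(a) + p^{-2}B(a,c),
\end{equation*}
the pure-$z$ single sum having vanished because $\sum_{x\in\F_q}\chi_1(zbx)=0$ for every $z\in\F_p^{*}$ (here $b\neq 0$), by orthogonality of additive characters. So the whole lemma is an exercise in plugging in.

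First I would apply Lemma \ref{lem7}(2): since $a\in\F_p^{*}$ and $m/d\equiv 1\bmod 2$, we get $A(a)=p^{m}$. Next, since $ac\neq 0$ and $m/d\equiv 1\bmod 2$, Lemma \ref{lem8}(4) applies: letting $\gamma$ be the unique solution in $\F_q$ of $X^{p^{2\alpha}}+X=-b^{p^{\alpha}}$ (which exists and is unique under this parity condition, as established in Lemma \ref{lem8}), the value $B(a,c)$ equals $-p^{m}$ when $\tr(\gamma^{p^{\alpha}+1})\in\{0,\ c^{2}/(4a)\}$, and equals $-p^{m+1}\overline{\eta}(c^{2}-4a\tr(\gamma^{p^{\alpha}+1}))-p^{m}$ otherwise.

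Combining the two inputs: in the first two cases $p^{-2}A(a)+p^{-2}B(a,c)=p^{m-2}-p^{m-2}=0$, so $|N_{b}(a,c)|=p^{e-2}$; in the remaining case $p^{-2}A(a)+p^{-2}B(a,c)=p^{m-2}-p^{m-1}\overline{\eta}(c^{2}-4a\tr(\gamma^{p^{\alpha}+1}))-p^{m-2}=-p^{m-1}\overline{\eta}(c^{2}-4a\tr(\gamma^{p^{\alpha}+1}))$, giving $|N_{b}(a,c)|=p^{e-2}-p^{m-1}\overline{\eta}(c^{2}-4a\tr(\gamma^{p^{\alpha}+1}))$. These are precisely the three rows asserted. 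There is no real obstacle here: the one substantive ingredient, the evaluation of $B(a,c)$, is Lemma \ref{lem8}, which we are entitled to use; the rest is bookkeeping with the factor $p^{-2}$ against the powers $p^{m}$ and $p^{m+1}$ coming out of $A$ and $B$. The only point worth a second glance is that the case $\tr(\gamma^{p^{\alpha}+1})=c^{2}/(4a)$ is well posed (it is, since $a\neq 0$) and is genuinely disjoint from the generic case.
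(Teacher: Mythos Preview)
Your proof is correct and follows exactly the approach the paper takes: the paper simply states that Lemmas \ref{lem7} and \ref{lem8} together with \eqref{eq-3.3} yield the values of $|N_b(a,c)|$, and you have carried out that substitution explicitly and accurately.
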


Now we begin to prove Theorem \ref{thm2}.
\begin{proof}
In this case, using the above three lemma, as the proof of Theorem \ref{thm1} one can prove Theorem \ref{thm2} similarly. The details are omitted.
\end{proof}

\subsubsection{The third case}
  $a=0$ and $m/d\equiv 0 \mod2.$

  By \eqref{eq-3.1}, \eqref{eq-3.3}and Lemmas \ref{lem7} and \ref{lem9}, we get the following two lemmas.

\begin{Lem}\label{lem16}
Let $b\in \mathbb{F}_{q}^{*}$ and $m/d\equiv 0\mod2.$ If $\ X^{p^{2\alpha}}+ X=-b^{p^{\alpha}}$
has no solution in $\mathbb{F}_{q}$,
then
$$|N_{b}(0,0)|=p^{e-2}-(p-1)p^{m+d-2}.$$
If $X^{p^{2\alpha}}+ X=-b^{p^{\alpha}}$ has a solution $\gamma$ in $\mathbb{F}_{q},$ then
$$
|N_{b}(0,0)|=\left\{\begin{array}{ll}
                                                                          p^{e-2}-(p-1)p^{m+d-1}, & \textrm{if\ }  \tr(\gamma^{p^{\alpha}+1})=0, \\
                                                                          p^{e-2}, & \textrm{if\ } \tr(\gamma^{p^{\alpha}+1})\neq 0.
                                                                        \end{array}
                                                                        \right.
$$
\end{Lem}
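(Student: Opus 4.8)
The plan is to feed the already–established identities into the counting formula \eqref{eq-3.3} specialized to $a=c=0$. Setting $a=c=0$ in \eqref{eq-3.3} gives
\[
|N_{b}(0,0)| = p^{e-2} + p^{-2}A(0) + p^{-2}B(0,0).
\]
Since $m/d\equiv 0\bmod 2$, Lemma \ref{lem7}(1) yields $A(0) = -(p-1)p^{m+d}$, hence $p^{-2}A(0) = -(p-1)p^{m+d-2}$, and therefore
\[
|N_{b}(0,0)| = p^{e-2} - (p-1)p^{m+d-2} + p^{-2}B(0,0).
\]
It remains only to insert the value of $B(0,0)$ supplied by Lemma \ref{lem9}.

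First, if $X^{p^{2\alpha}}+X=-b^{p^{\alpha}}$ has no solution in $\mathbb{F}_{q}$, then Lemma \ref{lem9} gives $B(0,0)=0$, so $|N_{b}(0,0)| = p^{e-2} - (p-1)p^{m+d-2}$, which is the first asserted value. Next, suppose the equation has a solution $\gamma$. If $\tr(\gamma^{p^{\alpha}+1})=0$, Lemma \ref{lem9}(1) gives $B(0,0) = -p^{m+d}(p-1)^{2}$, so $p^{-2}B(0,0) = -(p-1)^{2}p^{m+d-2}$ and
\[
|N_{b}(0,0)| = p^{e-2} - (p-1)p^{m+d-2}\bigl(1 + (p-1)\bigr) = p^{e-2} - (p-1)p^{m+d-1}.
\]
If $\tr(\gamma^{p^{\alpha}+1})\neq 0$, Lemma \ref{lem9}(1) gives $B(0,0) = p^{m+d}(p-1)$, so $p^{-2}B(0,0) = (p-1)p^{m+d-2}$, which cancels the second term and leaves $|N_{b}(0,0)| = p^{e-2}$. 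This matches the stated table of values and finishes the proof.

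There is no genuine obstacle here: all the analytic work (evaluating the Weil-type sums $A(0)$ and $B(0,0)$ via Coulter's formulas) has already been carried out in Lemmas \ref{lem7} and \ref{lem9}, so the argument is a bookkeeping substitution into \eqref{eq-3.3}. The only point requiring a little care is keeping the three case distinctions — no solution, a solution with vanishing trace, a solution with nonvanishing trace — aligned between Lemma \ref{lem9} and the statement, and checking the elementary factorization $(p-1)p^{m+d-2}\cdot p = (p-1)p^{m+d-1}$ in the second case.
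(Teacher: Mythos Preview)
Your proof is correct and follows exactly the route the paper takes: the paper simply states that Lemma \ref{lem16} is obtained ``by \eqref{eq-3.1}, \eqref{eq-3.3} and Lemmas \ref{lem7} and \ref{lem9}'' without writing out the arithmetic, and your argument is precisely that substitution carried out in detail. There is nothing to add.
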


\begin{Lem}\label{lem17}
Let $b\in \mathbb{F}_{q}^{*}$, $c\in \mathbb{F}_{p}^{*}$ and  $m/d \equiv 0\mod2. $ If $X^{p^{2\alpha}}+ X=-b^{p^{\alpha}} $
has no solution in $\mathbb{F}_{q}, $
then
$$|N_{b}(0,c)|=p^{e-2}-(p-1)p^{m+d-2}.$$
  If $X^{p^{2\alpha}}+ X=-b^{p^{\alpha}}$ has a solution  $ \gamma$ in $\mathbb{F}_{q},$ then
$$
|N_{b}(0,c)|=\left\{\begin{array}{ll}
                                                                          p^{e-2}, & \textrm{if\ }  \tr(\gamma^{p^{\alpha}+1})=0, \\
                                                                          p^{e-2}-p^{m+d-1}, & \textrm{if\ } \tr(\gamma^{p^{\alpha}+1})\neq 0.
                                                                        \end{array}
                                                                        \right.
$$
\end{Lem}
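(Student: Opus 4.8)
The plan is to read $|N_{b}(0,c)|$ straight off the character-sum identity \eqref{eq-3.3}, which for $a=0$ specializes to
$$|N_{b}(0,c)| = p^{e-2} + p^{-2}A(0) + p^{-2}B(0,c),$$
and then insert the explicit evaluations of $A(0)$ and $B(0,c)$ provided by Lemmas \ref{lem7} and \ref{lem9}. Both of those lemmas are available because we are in the parity regime $m/d \equiv 0 \pmod 2$, which is exactly the hypothesis of Lemma \ref{lem17}.

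First I would recall from the $a=0$ branch of Lemma \ref{lem7} that in this parity case $A(0) = -(p-1)p^{m+d}$, a quantity that does not depend on $b$. Then I would split according to whether the equation $X^{p^{2\alpha}}+X=-b^{p^{\alpha}}$ is solvable over $\mathbb{F}_{q}$, which is precisely the dichotomy used to state Lemma \ref{lem9}. If it is not solvable, Lemma \ref{lem9} gives $B(0,c)=0$, so that
$$|N_{b}(0,c)| = p^{e-2} - p^{-2}(p-1)p^{m+d} = p^{e-2} - (p-1)p^{m+d-2},$$
which is the first asserted value.

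If instead the equation has a solution $\gamma$, I would use the two subcases of part (2) of Lemma \ref{lem9} (the case $c\neq 0$). When $\tr(\gamma^{p^{\alpha}+1})=0$ we have $B(0,c)=(p-1)p^{m+d}$, and the two $p^{-2}$-scaled contributions cancel, leaving $|N_{b}(0,c)| = p^{e-2}$. When $\tr(\gamma^{p^{\alpha}+1})\neq 0$ we have $B(0,c)=-p^{m+d}$, so
$$|N_{b}(0,c)| = p^{e-2} - (p-1)p^{m+d-2} - p^{m+d-2} = p^{e-2} - p^{m+d-1},$$
which is the remaining table entry. Since every step is a mechanical substitution into \eqref{eq-3.3}, there is no genuine obstacle; the only point deserving a moment's care is to check that the case split in the statement of Lemma \ref{lem17} matches exactly the hypotheses invoked from Lemmas \ref{lem7} and \ref{lem9}, so that the solvability/non-solvability split and the value of $\tr(\gamma^{p^{\alpha}+1})$ together exhaust all possibilities and no sub-case is overlooked.
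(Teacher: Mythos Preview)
Your proposal is correct and follows exactly the same approach as the paper: the paper simply states that Lemmas \ref{lem16} and \ref{lem17} follow from \eqref{eq-3.1}, \eqref{eq-3.3}, and Lemmas \ref{lem7} and \ref{lem9}, and your argument just spells out the substitutions and arithmetic that this entails.
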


Now we give the proof of Theorem \ref{thm3}.

\begin{proof}
By \eqref{eq-3.1} and Lemma \ref{lem7}, we get $n_{0}=p^{e-1}-(p-1)p^{m+d-1}.$ Together with Lemma \ref{lem16}, we have that $\wt(c_{b})$ has three nonzero values.
 Suppose
\begin{align*}
 b_{1}&=(p-1)p^{e-2}-(p-1)^{2}p^{m+d-2},\\
  b_{2}&=(p-1)p^{e-2},\\
   b_{3}&=(p-1)p^{e-2}-(p-1)p^{m+d-1}.
 \end{align*}
  By Lemma \ref{lem10},
$A_{b_{1}}=p^{e}-p^{e-2d}.$  By the first two Pless Power Moments(\cite{HP03}, p. 260) the frequency
$A_{b_{i}}$ of $w_{i}$ satisfies the following equations:
\begin{eqnarray}\label{eq-3.4}
\left\{\begin{array}{l}
         A_{b_1}+A_{b_2}+A_{b_3}=p^{m}-1, \\
         b_1A_{b_1}+b_2A_{b_2}+b_3A_{b_3}=p^{e-1}(p-1)n,
       \end{array}
       \right.
\end{eqnarray}
where $n=p^{e-1}-(p-1)p^{m+d-1}-1$. Solving the equations gives the weight distribution of Table 5. By Lemma \ref{lem17}, together
the definition of complete weight enumerator of codes, we can obtain the complete weight enumerator if $ \C_{D_{0}}.$ The
proof is completed.
\end{proof}

\subsubsection{The fourth case}
 $a\neq0$ and $m/d\equiv 0 \mod2.$

 By \eqref{eq-3.1}, \eqref{eq-3.3}, Lemmas \ref{lem7} and \ref{lem9}, we have the following three lemmas.
\begin{Lem}\label{lem18}
Let $a\in \mathbb{F}_{p}^{*}$ and $m/d\equiv 0 \mod2$. Then  $n_{a}=p^{e-1}+p^{m+d-1}.$
\end{Lem}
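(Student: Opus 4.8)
The plan is to compute $n_a = |D_a|$ for $a \in \mathbb{F}_p^*$ directly from the formula \eqref{eq-3.1}, which already reduces the task to evaluating the exponential sum $A(a)$ defined in \eqref{eq-a}. Since $a \neq 0$ and $m/d \equiv 0 \bmod 2$, the relevant value is supplied by the first part of Lemma \ref{lem7}: namely $A(a) = p^{m+d}$ in this regime. Substituting into $n_a = p^{e-1} + p^{-1}A(a)$ immediately gives $n_a = p^{e-1} + p^{-1}\cdot p^{m+d} = p^{e-1} + p^{m+d-1}$, which is exactly the claimed length.

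Concretely, I would first recall that, by definition of $n_a$ for $a \neq 0$ and the character-sum manipulation carried out in Section 3 just before \eqref{eq-3.1},
\[
n_a = p^{e-1} + \frac{1}{p}\sum_{y \in \mathbb{F}_p^*}\zeta_p^{-ay}\sum_{x\in\mathbb{F}_q}\zeta_p^{y\tr(x^{p^\alpha+1})} = p^{e-1} + p^{-1}A(a).
\]
Then I would invoke Lemma \ref{lem7}(2): since we are in the case $m/d \equiv 0 \bmod 2$, we have $A(a) = p^{m+d}$. The underlying mechanism, already established, is that $\sum_{x\in\mathbb{F}_q}\zeta_p^{y\tr(x^{p^\alpha+1})} = S(y,0)$, and for $y \in \mathbb{F}_p^*$ one has $y^{(q-1)/(p^d+1)} = 1$ because $p^d + 1 \mid (p^{2m}-1)/(p-1)$ forces the exponent to be a multiple of the order $p-1$ of $\mathbb{F}_p^*$; Lemma \ref{lem4} then pins down $S(y,0)$ and the sum over $y$ collapses to $p^{m+d}$.

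Therefore $n_a = p^{e-1} + p^{m+d-1}$, completing the proof. There is essentially no obstacle here: the statement is a one-line corollary of Lemma \ref{lem7} combined with \eqref{eq-3.1}, and all the genuine work (the evaluation of $A(a)$ via Coulter's Gauss sum results, Lemma \ref{lem4}) has already been done upstream. The only point requiring a word of care is making explicit why $y \in \mathbb{F}_p^*$ lands in the branch of Lemma \ref{lem4} that yields $-p^{m+d}$ for each summand, so that after multiplying by $\zeta_p^{-ay}$ and summing $\sum_{y\in\mathbb{F}_p^*}\zeta_p^{-ay} = -1$ one recovers $+p^{m+d}$; this is exactly the content of Lemma \ref{lem7}(2), so I would simply cite it.
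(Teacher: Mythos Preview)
Your proof is correct and follows exactly the paper's approach: the paper derives Lemma \ref{lem18} directly from \eqref{eq-3.1} and Lemma \ref{lem7}, which is precisely what you do. Your additional remarks unpacking why $y\in\mathbb{F}_p^*$ falls into the $-p^{m+d}$ branch of Lemma \ref{lem4} are extra detail not needed for the argument, since that is already absorbed into the statement of Lemma \ref{lem7}(2).
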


\begin{Lem}\label{lem19}
   Let $b\in \mathbb{F}_{q}^{*}$, $a\in \mathbb{F}_{p}^{*}$ and $m/d \equiv 0\mod2.$   If $\ X^{p^{2\alpha}}+ X=-b^{p^{\alpha}}$
has no solution in $ \mathbb{F}_{q},$
then
$$|N_{b}(a,0)|=p^{e-2}+p^{m+d-2}.$$
If $X^{p^{2\alpha}}+ X=-b^{p^{\alpha}}$ has a solution $\gamma$ in $\mathbb{F}_{q},$ then
$$
|N_{b}(a,0)|=\left\{\begin{array}{ll}
                                                                          p^{e-2}+p^{m+d-1}, & \textrm{if\ } \ \tr(\gamma^{p^{\alpha}+1})=0, \\
                                                                          p^{e-2}-p^{m+d-1}\eta\left(-a\tr(\gamma^{p^{\alpha}+1})\right), & \textrm{if\ } \ \tr(\gamma^{p^{\alpha}+1})\neq 0.
                                                                        \end{array}
                                                                        \right.
$$
\end{Lem}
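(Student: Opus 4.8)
The plan is to evaluate $|N_b(a,0)|$ by plugging the sum formula \eqref{eq-3.3} into the definitions, specializing to $c=0$. Recall that \eqref{eq-3.3} gives
$$|N_{b}(a,0)| = p^{e-2} + p^{-2}A(a) + p^{-2}B(a,0),$$
so the computation reduces to substituting the known values of $A(a)$ from Lemma \ref{lem7} and of $B(a,0)$ from Lemma \ref{lem9}, under the standing hypothesis $m/d\equiv 0\bmod 2$ and $a\in\F_p^*$. First I would invoke Lemma \ref{lem9}: when the equation $X^{p^{2\alpha}}+X=-b^{p^{\alpha}}$ has no solution in $\F_q$, it asserts $B(a,0)=0$; and when it has a solution $\gamma$, it gives the two-case formula for $B(a,0)$ in terms of whether $\tr(\gamma^{p^{\alpha}+1})$ vanishes.

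Next I would combine these with $A(a)=p^{m+d}$ (the $m/d\equiv 0\bmod 2$, $a\neq 0$ branch of Lemma \ref{lem7}). In the no-solution case this yields $|N_b(a,0)| = p^{e-2} + p^{-2}p^{m+d} + 0 = p^{e-2}+p^{m+d-2}$, which is the first assertion. In the solvable case, when $\tr(\gamma^{p^{\alpha}+1})=0$ we get $|N_b(a,0)| = p^{e-2} + p^{m+d-2} + p^{-2}\cdot p^{m+d}(p-1)$; collecting the last two terms, $p^{m+d-2}(1+p-1) = p^{m+d-1}$, giving $p^{e-2}+p^{m+d-1}$. When $\tr(\gamma^{p^{\alpha}+1})\neq 0$, substituting $B(a,0)=-p^{m+d+1}\overline{\eta}(-a\tr(\gamma^{p^{\alpha}+1}))-p^{m+d}$ gives $|N_b(a,0)| = p^{e-2}+p^{m+d-2} + p^{-2}(-p^{m+d+1}\overline{\eta}(-a\tr(\gamma^{p^{\alpha}+1}))-p^{m+d}) = p^{e-2} - p^{m+d-1}\overline{\eta}(-a\tr(\gamma^{p^{\alpha}+1}))$, the second branch. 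I note that the statement writes $\eta$ rather than $\overline{\eta}$, but since $\tr(\gamma^{p^{\alpha}+1})\in\F_p^*$ and $m$ is even (as $m/d$ even forces $2\mid m$ only if $d$ odd — more carefully, $e=2m$ is even and $m/d$ even), Lemma \ref{lem2} identifies $\eta$ and $\overline{\eta}$ on $\F_p^*$ in the relevant regime, so the two notations agree here; I would add a one-line remark to this effect.

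There is essentially no obstacle here: the lemma is a pure bookkeeping corollary of the already-established Lemmas \ref{lem7} and \ref{lem9} together with identity \eqref{eq-3.3}. The only point requiring a word of care is the reconciliation of $\eta$ versus $\overline\eta$ via Lemma \ref{lem2}, and checking that the exponent arithmetic $p^{-2}\cdot p^{m+d} = p^{m+d-2}$ and $p^{m+d-2}+p^{-2}(p-1)p^{m+d} = p^{m+d-1}$ is carried out correctly. Since this is the same mechanical pattern used to prove Lemmas \ref{lem11}, \ref{Lem12}, \ref{lem14}, \ref{lem16}, \ref{lem17}, I would simply write: ``By \eqref{eq-3.3}, Lemma \ref{lem7} and Lemma \ref{lem9}, the result follows by a direct computation,'' mirroring the style already adopted elsewhere in the paper, and optionally spell out the three substitutions above.
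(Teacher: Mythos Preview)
Your argument is correct and matches the paper's own approach exactly: the paper simply asserts that Lemmas \ref{lem18}--\ref{lem20} follow from \eqref{eq-3.1}, \eqref{eq-3.3}, and Lemmas \ref{lem7} and \ref{lem9}, which is precisely the substitution you carry out. One small caveat: your reconciliation of $\eta$ with $\overline{\eta}$ via Lemma \ref{lem2} is not right---since $e=2m$ is even, Lemma \ref{lem2} gives $\eta\equiv 1$ on $\F_p^*$, not $\eta=\overline{\eta}$---so the ``$\eta$'' in the statement is just a typographical slip for $\overline{\eta}$ and no justification is needed.
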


\begin{Lem}\label{lem20}
Let $a,\ c \in \mathbb{F}_{p}^{*}$  and $b\in \mathbb{F}_{q}^{*}.$ Let $m/d \equiv 0\mod2$.  If  $ X^{p^{2\alpha}}+ X=-b^{p^{\alpha}}$
has no solution in $ \mathbb{F}_{q},$
then
$$|N_{b}(a,c)|=p^{e-2}+p^{m+d-2}.$$
  If $X^{p^{2\alpha}}+ X=-b^{p^{\alpha}}$ has a solution $\gamma$ in $\mathbb{F}_{q},$ then
$$
|N_{b}(a,c)|=\left\{\begin{array}{ll}
                                                                          p^{e-2}, & \textrm{if\ }  \tr(\gamma^{p^{\alpha}+1})=0, \\
                                                                          p^{e-2}, & \textrm{if\ }  \tr(\gamma^{p^{\alpha}+1})=c^{2}/(4a), \\
                                                                          p^{e-2}-p^{m+d-1}\overline{\eta}\left(c^{2}-4a\tr(\gamma^{p^{\alpha}+1})\right), & \textrm{otherwise}.
                                                                        \end{array}
                                                                        \right.
$$
\end{Lem}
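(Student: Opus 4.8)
The plan is to prove Lemma~\ref{lem20} in exactly the same way the analogous results for the case $m/d\equiv 1\bmod 2$ were obtained, i.e.\ by reducing $|N_{b}(a,c)|$ to a combination of the exponential sums $A(a)$ and $B(a,c)$ via \eqref{eq-3.3}, and then substituting the already-computed values of these sums. Concretely, recall from \eqref{eq-3.3} that
\[
|N_{b}(a,c)|=p^{e-2}+p^{-2}A(a)+p^{-2}B(a,c).
\]
First I would invoke Lemma~\ref{lem7}(2): since $a\in\mathbb F_p^{*}$ and $m/d\equiv 0\bmod 2$, we have $A(a)=p^{m+d}$, so the first two terms contribute $p^{e-2}+p^{m+d-2}$. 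The remaining work is to plug in $B(a,c)$ from Lemma~\ref{lem9}(4), which is precisely the case $ac\neq 0$ treated there, and divide by $p^{2}$.

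The second step is the case split governed by whether $f(X)=X^{p^{2\alpha}}+X=-b^{p^{\alpha}}$ is solvable in $\mathbb F_q$. If it has no solution, Lemma~\ref{lem9} gives $B(a,c)=0$, so $|N_{b}(a,c)|=p^{e-2}+p^{m+d-2}$, which is the first assertion. If it has a solution $\gamma$, I would substitute the three-branch formula for $B(a,c)$ from Lemma~\ref{lem9}(4): when $\tr(\gamma^{p^{\alpha}+1})=0$ or $\tr(\gamma^{p^{\alpha}+1})=c^{2}/(4a)$ we get $B(a,c)=-p^{m+d}$, so $p^{-2}A(a)+p^{-2}B(a,c)=p^{m+d-2}-p^{m+d-2}=0$ and hence $|N_{b}(a,c)|=p^{e-2}$; in the remaining ``otherwise'' branch, $B(a,c)=-p^{m+d+1}\overline{\eta}(c^{2}-4a\tr(\gamma^{p^{\alpha}+1}))-p^{m+d}$, so after dividing by $p^{2}$ the $\pm p^{m+d-2}$ terms from $A(a)$ and the $-p^{m+d}$ part of $B(a,c)$ cancel and we are left with $|N_{b}(a,c)|=p^{e-2}-p^{m+d-1}\overline{\eta}(c^{2}-4a\tr(\gamma^{p^{\alpha}+1}))$. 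This matches the stated table exactly.

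The only point requiring a word of care is that the well-definedness of the branches relies on Lemma~\ref{lem10}: when $m/d$ is even the solvable set $S$ has size $p^{e-2d}$, and for each solvable $b$ the solution set of $f(X)=-b^{p^{\alpha}}$ is a coset of $\ker f$ of size $p^{2d}$, on which $\tr(\gamma^{p^{\alpha}+1})$ need not be constant; however $B(a,c)$ as computed in Lemma~\ref{lem9} is expressed through a single representative $\gamma$, and one should note (as in the proof of Lemma~\ref{lem8}) that the value of $\overline{\eta}(c^{2}-4a\tr(\gamma^{p^{\alpha}+1}))$ entering $|N_b(a,c)|$ is the one attached to the particular $\gamma$ produced by Lemma~\ref{lem6}. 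Since Lemma~\ref{lem9} has already packaged this, the present proof is purely a substitution argument and there is no real obstacle; the main (minor) subtlety is simply bookkeeping the cancellation $p^{-2}(A(a)+B(a,c))$ correctly in each of the three sub-cases. I would therefore write: ``By \eqref{eq-3.3}, Lemma~\ref{lem7} and Lemma~\ref{lem9}, the claimed values of $|N_{b}(a,c)|$ follow by a direct computation,'' and optionally display the cancellation in the ``otherwise'' case for the reader's convenience.
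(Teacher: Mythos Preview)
Your proposal is correct and follows exactly the paper's own approach: the paper simply states that Lemmas~\ref{lem18}--\ref{lem20} follow from \eqref{eq-3.1}, \eqref{eq-3.3}, Lemma~\ref{lem7} and Lemma~\ref{lem9}, which is precisely the substitution argument you carry out. Your extra paragraph on the choice of $\gamma$ is not needed for the proof (the paper does not address it either), but it does no harm.
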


Now we begin to prove Theorem \ref{Thm4}.
\begin{proof}
In this case, by the three lemmas above, we can get the results in  Theorem \ref{thm3}. And we omit the details.
\end{proof}

By the definition of $\overline{\C}_{D_{a}}$, Corollaries \ref{thm1-1}, \ref{thm2-2}, \ref{thm3-3} and \ref{thm4-4} follow directly from
 Theorems  \ref{thm1}, \ref{thm2}, \ref{thm3} and \ref{Thm4}, respectively. We omit the proofs.

\section{Concluding Remarks}
In this paper, the complete weight enumerators of two families of linear code with few weights are determined. Let $w_{\min}$ and $w_{\max}$ denote the minimum and maximum nonzero weight of a linear code, respectively.   As introduced in \cite{YD06}, any linear code over $\mathbb{F}_{p}$ can be employed to construct secret sharing schemes with interesting access structures if $
\frac{w_{\min}}{w_{\max}}>\frac{p-1}{p}$. It can be verified that  the linear code in  Theorem 2, Corollary 3 and Theorem 4 have  the property $\frac{w_{\min}}{w_{\max}}>\frac{p-1}{p}$  if $m\geq3$.

\end{document}